\SetMathAlphabet{\mathcal}{normal}{OMS}{cmsy}{m}{n}
\SetMathAlphabet{\mathcal}{bold}  {OMS}{cmsy}{b}{n}
\definecolor{zpauli}{HTML}{1F77B4}
\tikzset{
  boxed node/.style={
    rectangle,
    rounded corners,
    fill=white,
    fill opacity=1,
    inner sep=2pt
  },
  gray lines/.style={
    every path/.style={thick,draw=lightgray,opacity=0.75,line cap=round},
    every node/.style={boxed node}
  }
}
\newcommand{\Zop}[1]{%
  \fill[zpauli, opacity=1]
    ($(#1)+(-2pt,-2pt)$)
    rectangle
    ($(#1)+( 2pt, 2pt)$);
}
\definecolor{mpsgray}{RGB}{228, 228, 228}
\newcommand{\mpstensor}[2]{%
    \draw [thick, fill=mpsgray, rounded corners] 
      ({-0.75+#1}, {-0.75+#2}) 
      rectangle 
      ({0.75+#1}, {0.75+#2});%
}
\definecolor{xpauli}{HTML}{FF0000}
\definecolor{zpauli}{HTML}{1F77B4}
\newtheorem{theorem}{Theorem}
\newtheorem{proposition}{Proposition}
\newtheorem{lemma}{Lemma}[section]
\newtheorem{corollary}{Corollary}
\newtheorem{Def}{Definition}[section]
\theoremstyle{definition}
\newtheorem{Example}{Example}[section]
\newtheorem{remark}{Remark}[section]
\DeclareMathOperator{\Image}{Im}    
\DeclareMathOperator{\dist}{dist}   
\DeclareMathOperator{\weight}{wt}   
\DeclareMathOperator{\Tr}{Tr}       
\DeclareMathOperator{\Sp}{Sp}       
\DeclareMathOperator{\supp}{supp}   
\DeclareMathOperator{\coim}{coim}   
\DeclareMathOperator{\sub}{sub}     
\def\U#1{\mathrm{U}({#1})}
\def\Z#1{\mathbb{Z}_{#1}}
\def\F#1{\mathbb{F}_{#1}}
\def\SU#1{\mathrm{SU}(#1)}
\def\GL#1#2{\mathrm{GL}_{#1}(#2)}
\newcommand{\Reals}{\mathbb{R}}
\renewcommand{\vec}[1]{\mathbf{#1}}
\newcommand{\inner}[2]{\langle #1, #2 \rangle}
\newcommand{\BF}[1]{\mathsf{BF}_{#1}}
\newcommand{\ReedMuller}[2]{\mathsf{RM}({#1}, {#2})}
\newcommand{\ketbra}[2]{\ket{#1}\!\bra{#2}}
\newcommand{\tp}{\intercal}
\newcommand{\CZ}{\mathsf{CZ}}
\newcommand{\CCZ}{\mathsf{CCZ}}
\newcommand{\CX}{\mathsf{CX}}
\DeclarePairedDelimiter\abs{\lvert}{\rvert}%
\DeclarePairedDelimiter\norm{\lVert}{\rVert}%
\let\oldabs\abs
\def\abs{\@ifstar{\oldabs}{\oldabs*}}
\let\oldnorm\norm
\def\norm{\@ifstar{\oldnorm}{\oldnorm*}}
\newcommand\RedeclareMathOperator{%
  \@ifstar{\def\rmo@s{m}\rmo@redeclare}{\def\rmo@s{o}\rmo@redeclare}%
}
\newcommand\rmo@redeclare[2]{%
  \begingroup \escapechar\m@ne\xdef\@gtempa{{\string#1}}\endgroup
  \expandafter\@ifundefined\@gtempa
     {\@latex@error{\noexpand#1undefined}\@ehc}%
     \relax
  \expandafter\rmo@declmathop\rmo@s{#1}{#2}}
\newcommand\rmo@declmathop[3]{%
  \DeclareRobustCommand{#2}{\qopname\newmcodes@#1{#3}}%
}
\RedeclareMathOperator{\Re}{Re}
\RedeclareMathOperator{\Im}{Im}
\begin{document}

\title{Many-body contextuality and self-testing quantum matter via nonlocal games}

\author{Oliver Hart}
\affiliation{Department of Physics and Center for Theory of Quantum Matter, \href{https://ror.org/02ttsq026}{University of Colorado Boulder}, Boulder, Colorado 80309, USA}
\affiliation{\href{https://ror.org/0507j3z22}{Quantinuum}, Terrington House, 13--15 Hills Road, Cambridge CB2 1NL, UK}
\orcid{0000-0002-5391-7483}

\author{David T. Stephen}
\affiliation{Department of Physics and Center for Theory of Quantum Matter, \href{https://ror.org/02ttsq026}{University of Colorado Boulder}, Boulder, Colorado 80309, USA}
\affiliation{Department of Physics, \href{https://ror.org/05dxps055}{California Institute of Technology}, Pasadena, California 91125, USA}
\affiliation{\href{https://ror.org/03ssvsv78}{Quantinuum}, 303 S Technology Ct, Broomfield, CO 80021, USA}
\orcid{0000-0002-3150-0169}

\author{Evan Wickenden}
\affiliation{Department of Physics and Center for Theory of Quantum Matter, \href{https://ror.org/02ttsq026}{University of Colorado Boulder}, Boulder, Colorado 80309, USA}
\orcid{0009-0005-6218-505X}

\author{Rahul Nandkishore}
\affiliation{Department of Physics and Center for Theory of Quantum Matter, \href{https://ror.org/02ttsq026}{University of Colorado Boulder}, Boulder, Colorado 80309, USA}
\orcid{0000-0001-5703-6758}

\maketitle

\begin{abstract}
\onecolumn
\fontsize{9}{10}\selectfont
\sffamily Contextuality is arguably the fundamental property that makes quantum mechanics different from classical physics. It is responsible for quantum computational speedups in both magic-state-injection-based and measurement-based models of computation, and can be directly probed in a many-body setting by multiplayer nonlocal quantum games. Here, we discuss a family of games that can be won with certainty when performing single-site Pauli measurements on a state that is a codeword of a Calderbank-Shor-Steane (CSS) error-correcting quantum code. We show that these games require deterministic computation of a code-dependent Boolean function, and that the classical probability of success is upper bounded by a generalized notion of nonlinearity/nonquadraticity. This success probability quantifies the state's contextuality, and is computed via the function's (generalized) Walsh-Hadamard spectrum. To calculate this, we introduce an efficient, many-body-physics-inspired method that involves identifying the symmetries of an auxiliary hypergraph state. We compute the classical probability of success for several paradigmatic CSS codes and relate it to both classical statistical mechanics models and to strange correlators of symmetry-protected topological states. We also consider CSS \emph{submeasurement} games, which can only be won with certainty by sharing the appropriate codeword up to local isometries. These games therefore enable self-testing, which we illustrate explicitly for the 2D toric code. We also discuss how submeasurement games enable an extensive notion of contextuality in many-body states.
\end{abstract}
\vspace{15pt}

\twocolumn


\section{Introduction}

What makes a state intrinsically quantum? Richard Feynman famously claimed~\cite{FeynmanLectures} that the single-particle double-slit experiment contained the ``\emph{only} mystery'' and that it was ``impossible, \emph{absolutely} impossible, to explain in any classical way,'' and ``has in it the heart of quantum mechanics.'' With half a century of hindsight, we know this statement to have been incomplete, with the phenomenology of the double-slit experiment with a single quantum particle having been experimentally accessed in a completely classical setting~\cite{Couder, Bush}. A more compelling answer invokes {\it entanglement}~\cite{Bell, Horodeckis} as the \emph{sine qua non} of quantumness. However, even highly entangled states can be efficiently described using classical methods. For instance,
\begin{enumerate*}[label=(\roman*)]
    \item the Gottesman-Knill theorem \cite{GottesmanKnill} implies that entanglement is not sufficient for quantum computational speedup, 
    \item for a system obeying the eigenstate thermalization hypothesis (ETH) \cite{Deutsch, Srednicki, Rigol, MBLARCMP}, generic midspectrum states will have volume-law entanglement entropy, and yet expectation values of local measurements will match an essentially classical thermal state, and, perhaps most pertinently,
    \item measuring particles in an entangled Bell state along a \emph{single} axis leads to correlations that can be reproduced entirely classically~\cite{PreskillNotes}.
\end{enumerate*}
A modern diagnostic that is both necessary and sufficient for quantumness is \emph{contextuality}~\cite{kochenSpecker1967, Spekkens2005contextuality, Abramsky_2011, contextuality1, contextuality2}, where the outcomes of one measurement must depend on what other measurements are simultaneously made in a hidden-variable-model description.
Indeed, contextuality is the feature of quantum mechanics that is responsible for violations of Bell inequalities~\cite{BellTests}, and it has been shown that, in certain settings, quantum computational speedups can occur only when contextuality is present in both magic-state- and measurement-based models of quantum computation~\cite{contextuality1,Delfosse2015Wigner,Raussendorf2017contextualityQubits,Anders2009,Hoban2011stronger,RaussendorfContextuality,Abramsky2017,Frembs2018}. Yet, most studies to date have focused on contextuality in a few-body setting. 
As the number of particles grows, so does the possible complexity and richness of correlations in the associated quantum states~\cite{VerstraeteFourQubits}, making the quantification of contextuality challenging both theoretically and numerically. The question of how to practically quantify contextuality in experiment has also been brought into focus by the advent of early non-fault-tolerant quantum devices, which provide direct and controlled access to \emph{many-body} quantum systems, potentially opening the door to new experimental probes of many-body physics.

\emph{Multiplayer nonlocal quantum games}~\cite{DanielStringOrder, Daniel2022Exp, BBSgame, BulchandaniGames, lin2023quantum, DallaTorre, Hart1, hart2024braidingFTW} provide a powerful, operational interpretation of contextuality. {In a quantum game, players receive questions from a referee and, without communicating, respond with answers according to some pre-agreed strategy. Each question-answer pair receives a numerical score, and player cooperatively attempt to maximize this score by refining their pre-agreed strategy.} By sharing suitably entangled quantum states (``resource states'') prior to playing the game, players can use the quantum correlations to beat the performance of the optimal classical strategy. In other words, quantum games \emph{directly} harness contextuality of a resource state to achieve advantage at a certain task (i.e., winning the game), and so directly probe the essential ``quantumness'' of the state. Consequently, they have found applications in randomness generation~\cite{ColbeckThesis}, interactive proof systems~\cite{CleveXOR, Ji2016proofs, MIP*=RE},  quantum key distribution~\cite{Ekert1991cryptography,Barrett2005noSignaling,Vazirani2014fully}, and unconditional proofs of quantum advantage ~\cite{bravyi2020shallow,bravyi2020noisyShallow}. It turns out that quantum games also provide a natural lens through which to study many-body physics on quantum devices. This is thanks in part to deep connections between contextuality and the universal properties of quantum phases of matter, such as spontaneous symmetry breaking and fractional excitations~\cite{DanielStringOrder,Daniel2022Exp,BulchandaniGames,BBSgame,lin2023quantum,Hart1,hart2024braidingFTW}. These connections deepen our understanding of quantum phases of matter and enhance the robustness of quantum games to experimental noise thanks to the inherent stability of topological phases of matter \cite{Hart1}. In this manuscript, we will demonstrate how quantum games may be used to provide a precise quantification of contextuality in the setting of many-body quantum matter.

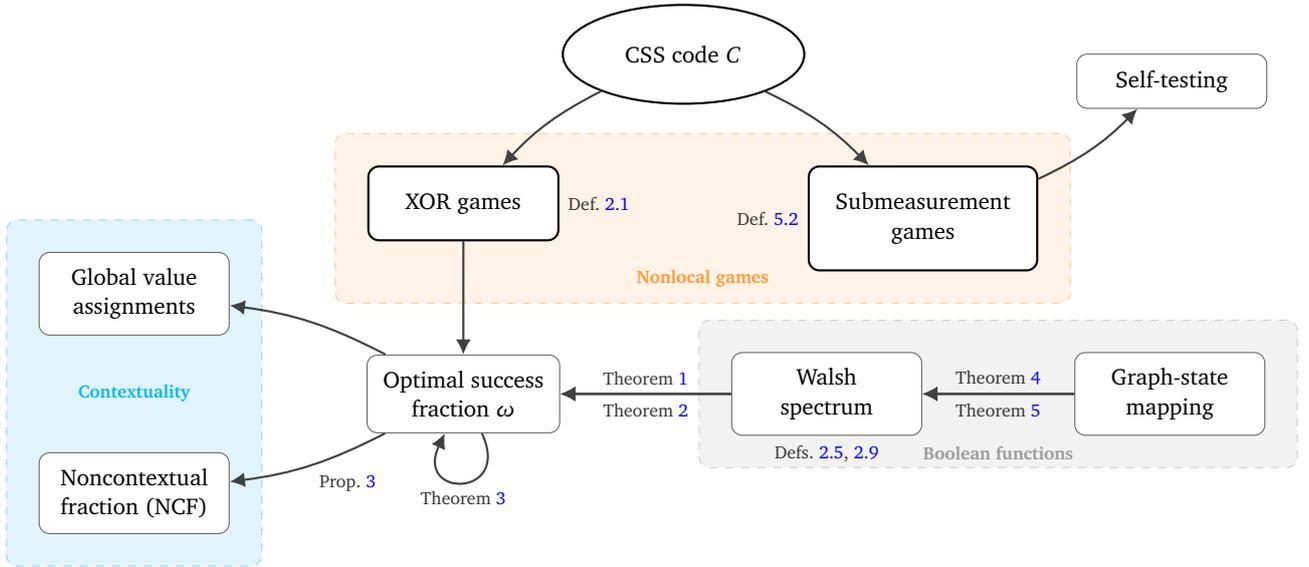
\begin{figure*}[t]
    \centering
    \begin{tikzpicture}[
        node distance = 1.0cm and 2cm, 
        base/.style = {draw, align=center, inner sep=10pt, font=\small},
        root/.style = {base, ellipse, draw=black, thick, fill=white, fill=white},
        game/.style = {base, rounded corners, fill=white, thick, minimum width=2.5cm},
        thm/.style = {base, rectangle, fill=white, dashed, minimum width=2.5cm, inner sep=6pt},
        topic/.style = {base, rectangle, rounded corners, draw=black!60, fill=white, inner sep=6pt, minimum width=2.5cm},
        arrow/.style = {-{Latex[width=2mm,length=2mm]}, thick, darkgray},
        lbl/.style = {font=\scriptsize, color=black!80}
    ]
        \node (css) [root] {CSS code $C$};

        \node (xor) [game, below left = 1cm and 0.5cm of css] {XOR games};
        \node (sub) [game, below right = 1cm and 0.5cm of css] {Submeasurement\\games};

        \node (self) [topic, above right = 0.75cm and 0.5cm of sub] {Self-testing};

        \node (omega) [topic, below = 1.5cm of xor] {Optimal success\\fraction $\omega$};

        \node (walsh) [topic, right = 2.25cm of omega] {Walsh\\spectrum};
        \node [lbl, below] at (walsh.south) {Defs.~\ref{def:Walsh-transform}, \ref{def:generalized-Walsh}};
        \node (graph) [topic, right = of walsh] {Graph-state\\mapping};        

        \node (global) [topic, above left = 0.25cm and 1.8cm of omega] {Global value\\assignments};
        \node (ncf) [topic, below left = 0.25cm and 1.8cm of omega] {Noncontextual\\fraction (NCF)};

        \begin{scope}[on background layer]
            \node [fit=(global) (ncf), 
                   fill=cyan!10, 
                   draw=cyan!40, 
                   dashed, 
                   rounded corners, 
                   inner sep=12pt, 
                   label={[text=cyan!80, font=\scriptsize\bfseries]center:Contextuality}] (ctx_box) {};
        \end{scope}

        \begin{scope}[on background layer]
            \node [fit=(xor) (sub), 
                   fill=orange!10, 
                   draw=orange!40, 
                   dashed, 
                   rounded corners, 
                   inner sep=12pt] (ctx_box) {};
            \node at (ctx_box.south) [anchor=south, yshift=3pt, font=\scriptsize\bfseries, text=orange!80] {Nonlocal games};
        \end{scope}

        \begin{scope}[on background layer]
            \node [fit=(walsh) (graph), 
                   fill=gray!10, 
                   draw=gray!40, 
                   dashed, 
                   rounded corners, 
                   inner sep=12pt] (ctx_box) {};
            \node at (ctx_box.south) [anchor=south, yshift=0pt, font=\scriptsize\bfseries, text=gray!80] {Boolean functions};
        \end{scope}

        \draw [arrow] (css) edge[bend right=10] (xor);
        \draw [arrow] (css) edge[bend left=10] (sub);

        \node [lbl, right] at (xor.east) {Def.\ \ref{def:css-game}};
        \node [lbl, left] at (sub.west) {Def.\ \ref{def:submeasurement-game}};

        \draw [arrow] (xor) -- (omega);
        \draw [arrow] (graph) -- node[lbl, above] {Theorem \ref{thm:walsh-from-symmetries}} (walsh);
        \draw [arrow] (graph) -- node[lbl, below] {Theorem \ref{thm:bell-pairs}} (walsh);

        \draw [arrow] (walsh) -- node[lbl, above] {Theorem \ref{thm:nonlinearity}} (omega);
        \draw [arrow] (walsh) -- node[lbl, below] {Theorem \ref{thm:CSS-success-fraction}} (omega);
        
        \draw [arrow] (omega) edge[bend right=10] (global);
        \draw [arrow] (omega) edge[bend left=10] node[lbl, below right, align=left] {Prop.\ \ref{prop:NCF-bound}} (ncf);

        \draw [arrow] (omega) to [out=295, in=245, looseness=5] node[below, lbl] {Theorem \ref{thm:clifford-equivalence}} (omega);
        \draw [arrow] (sub) edge[bend right=10] (self);
    \end{tikzpicture}
    \caption{Overview of the structure of the technical results of this paper. Broadly, the paper first covers XOR games. Section \ref{sec:XORgames} introduces XOR games, and connects the optimal classical success fraction to properties of Boolean functions (via theorems~\ref{thm:nonlinearity}, \ref{thm:CSS-success-fraction}, and \ref{thm:clifford-equivalence}) and to contextuality of Pauli measurements. Section~\ref{sec:walsh-from-graphs} introduces many-body-physics-inspired techniques for computing the Walsh spectrum via theorems~\ref{thm:walsh-from-symmetries} and \ref{thm:bell-pairs}, while Sec.~\ref{sec:examples} puts these tools to use by computing the success fraction of various CSS XOR games. Finally, Sec.~\ref{sec:submeasurement} introduces submeasurement games and connects them to self-testing.}
    \label{fig:paper-structure}
\end{figure*}

The basic idea behind our work is the following: Suppose there exists a task (game) that can be executed successfully (won) with optimal probability $\omega^*$ given a resource state $\rho$, and single-site Pauli measurement access thereto. Further suppose that without access to quantum resources the probability of success (optimized over all possible classical strategies) can be upper bounded by $\omega \leq \omega^*$. The difference $\omega^* - \omega$ then provides a quantification of the degree of contextuality present in the state with respect to single-site Pauli measurements. Maximizing $\omega^* - \omega$ over all possible tasks (games) then gives us a quantification of the contextuality present in the many-body state $\rho$, and observing a success fraction greater than $\omega$ serves as an experimental witness of contextuality (see also Ref.~\cite{Abramsky2017}). Quantifying contextuality in this way has two desirable features. First, it directly relates to the practical usefulness of contextuality, i.e., the ability of a state to serve as a resource for demonstrating unconditional quantum advantage. Second, it is exactly the kind of quantity that is natural to measure in quantum devices (by actually playing the game), serving as a crisp demonstration of the new ways in which modern quantum devices allow us to probe many-body physics. 

Now, the optimizations over all possible games and over all possible classical strategies may seem like a formidably difficult task. However, we demonstrate that, for certain classes of games, the optimization over classical strategies can be mapped to the evaluation of the degree of nonlinearity of a Boolean function, which can be efficiently accomplished using tools from discrete mathematics (Walsh-Hadamard transforms). Meanwhile, we also demonstrate how to accomplish the optimization over possible games for the special case of resource states that are codewords of Calderbank-Shor-Steane (CSS) codes. To do this, we introduce a mapping from nonlinear Boolean functions to auxiliary hypergraph states~\cite{Rossi_2013}. This mapping makes use of the connection between nonlinear Boolean functions and hypergraphs, and grants access to powerful concepts in many-body theory. By establishing relations to classical statistical mechanics models, we efficiently compute bounds on $\omega$ and make connections to strange correlators in symmetry-protected topological states~\cite{ScaffidiGapless2017,YouStrangeCorrelator2014}. We also provide a polynomial-time algorithm for extracting maximally contextual sets of operators (and hence optimal games that maximize $\omega^* - \omega$) for CSS codewords. 

Finally, we introduce the notion of CSS \emph{submeasurement} games (see Refs.~\cite{Daniel2022Exp, Meyer2023submeasurement} for related work), which can only be won with certainty by sharing the appropriate CSS codeword (up to local isometries), i.e., they enable a so-called \emph{self-test} of the state (see, e.g., Ref.~\cite{Supic2020selftestingof} for a review). Our construction is translation invariant, in contrast to the standard approach for stabilizer/graph states~\cite{Supic2020selftestingof}. We illustrate this with a submeasurement game for the 2D toric code, and discuss how (unlike existing protocols tailored towards the toric code~\cite{Baccari2020}) the translational invariance of submeasurement games enables us to construct a natural and properly extensive quantification of contextuality.

This paper is structured as follows (see also Fig.~\ref{fig:paper-structure}).
In Sec.~\ref{sec:XORgames} we introduce CSS XOR games, which generalize and unify games based on the GHZ state~\cite{ghz1989,GHSZ1990, MerminPolynomials, Brassard2003multiparty,brassard2005recasting,brassard2005pseudotelepathy}, 1D cluster state~\cite{Daniel2022Exp}, and toric codes~\cite{BBSgame,Hart1,hart2024braidingFTW}. Interpreting the success condition as deterministic computation of a certain Boolean function $f$, we show that the success fraction of any classical strategy can be upper bounded by a quantity closely related to the Walsh-Hadamard transform of $f$. We discuss how this enables a practical quantification of contextuality in many-body states. In Sec.~\ref{sec:walsh-from-graphs} we explain how the (generalized) Walsh spectrum -- a key component of the analysis in Sec.~\ref{sec:XORgames} -- can be computed by identifying the $X$ symmetries of an auxiliary hypergraph state, and also how one may use this hypergraph-state formalism to efficiently extract maximally contextual sets of operators for a quantum strategy. In Sec.~\ref{sec:examples} we illustrate the ideas introduced in the previous sections with explicit computations for several paradigmatic states, including the GHZ state, the 1D cluster state, and the 2D toric code. In Sec.~\ref{sec:submeasurement} we introduce the idea of \emph{submeasurement games}, which can only be won given a specific resource state (up to isometries), and which thereby enable self-testing. We introduce a submeasurement game for the 2D toric code and argue that it self-tests the state. We conclude in Sec.~\ref{sec: discussion} with a discussion of some open directions. 

\textbf{Notation.} Throughout the paper, boldface symbols such as $\vec{x}$ denote \emph{row} vectors. The vector space of Boolean functions in $n$ variables is denoted by $\BF{n}\cong \F{2}^{2^n}$ while classical order-$r$ binary Reed-Muller codes with codewords of length $2^n$ are denoted $\ReedMuller{r}{n}$. If not otherwise specified, when referring to the \emph{degree} of a Boolean function, we mean the algebraic degree. Finally, while we distinguish between affine and linear Boolean functions, the word \emph{nonlinearity} (see Def.~\ref{def:nonlinearity}) refers to the distance to the nearest affine Boolean function, as is common.


\section{XOR games}
\label{sec:XORgames}

The family of games introduced in this section corresponds to a subset of so-called XOR games~\cite{CleveXOR}. The games we introduce can be won with probability one, $\omega^* = 1$, when players have access to shared quantum resources, namely codewords of CSS quantum error-correcting codes.
They generalize the GHZ game~\cite{ghz1989,GHSZ1990,MerminPolynomials,Brassard2003multiparty,brassard2005recasting,brassard2005pseudotelepathy}, the cubic Boolean function game~\cite{Daniel2022Exp} (won by the 1D cluster state), and also the toric code games from Refs.~\cite{Hart1,hart2024braidingFTW}. The definition specifically is inspired by the Bell inequalities presented in Ref.~\cite{Guhne2005bell} and the cubic Boolean function game~\cite{Daniel2022Exp}.

\begin{Def}[CSS XOR game]
\label{def:css-game}
Associate a player to every qubit of a CSS code on $N$ qubits defined by the pair of {full-rank} parity-check matrices $H = (H_X, H_Z)$. Player $i$ is handed an ordered pair of bits $(a_i, b_i) \in \F{2}^2$, with $\vec{a} \in I_X \subseteq \Image H_X^\tp$ and $\vec{b} \in I_Z \subseteq \Image H_Z^\tp$, where neither $I_X$ or $I_Z$ is equal to the empty set. The inputs are collectively denoted by the pair $I = (I_X, I_Z)$. Once the players have been handed their bits, any classical communication between players is forbidden. Player $i$ then outputs a bit $y_i(a_i, b_i)$. Collectively, the players attempt to satisfy  
\begin{equation}
    \sum_{i=1}^{N} y_i(a_i, b_i) \equiv \frac12 \sum_{i=1}^N a_i b_i \quad {(\text{mod 2})}
    \, .
    \label{eqn:CSS-game-victory}
\end{equation}
This defines $\mathsf{CSS}(H | I)$. If the inputs are unrestricted, i.e., $I = (\Image H_X^\tp, \Image H_Z^\tp)$, we refer to this game as $\mathsf{CSS}(H)$.
\end{Def}

The pair $(a_i, b_i)$ is referred to as the \emph{question} asked to player $i$ and the output $y_i$ is the player's \emph{answer}. Each pair of bitstrings $(\vec{a}, \vec{b})$, i.e., the collection of all questions, is referred to as a \emph{query}. By definition, elements of $\Image H_X^\tp $ and $\Image H_Z^\tp$ are equal to
\begin{equation}
    \label{eqn:input-bits}
    \vec{a}(\vec{x}) = \vec{x}H_X \quad  \text{and} \quad \vec{b}(\vec{z}) = \vec{z}H_Z,
\end{equation}
respectively, with multiplication over $\F{2}$, for all $\vec{x}^{\tp}$ ($\vec{z}^{\tp}$) in the domain of the parity-check matrix $H_X^{\tp}$ ($H_Z^{\tp}$). A referee computes the bits $a_i(\vec{x})$ and $b_i(\vec{z})$ from Eq.~\eqref{eqn:input-bits} and hands them to player $i$. The win condition~\eqref{eqn:CSS-game-victory} is then written $\bigoplus_i y_i = f_H(\vec{x}, \vec{z})$ for some parity function $f_H$ of the pair $(\vec{x}, \vec{z})$ with codomain $\{0, 1\}$, determined uniquely by the parity-check matrices $H = (H_X, H_Z)$ of the CSS code. The function $f_H$ is referred to as the \emph{target function}.

\begin{remark}
    Physically, each query~\eqref{eqn:input-bits} corresponds to a stabilizer of the input CSS code ($X^\vec{a}Z^{\vec{b}}$) up to a sign, and the players are tasked with outputting this sign.
\end{remark}

The query $(\vec{a}, \vec{b})$ and the parity bit are both determined by the pair $(\vec{x}, \vec{z})$. That the right-hand side of Eq.~\eqref{eqn:CSS-game-victory} is an integer follows from commutation of the stabilizer generators, i.e., $\sum_i a_i b_i \equiv \vec{x}(H_X H_Z^\tp) \vec{z}^\tp \equiv 0$ mod 2. The optimal probability of achieving Eq.~\eqref{eqn:CSS-game-victory}, when queries are drawn uniformly at random, is denoted by $\omega$ and $\omega^*$ when players employ classical and quantum strategies, respectively. To convert the victory condition~\eqref{eqn:CSS-game-victory} to a Boolean function written in algebraic normal form, one makes use of the following Lemma.

\begin{Def}[Hamming weight]
    The \emph{Hamming weight} $\weight(\vec{b})$ of a length-$n$ bitstring $\vec{b}$ is equal to the number of nonzero entries in $\vec{b}$, i.e., $\weight(\vec{b}) = \sum_{i=1}^n b_i$.
\end{Def}

\begin{lemma}[Ref.~\cite{Hoban2011}, Eq.~(4)]
    \label{lem:oplus-to-reals}
    Let $\vec{x} \in \F{2}^n$ be an $n$-bit string. The parity of $\vec{x}$, equal to $\bigoplus_{i} x_i$, is expressed using integer arithmetic by
    \begin{equation}
        \bigoplus_{i=1}^n x_i = \sum_{\vec{b} \in \F{2}^n \setminus \vec{0}} (-2)^{\weight(\vec{b})-1}
        \prod_{i=1}^n x_i^{b_i} .
        \label{eqn:oplus-to-reals}
    \end{equation}
\end{lemma}

For example, $x_1 \oplus x_2 = x_1 + x_2 - 2x_1x_2$. A proof of this Lemma and an explicit expression for the Boolean function $f_H(\vec{x}, \vec{z})$ for generic parity-check matrices are presented in appendix~\ref{app:computed-function}. By direct calculation, the function $f_H$ has algebraic degree at most three.

\begin{Example}[GHZ game~\cite{ghz1989,GHSZ1990,MerminPolynomials,Brassard2003multiparty,brassard2005recasting,brassard2005pseudotelepathy}]
\label{ex:ghz-game}
    Consider a 1D GHZ state on $N$ qubits, which will define an $N$-player game.
    The state is stabilized by, e.g.,
    \begin{equation}
        \label{eqn:GHZ-stabilizers}
        \prod_{i=1}^N X_i, \quad \{ Z_i Z_{i+1} : i \in \mathbb{N}, 1 \leq i < N \} ,
    \end{equation}
    which determine the parity-check matrices $H_X$ and $H_Z$. Respectively associate the bits $x$ and $z_i$ to the stabilizer generators in Eq.~\eqref{eqn:GHZ-stabilizers}. Then, player $i$ is handed $(a_i, b_i) = (x, z_{i-1} \oplus z_{i})$, where we define $z_0=z_N=0$. Using Lemma~\ref{lem:oplus-to-reals}, the win condition~\eqref{eqn:CSS-game-victory} reduces to
    \begin{equation}
        \bigoplus_{i=1}^N y_i = \bigoplus_{i=1}^{N-2}  x z_i z_{i+1} \oplus \bigoplus_{i=1}^{N-1} x z_i .
        \label{eqn:ghz-function}
    \end{equation}
    The standard GHZ-game win conditions result from restricting the inputs such that $x=1$, leaving the bits $\{ z_i \}$ unconstrained. For $N=3$ players, and given this restricted input set, the win condition reduces to computation of the Boolean function \texttt{or} ($\vee$):
    \begin{equation}
        \bigoplus_{i=1}^{3} y_i = z_1 \vee z_2 = z_1z_2 \oplus z_1 \oplus z_2  .
        \label{eqn:ghz-quadratic-function}
    \end{equation}
\end{Example}


\subsection{Quantum strategies}
\label{sec:quantum-strats}

In a \emph{quantum strategy}, prior to being asked questions, the players share a quantum state $\ket{\psi}$. Projective measurement of this state conditioned on their question can inform their answer. As previously mentioned, CSS XOR games possess perfect quantum strategies, $\omega^*=1$ (although they are not the only family of games known to exhibit this property, see, e.g., Refs.~\cite{Brassard1999cost,AVIS_2006} and Refs.~\cite{benedetti2025complementsampling, Benedetti2025unconditional}).
To prove statements about the performance of different quantum strategies, the following lemma about measurement of Pauli operators is useful.

\begin{lemma}
    \label{lem:submeasurement-dist}
    Let $\Lambda$ be the set of all lattice sites and let $I \subseteq \Lambda$. Suppose that the Pauli operator $P_i \in \{ I, X, Y, Z \}$ is measured on site $i$ for all $i \in \Lambda$, producing the measurement outcome $(-1)^{s_i}$. If the system is in the state $\ket{\psi}$ then
    \begin{equation}
        \Pr\left(\bigoplus_{i \in I} s_i = f\right)  =
        \frac12  \left( 1 + (-1)^{f} \braket{\psi |  \prod_{i \in I}  P_i   | \psi} \right)
        \label{eqn:submeasurement-dist}
    \end{equation}
    with $f \in \F{2}$.
\end{lemma}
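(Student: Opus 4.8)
The plan is to collapse the many-outcome joint measurement into a single two-outcome measurement of the product Pauli operator $O_I := \prod_{i \in I} P_i$ and then invoke the Born rule. First I would note that, since the $P_i$ act on distinct sites, they mutually commute, so measuring all of $\{P_i\}_{i \in \Lambda}$ is well defined and projects $\ket{\psi}$ onto simultaneous eigenstates $\ket{\vec{s}}$ labeled by the outcome string $\vec{s} = (s_i)$. On any such eigenstate $O_I$ acts with eigenvalue $\prod_{i \in I} (-1)^{s_i} = (-1)^{\bigoplus_{i \in I} s_i}$, because each nontrivial $P_i$ squares to the identity and has spectrum $\{\pm 1\}$, while any factor with $P_i = I$ forces $s_i = 0$ and contributes trivially. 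Hence the target parity $\bigoplus_{i \in I} s_i$ is precisely encoded in the sign of the eigenvalue of $O_I$.

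Next I would use that $O_I$ is Hermitian with $O_I^2 = \mathbb{I}$, so its spectrum is $\{+1, -1\}$ with orthogonal spectral projectors
\begin{equation}
    \Pi_{\pm} = \tfrac12 \left(\mathbb{I} \pm O_I\right).
\end{equation}
The event $\bigoplus_{i \in I} s_i = f$ coincides with obtaining eigenvalue $(-1)^f$, whose projector is $\Pi_{(-1)^f} = \sum_{\vec{s} :\, \bigoplus_{i \in I} s_i = f} \ketbra{\vec{s}}{\vec{s}}$. Applying the Born rule then yields $\Pr(\bigoplus_{i \in I} s_i = f) = \braket{\psi | \Pi_{(-1)^f} | \psi} = \tfrac12 (1 + (-1)^f \braket{\psi | O_I | \psi})$, which is the claimed identity.

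The only point I expect to require care is the reduction itself: one physically measures \emph{all} the $P_i$ on $\Lambda$, whereas the statement concerns only the parity over the subset $I$. The resolution is that $O_I$ commutes with every $P_i$ and its eigenvalue is a fixed function (a coarse-graining) of the full outcome string, so the marginal distribution of the parity obtained from the complete measurement agrees exactly with the two-outcome statistics of $O_I$ alone. Establishing this identification of projectors—together with the harmless treatment of sites carrying $P_i = I$—is the main (and only mildly technical) obstacle; once it is in place, the result is immediate from the spectral decomposition.
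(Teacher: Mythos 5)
Your proof is correct and in substance matches the paper's: the paper establishes this lemma as the single-constraint special case of Lemma~\ref{lem:outcome-distribution-general} (Appendix~\ref{app:outcome-dist-proof}), which likewise expands the joint measurement into the product of projectors $\pi_{s_i} = \frac12\left[ I + (-1)^{s_i} P_i \right]$, marginalizes over sites outside $I$, and resums the parity-constrained sum — precisely the projector identity $\sum_{\vec{s} :\, \bigoplus_{i \in I} s_i = f} \prod_{i \in I} \pi_{s_i} = \frac12\left( I + (-1)^f \prod_{i \in I} P_i \right)$ underlying your spectral/Born-rule argument. The only cosmetic difference is that the paper enforces the parity constraint via an auxiliary $\F{2}$ Fourier variable (needed to handle the multiple, possibly overlapping constraints of the general lemma), whereas you identify the event directly with the $(-1)^f$ eigenspace of the product operator.
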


\begin{proof}
    See the proof of Lemma~\ref{lem:outcome-distribution-general}, given in Appendix~\ref{app:outcome-dist-proof}, which is strictly more general.
\end{proof}

The distribution of $\bigoplus_{i \in I} s_i$ is determined by the expectation value of the Pauli string formed from the product of Pauli measurements restricted to the set of sites $I$. If $\prod_{i \in I}  P_i \ket{\psi} = \pm \ket{\psi}$ then $\bigoplus_{i \in I} s_i$ assumes a definite value with respect to $\ket{\psi}$ determined by the sign.


\subsubsection{Pauli strategy}

\begin{proposition}[Pauli strategy]
    \label{prop:perfect-quantum-strat}
    Let $\ket{\psi}$ be a codeword of the CSS code specified by parity-check matrices $H=(H_X$, $H_Z)$.
    Single-site Pauli measurements $P(a_i, b_i)$ in $\ket{\psi}$ enable a perfect strategy for the game $\mathsf{CSS}(H)$.
\end{proposition}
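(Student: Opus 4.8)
The plan is to exhibit the natural Pauli strategy anticipated by the remark following Definition~\ref{def:css-game}: when handed the question $(a_i,b_i)$, player $i$ measures the single-qubit Hermitian Pauli $P(a_i,b_i)$ matching $X^{a_i}Z^{b_i}$ up to phase---namely $P(0,0)=I$, $P(1,0)=X$, $P(0,1)=Z$, and $P(1,1)=Y$---obtains the outcome $(-1)^{s_i}$, and answers $y_i=s_i$. The identity driving everything is the per-site relation $X^{a_i}Z^{b_i}=(-i)^{a_i b_i}P(a_i,b_i)$, verified directly on the four inputs (the only nontrivial case being $XZ=-iY$).

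First I would assemble the global product of the measured operators. Operators on distinct sites commute, so $\prod_i X^{a_i}Z^{b_i}=X^{\vec a}Z^{\vec b}$, and collecting the per-site phases gives
\begin{equation}
\prod_{i=1}^{N}P(a_i,b_i)=i^{\sum_i a_i b_i}\,X^{\vec a}Z^{\vec b}.
\end{equation}
Since the query satisfies $\vec a\in\Image H_X^\tp$ and $\vec b\in\Image H_Z^\tp$, the CSS condition $H_XH_Z^\tp=0$ forces $\sum_i a_ib_i\equiv 0\pmod 2$, so this phase is real, $i^{\sum_i a_i b_i}=(-1)^{\frac12\sum_i a_i b_i}$.

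Next I would identify $X^{\vec a}Z^{\vec b}$ as a $+1$ stabilizer of $\ket\psi$. Writing $\vec a=\vec xH_X$ and $\vec b=\vec zH_Z$, the factor $X^{\vec a}$ is a product of $X$-type stabilizer generators and $Z^{\vec b}$ a product of $Z$-type generators; both fix the codeword and commute (again by $H_XH_Z^\tp=0$), whence $X^{\vec a}Z^{\vec b}\ket\psi=\ket\psi$. Combined with the displayed identity this yields $\braket{\psi|\prod_i P(a_i,b_i)|\psi}=(-1)^{\frac12\sum_i a_i b_i}$.

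Finally I would apply Lemma~\ref{lem:submeasurement-dist} with $I=\Lambda$ and target $f=\tfrac12\sum_i a_i b_i\bmod 2$: the measured parity equals $f$ with probability $\tfrac12\bigl(1+(-1)^f(-1)^{\frac12\sum_i a_i b_i}\bigr)=1$, which is exactly the win condition~\eqref{eqn:CSS-game-victory}. As this holds for every query, the strategy succeeds with certainty and $\omega^*=1$. The step needing genuine care is the phase bookkeeping: the factor $i^{\sum_i a_i b_i}$ generated by rewriting the $Y$-measurements in terms of $XZ$ must be reconciled with the $\tfrac12\sum_i a_i b_i$ on the right-hand side of~\eqref{eqn:CSS-game-victory}. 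It is exactly this cancellation, guaranteed by the evenness of $\sum_i a_i b_i$, that renders the game winnable, so I would carry that accounting out explicitly rather than leaving it implicit.
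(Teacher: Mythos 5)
Your proposal is correct and follows essentially the same route as the paper: the paper's proof likewise sets $P(a,b)=i^{ab}X^aZ^b$, applies Lemma~\ref{lem:submeasurement-dist} to $\prod_i P_i(a_i,b_i)$, and identifies this product with the stabilizer $X^{\vec a}Z^{\vec b}$ up to the sign $(-1)^{\frac12\sum_i a_ib_i}$. Your version merely spells out the phase bookkeeping and the evenness of $\sum_i a_ib_i$ (via $H_XH_Z^\tp=0$) that the paper leaves implicit in its remark below Eq.~\eqref{eqn:CSS-game-victory}.
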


\begin{proof}
    Let $P(a, b) = i^{ab} X^a Z^b \in \{ I, X, Y, Z \}$ and apply Lemma~\ref{lem:submeasurement-dist} to $\prod_{i\in \Lambda} P_i(a_i, b_i)$. Using Eq.~\eqref{eqn:input-bits}, this operator is equal to the stabilizer $X^\vec{a}Z^{\vec{b}}$ of $\ket{\psi}$ up to the sign $\frac12 \sum_i a_i b_i$ mod 2.
\end{proof}

For a generic quantum resource state $\ket{\psi}$, which may have weight outside of the codespace, and averaging over all inputs to the game with equal weight, the probability of winning, $p_\text{q}$, using the strategy presented in the proof of Prop.~\ref{prop:perfect-quantum-strat} is
\begin{equation}
    p_\text{q}(\rho) = \frac{1}{2}\left( 1 + \Tr[ \rho \Pi_0 ] \right)
    \, ,
    \label{eqn:fidelity-win-probability}
\end{equation}
where $\rho = \ket{\psi}\bra{\psi}$ and the projector $\Pi_0$ is onto the codespace of the CSS code. This result is analogous to estimating the fidelity via random stabilizer measurements~\cite{Flammia2011fidelity,daSilva2011practical}.

The probability of winning when the players implement the Pauli strategy from Prop.~\ref{prop:perfect-quantum-strat} is therefore directly determined by the probability that $\ket{\psi}$ belongs to the codespace. Note that Eq.~\eqref{eqn:fidelity-win-probability} continues to hold when the resource state $\rho$ is impure.


\subsubsection{MERP strategy}

Since only the sum of answers $\bigoplus_i y_i$ is constrained by the win condition, it is also possible for the players to win by sharing an $N$-particle GHZ state using \emph{maximal-entanglement, relative-phase} strategies (MERP)~\cite{Watts2019}. The general strategy mirrors that of the GHZ game.

\begin{proposition}
A CSS XOR game associated to a CSS code on $N$ qubits can be won using a shared $N$-qubit GHZ state.
\end{proposition}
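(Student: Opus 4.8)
The plan is to exhibit an explicit MERP strategy and verify that it satisfies the win condition~\eqref{eqn:CSS-game-victory} with certainty, following the template of the GHZ game. The players share the $N$-qubit GHZ state $\ket{\mathrm{GHZ}} = (\ket{0}^{\otimes N} + \ket{1}^{\otimes N})/\sqrt{2}$, with one qubit assigned to each player. Upon receiving the question $(a_j, b_j)$, player $j$ applies the local diagonal phase gate $\mathrm{diag}(1, i^{a_j b_j})$ to their qubit, measures in the $X$ basis, and outputs the outcome bit $y_j$ (with $\ket{+} \mapsto 0$ and $\ket{-} \mapsto 1$). Crucially, the phase $i^{a_j b_j}$ depends only on player $j$'s own inputs, so no communication is required and this is a legitimate quantum strategy.

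First I would compute the post-phase state. Since each local gate acts trivially on $\ket{0}$ and multiplies $\ket{1}$ by $i^{a_j b_j}$, the shared state becomes $\ket{\psi'} = (\ket{0}^{\otimes N} + i^{\sum_j a_j b_j}\,\ket{1}^{\otimes N})/\sqrt{2}$, in which the entire query dependence is collected into a single relative phase $\Phi = \tfrac{\pi}{2}\sum_j a_j b_j$. Next I would apply Lemma~\ref{lem:submeasurement-dist} with $I = \Lambda$ and $P_i = X$ on every site, which reduces the distribution of $\bigoplus_j y_j$ to the single expectation value $\braket{\psi' | \prod_i X_i | \psi'}$. A short calculation gives $\prod_i X_i \ket{\psi'} = (\ket{1}^{\otimes N} + i^{\sum_j a_j b_j}\ket{0}^{\otimes N})/\sqrt{2}$, so that the expectation value equals $\cos\Phi = \cos(\tfrac{\pi}{2}\sum_j a_j b_j)$.

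Finally I would invoke the commutation constraint already established below Eq.~\eqref{eqn:CSS-game-victory}, namely that $\sum_j a_j b_j \equiv \vec{x}(H_X H_Z^\tp)\vec{z}^\tp \equiv 0$ mod $2$ is even. This forces $\cos\Phi = (-1)^{\frac12\sum_j a_j b_j} = (-1)^{f_H(\vec{x}, \vec{z})} = \pm 1$, whereupon Eq.~\eqref{eqn:submeasurement-dist} yields $\Pr(\bigoplus_j y_j = f_H) = 1$; the strategy therefore wins with probability one.

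I expect the only delicate point to be the bookkeeping of the half-integer relative phase: the victory condition lives mod $2$, whereas $\Phi$ naturally lives mod $4$ in the exponent of $i$, and it is precisely the evenness of $\sum_j a_j b_j$ that collapses $\cos\Phi$ onto $\pm 1$ and renders the $X$-basis parity deterministic rather than merely biased. Everything else is a routine generalization of the GHZ-game analysis, with Lemma~\ref{lem:submeasurement-dist} doing the work of converting the measurement statistics into the single expectation value that encodes the target function $f_H$.
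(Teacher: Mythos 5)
Your proof is correct and is essentially the paper's own argument: applying the local phase $\mathrm{diag}(1, i^{a_j b_j})$ and then measuring in the $X$ basis is exactly the paper's strategy of measuring $X_j$ when $a_j b_j = 0$ and $Y_j$ when $a_j b_j = 1$ (an even number of sign flips from the $S$-conjugation convention cannot affect the parity, since $\bigoplus_j a_j b_j = 0$). Both arguments reduce to the same two ingredients --- the commutation constraint forcing $\sum_j a_j b_j$ even, and Lemma~\ref{lem:submeasurement-dist} converting the collectively measured GHZ stabilizer's sign $(-1)^{\frac12 \sum_j a_j b_j}$ into a deterministic parity --- with your version merely making the sign bookkeeping explicit.
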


\begin{proof}
    Consider the strategy in which player $i$ measures $X_i$ if $a_ib_i=0$ and measures $Y_i$ if $a_i b_i = 1$. Since $\bigoplus_i a_i b_i = 0$ by commutation of stabilizer generators, they collectively measure a GHZ stabilizer up to a sign. By Lemma~\ref{lem:submeasurement-dist}, the sign is given by $\frac12 \sum_i a_i b_i$ mod 2.
\end{proof}


\subsection{Classical strategies}

Here we connect the properties of the Boolean function that enters the win condition~\eqref{eqn:CSS-game-victory} to the performance of deterministic classical strategies. 

\begin{Def}
    A \emph{deterministic classical strategy} is a collection of functions $y_i : \F{2}\times\F{2} \to \F{2}$ for each $i \in \{1, \dots, N\}$, implemented by player $i$ on $(a_i, b_i)$.
\end{Def}

If the inputs to the game are parametrized by the pair $(\vec{x}, \vec{z})$, players implementing a deterministic classical strategy therefore output the function 
\begin{equation}
    c(\vec{x}, \vec{z}) \coloneq \bigoplus_{i=1}^N y_i(a_i(\vec{x}), b_i(\vec{z})).
\end{equation}
Note that the function $c$ depends implicitly on both the classical deterministic strategy implemented by the players \emph{and} the parity-check matrices, which encode the map from $(\vec{x}, \vec{z})$ to the query $(\vec{a}, \vec{b})$. Suppose that $\dim(\Image H_X^\tp) + \dim(\Image H_Z^\tp) = d$ such that the number of unique queries in the game $\mathsf{CSS}(H)$ is $2^d$. The \emph{success fraction} of $c$, given a target function $f$ is defined by the fraction of inputs on which the functions $c, f \in \smash{\BF{d}}$ agree with one another. It is given by
\begin{equation}
    p(\vec{c} | \vec{f}\,) = \frac{2^d - \weight(\vec{c} \oplus \vec{f})}{2^d}
    ,
    \label{eqn:winning-fraction}
\end{equation}
and satisfies $0 \leq p(\vec{c} | \vec{f}\,) \leq 1$,
where we dropped the explicit dependence of $f_H$ [defined by Eq.~\eqref{eqn:CSS-game-victory}] on the parity-check matrices $H$, and $\vec{c}, \vec{f} \in \smash{\F{2}^{2^d}}$ are vectorized versions of the functions $c, f \in \smash{\BF{d}}$, respectively. If inputs to the game $(\vec{x}, \vec{z})$ are drawn uniformly at random, the success fraction also equals the probability that the deterministic classical strategy $c$ wins the game $\mathsf{CSS}(H)$.

The \emph{optimal success fraction} (optimized over all deterministic classical strategies) defines $\omega$. By convexity, the quantity $\omega$ bounds both deterministic {and} probabilistic classical strategies. In the following, we connect $\omega$ to properties of the Boolean function $f_H$. First, we consider games where either the $X$ or the $Z$ inputs are held fixed, then we consider the case of unrestricted inputs.


\subsubsection{Fixing inputs}

Motivated by the GHZ example, consider an input set $\smash{I_X}$ that contains a single element, i.e., $\vec{a} \in \smash{I_X} = \{ \vec{a}_0 \}$ for some fixed $\vec{a}_0 \in \Image H_X^\tp$ such that, in the associated Pauli strategy (Prop.~\ref{prop:perfect-quantum-strat}), the stabilizer $\prod_{i}X_i^{a_i}$ is then always measured by the players. Optimizing over classical strategies leads to the following result.

\begin{theorem}
    \label{thm:nonlinearity}
     Let $G=\mathsf{CSS}(H | I)$ be the CSS game defined by full-rank parity-check matrices $H = (H_X$, $H_Z)$, where the $X$ inputs are fixed, $I_X = \{ \vec{a}_0 \}$, and $Z$ inputs, $I_Z = \Im H_Z^\tp$ with $\abs{I_Z} = 2^n$, are unconstrained. Then the optimal success fraction satisfies $\omega(G) = 1-2^{-n}N_f$ where $N_f$ is the nonlinearity of $f$, the target function of $G$.
\end{theorem}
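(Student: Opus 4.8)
The plan is to show that, once the $X$-inputs are frozen to the single value $\vec{a}_0$, the set of functions $c(\vec{z})$ that deterministic classical strategies can produce is \emph{exactly} the set of affine functions of $\vec{z}$; the theorem then follows immediately from the definition of nonlinearity. First I would observe that, because $a_i$ is the same on every query, player $i$'s response $y_i(a_i,b_i)$ depends only on the single bit $b_i = (\vec{z}H_Z)_i$. Since every function $\F{2}\to\F{2}$ is affine, each player's strategy must take the form $y_i(b_i) = \alpha_i b_i \oplus \beta_i$ for some fixed $\alpha_i,\beta_i \in \F{2}$.

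Next I would assemble the collective output. Using the parametrization $\vec{b}(\vec{z}) = \vec{z}H_Z$ from Eq.~\eqref{eqn:input-bits},
\[ c(\vec{z}) = \bigoplus_{i=1}^N \bigl(\alpha_i b_i \oplus \beta_i\bigr) = \vec{z}\,H_Z\vec{\alpha}^\tp \oplus \beta, \]
where $\vec{\alpha} = (\alpha_1,\dots,\alpha_N)$ and $\beta = \bigoplus_i \beta_i$. Hence every realizable $c$ is an affine function of $\vec{z}$ with linear part given by the column vector $H_Z\vec{\alpha}^\tp$ and constant term $\beta$. This establishes one inclusion: the realizable strategies sit inside the space of affine functions.

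The key step—and the only place the full-rank hypothesis is needed—is the converse, that \emph{every} affine function is realizable. Since $I_Z = \Image H_Z^\tp$ has dimension $n$ and $H_Z$ is full rank, the domain of $\vec{z}$ is $\F{2}^n$ and the column space of $H_Z$ is all of $\F{2}^n$; therefore, as $\vec{\alpha}$ ranges over $\F{2}^N$, the linear part $H_Z\vec{\alpha}^\tp$ attains every vector of $\F{2}^n$, while $\beta$ is chosen freely. Thus the set of realizable outputs coincides precisely with the affine functions, i.e.\ the first-order Reed–Muller code $\ReedMuller{1}{n}$. I expect the transpose bookkeeping in the linear part together with this surjectivity argument to be the only points requiring genuine care; conceptually the statement is just that a product of independent single-bit classical responses can compute only affine combinations of the $Z$-check bits.

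Finally I would substitute this characterization into the success fraction. Applying Eq.~\eqref{eqn:winning-fraction} with $d = n$, maximizing $p(\vec{c}\mid\vec{f}\,)$ over all realizable $c$ is equivalent to minimizing $\weight(\vec{c}\oplus\vec{f}\,)$ over all affine $c$, and that minimum distance is by definition the nonlinearity $N_f$ of the target function. Consequently
\[ \omega(G) = \frac{2^n - N_f}{2^n} = 1 - 2^{-n}N_f, \]
completing the argument.
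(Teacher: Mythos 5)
Your proposal is correct and follows essentially the same route as the paper: it parametrizes the classical strategies as $c(\vec{z}) = u_0 \oplus \inner{\vec{u}_z}{\vec{z}}$ with $\vec{u}_z^\tp = H_Z\vec{\alpha}^\tp$, uses the full-rank (surjectivity) hypothesis to show these exhaust all affine functions, and then invokes Def.~\ref{def:nonlinearity} together with Eq.~\eqref{eqn:winning-fraction}. Your explicit justification that each single-player response is necessarily affine in $b_i$ (since every function $\F{2}\to\F{2}$ is affine) makes precise a step the paper leaves implicit, but the argument is the same.
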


The nonlinearity $N_f$ is defined below in Def.~\ref{def:nonlinearity}.
To prove Theorem~\ref{thm:nonlinearity}, we construct the possible deterministic classical strategies, which correspond to functions
\begin{equation}
    c(\vec{z}) = u_0 \oplus \inner{\vec{u}_z}{\vec{z}},
    \label{eqn:classical-strat-fixed}
\end{equation}
parametrized by $u_0 \in \F{2}$, $\vec{u}_z \in \F{2}^n$ with $n = \dim( \Im H_Z^\tp )$, where $\inner{\vec{a}}{\vec{b}} = \vec{a}\vec{b}^\tp$. Equation~\eqref{eqn:classical-strat-fixed} shows that classical strategies therefore correspond to \emph{affine} Boolean functions. Since the parity-check matrix $H_Z$ is assumed to be full rank, it is also surjective, and classical strategies span \emph{all} affine Boolean functions in the $n$ variables $ z_i $.

\begin{remark}
    {If player $i$ answers with $y_i = e_i\oplus s_{i} b_i$, we have $\vec{u}_z^\tp = H_Z \vec{s}^\tp$, which we identify with the error syndrome associated to the state $\vec{s}$. Two states that differ by the addition of an element of $\ker H_Z $ give rise to the same classical strategy, so distinct classical strategies are identified with the quotient space $\coim H_Z $.}
\end{remark}

Different affine functions may have different success fractions~\eqref{eqn:winning-fraction}, and in order to find $\omega$ we are interested in the classical strategies that agree with $f$ on as many inputs as possible. This notion is captured by the nonlinearity $N_f$ of the function $f$, which identifies the distance between $f$ and the nearest affine Boolean function.

\begin{Def}
    \label{def:nonlinearity}
    The \emph{nonlinearity} $N_f$ of a Boolean function $f$ in $n$ variables is equal to the minimum Hamming distance between $f$ and the set of all affine Boolean functions 
    \begin{equation}
        N_f \coloneq \min_{a\in\F{2}, \vec{b} \in \F{2}^n} \dist(f, L_{a,\vec{b}})
    \end{equation}
    where the function $L_{a,\vec{b}}(\vec{x}) = a \oplus \langle \vec{b}, \vec{x} \rangle$ parametrizes all affine Boolean functions on $n$ variables, and $\dist(f, g) = \weight(\vec{f} \oplus \vec{g})$ in terms of the vectorized functions $\vec{f}$, $\vec{g}$.
\end{Def}

Combining Eq.~\eqref{eqn:classical-strat-fixed} and Def.~\ref{def:nonlinearity} allows us to straightforwardly prove Theorem~\ref{thm:nonlinearity}.

\begin{proof}[Proof of Theorem~\ref{thm:nonlinearity}]
    The success fraction of an arbitrary deterministic classical strategy is, by definition, $p(c | f) = 1 - 2^{-n} \dist(c , f)$. Maximizing over all strategies gives
    \begin{equation}
        \omega(G) = \max_c p(c | f) = 1 - 2^{-n} \min_c \dist(c, f).
    \end{equation}
    Since the space of classical strategies corresponds to the space of affine Boolean functions, it immediately follows from Def.~\ref{def:nonlinearity} that $\omega(G) = 1 - 2^{-n} N_f$.
\end{proof}

A Boolean function $f$'s nonlinearity is closely related to its \emph{Walsh spectrum}, obtained by taking the Walsh-Hadamard transform of $f$. This connection will be crucial for the mapping to hypergraph states in Sec.~\ref{sec:walsh-from-graphs}.

\begin{Def}
    \label{def:Walsh-transform}
    The \emph{Walsh-Hadamard (WH) transform} of a Boolean function $f : \F{2}^n \to \F{2}$ is
    \begin{equation}
        \label{eqn:Walsh-transform}
        W_f(\vec{y}) = \sum_{\vec{x} \in \F{2}^n} (-1)^{\langle \vec{x}, \vec{y} \rangle \oplus f(\vec{x})}
        ,
    \end{equation}
    and the \emph{Walsh spectrum} of $f$ is the set of integer-valued coefficients $W_f = \{ W_f(\vec{y}) : \vec{y} \in \F{2}^n \}$.
\end{Def}

An individual Walsh coefficient $W_f(\vec{y})$ encodes the overlap between $f(\vec{x})$ and the linear Boolean function $\inner{\vec{x}}{\vec{y}}$. From the definition of the WH transform, the Walsh coefficients obey~\cite{TokarevaBent}
\begin{subequations}
\begin{align}
    W_f(\vec{y}) &= 2^n - 2 \dist( f, \langle \vec{y}, \, \cdot \, \rangle ) \\
    &= 2 \dist( f, 1 \oplus \langle \vec{y}, \, \cdot \, \rangle ) - 2^n.
\end{align}%
\label{eqn:Walsh-distance-relations}%
\end{subequations}
It therefore follows that the nonlinearity can be expressed in terms of the largest (in magnitude) Walsh coefficient:
\begin{equation}
    N_f = 2^{n-1} - \frac12 \max_{\vec{y} \in \F{2}^n} \abs{W_f(\vec{y})}.
    \label{eqn:nonlinearity-from-walsh}
\end{equation}
Note that, by virtue of Eq.~\eqref{eqn:Walsh-distance-relations}, by taking the modulus, we bound the distance to all \emph{affine} Boolean functions.
It is also clear from Def.~\ref{def:Walsh-transform} that the Walsh spectrum remains unchanged upon adding a linear function to $f$.
Since the Walsh-Hadamard transform~\eqref{eqn:Walsh-transform} is a $\Z{2}$ Fourier transform, the Walsh spectrum satisfies a sum rule (Parseval's theorem)
\begin{equation}
    \sum_{\vec{y}\in\F{2}^n} W_f(\vec{y})^2 = 2^{2n}.
\end{equation}
The maximal Walsh coefficient therefore satisfies a lower bound corresponding to the case where all terms under the sum are equal, $\abs*{W_f(\vec{y})} = 2^{n/2}$. This places an upper bound on the allowable nonlinearity of Boolean functions. Maximally nonlinear functions are known as \emph{bent functions}, which have applications throughout discrete mathematics, communications theory, and theoretical computer science~\cite{mesnager2016bent, TokarevaBent, KaiUweCombinatorics}.

\begin{Def}[Bent functions]
    \label{def:bent}
    A Boolean function $f$ is a \emph{bent function} in $n$ variables (with $n$ even) if any of the following equivalent conditions hold:
    \begin{itemize}
        \item the nonlinearity equals $N_f  = 2^{n-1} - 2^{n/2-1}$,
        \item the Walsh spectrum $W_f$ has constant absolute value.
    \end{itemize}
\end{Def}

We note that Theorem~\ref{thm:nonlinearity} is equivalent to Ref.~\cite[Lemma 1]{RaussendorfContextuality} in the context of measurement-based quantum computation (MBQC). Indeed, the CSS XOR games can be viewed as temporally flat MBQC, which is ensured by the lack of communication between players. In the next section, we generalize to the case of unrestricted inputs, which allows the players to implement strategies with geometrically restricted nonlinearity.

Using the manipulations above that relate the nonlinearity $N_f$ to the Walsh spectrum, we can directly connect the success fraction $\omega(G)$ of the game $G$ from Theorem~\ref{thm:nonlinearity} to the largest Walsh coefficient:
\begin{equation}
    \omega(G) = \frac{1}{2}\left( 
        1 + 2^{-n} \max_{\vec{y} \in \F{2}^n} \abs{W_f(\vec{y})}
    \right) ,
\end{equation}
where bent functions minimize the right-hand side.


\subsubsection{Unrestricted inputs}

When no restrictions are placed on queries, we have $\vec{a} \in I_X = \Image H_X^\tp $ and $\vec{b} \in I_Z = \Image H_Z^\tp $, which are sampled uniformly. In this case, classical strategies correspond to \emph{bi-affine}, as opposed to affine, Boolean functions. Consequently, Theorem~\ref{thm:nonlinearity} must be generalized appropriately to a quantity beyond the nonlinearity. We bound the optimal success fraction both from above and from below and relate it to the target function's nonquadraticity.

\begin{Def}
    \label{def:bi-affine}
    A Boolean function $f : \F{2}^n \times \F{2}^m \to \F{2}$ is \emph{bi-affine} if for all $\vec{z} \in \F{2}^m$ the map $\vec{x} \mapsto f(\vec{x}, \vec{z})$ is affine, and for all $\vec{x} \in \F{2}^n$ the map $\vec{z} \mapsto f(\vec{x}, \vec{z})$ is affine.
\end{Def}

\begin{Def}
    Given a linear subspace of functions $\mathscr{G} \subseteq \BF{d}$, the \emph{$\mathscr{G}$-distance} of $f$, $\nu_f(\mathscr{G})$, of a Boolean function $f : \F{2}^d \to \F{2}$ is defined as
    \begin{equation}
        \nu_f(\mathscr{G}) \coloneq \min_{g \in \mathscr{G}} \dist(g, f).
        \label{eqn:generalized-nonlinearity}
    \end{equation}
\end{Def}

\begin{Def}
    \label{def:generalized-Walsh}
    Given a linear subspace of functions $\mathscr{G} \subseteq \BF{d}$, the \emph{generalized Walsh coefficient} $W_f[g]$ quantifies the overlap between $g \in \mathscr{G}$ and some fixed Boolean function $f : \F{2}^d \to \F{2}\,$,
    \begin{equation}
        W_f[g] \coloneq \sum_{\vec{x} \in \F{2}^d} (-1)^{g(\vec{x}) \oplus f(\vec{x})}
        .
        \label{eqn:generalized-Walsh-def}
    \end{equation}
\end{Def}

It follows that $W_f[g] = 2^d - 2 \dist( g , f )$ quantifies the Hamming distance between $g \in \mathscr{G}$ and $f \in \BF{d}$. One can therefore express the $\mathscr{G}$-distance of $f$ in terms of the generalized Walsh spectrum. If $\mathscr{G}$ contains the constant function $1(\vec{x})$, then $g \in \mathscr{G}$ implies $1\oplus g \in \mathscr{G}$, and
\begin{equation}
    \nu_f(\mathscr{G}) = 2^{d-1} - \frac12 \max_{g \in \mathscr{G} / \langle 1 \rangle} \abs{W_f[g]}.
\end{equation}
with $\langle 1 \rangle$ the subspace formed by constant functions. By taking the modulus, we need only sum over equivalence classes $\mathscr{G} / \langle 1 \rangle$ since $W_f[g] = - W_f[1 \oplus g]$. We are now ready to state Theorem~\ref{thm:CSS-success-fraction}.

\begin{theorem}
    \label{thm:CSS-success-fraction}
     Let $G=\mathsf{CSS}(H)$ be the CSS game defined by full-rank parity-check matrices $H = (H_X, H_Z)$, with no input restrictions, $I_X = \Image H_X^\tp$ and $I_Z = \Image H_Z^\tp$. If $\abs{I_X}\abs{I_Z} = 2^d$, the optimal success fraction satisfies $\omega(G) = 1-2^{-d} \nu_f(\mathscr{C})$ with $\mathscr{C}(H)$ the space of classical deterministic functions. 
\end{theorem}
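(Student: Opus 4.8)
The plan is to follow the proof of Theorem~\ref{thm:nonlinearity} almost verbatim, with the family of affine strategies replaced by the full space $\mathscr{C}(H)$ of functions that deterministic classical strategies can realize when both the $X$ and $Z$ inputs are unrestricted. The crux is therefore to identify $\mathscr{C}(H)$ and verify that it is a linear subspace of $\BF{d}$; once this is done the success-fraction formula drops out of the definition of the $\mathscr{C}$-distance. First I would note that player $i$'s most general deterministic response is an arbitrary Boolean function on its two input bits, which in algebraic normal form reads $y_i(a_i, b_i) = e_i \oplus \alpha_i a_i \oplus \beta_i b_i \oplus \gamma_i a_i b_i$ with $(e_i, \alpha_i, \beta_i, \gamma_i) \in \F{2}^4$. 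Substituting the input relations $a_i = (\vec{x}H_X)_i$ and $b_i = (\vec{z}H_Z)_i$ from Eq.~\eqref{eqn:input-bits} and summing over players gives the collective strategy function $c(\vec{x}, \vec{z}) = \bigoplus_i y_i(a_i, b_i)$.

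Next I would read off the structure of $c$. It splits into a constant $\bigoplus_i e_i$, a part linear in $\vec{x}$ of the form $\inner{\vec{x}}{\vec{\alpha}H_X^\tp}$, a part linear in $\vec{z}$ of the form $\inner{\vec{z}}{\vec{\beta}H_Z^\tp}$, and a bilinear part $\vec{x} M \vec{z}^\tp$ with $M = H_X\operatorname{diag}(\vec{\gamma})H_Z^\tp$. In particular every realizable $c$ is bi-affine in the sense of Def.~\ref{def:bi-affine}, and the full-rank hypothesis guarantees that the $\abs{I_X}\abs{I_Z} = 2^d$ queries are distinct, so that $c$ and the target $f$ are genuinely functions on $\F{2}^d$ and distances are counted over all $2^d$ inputs. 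The key observation is that the assignment $(y_i)_{i} \mapsto c$ is $\F{2}$-linear: the XOR of two strategies is again a valid strategy, implemented player-by-player, since the space of Boolean functions on two bits is closed under XOR. Hence the image $\mathscr{C}(H)$ of deterministic classical strategies is a linear subspace of $\BF{d}$, which is exactly what is required for $\nu_f(\mathscr{C})$ in Eq.~\eqref{eqn:generalized-nonlinearity} to be well defined.

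With $\mathscr{C}$ established as a linear subspace, the conclusion is immediate. By Eq.~\eqref{eqn:winning-fraction} any strategy has success fraction $p(c \mid f) = 1 - 2^{-d}\dist(c, f)$, so maximizing over $c \in \mathscr{C}$ and using the definition of the $\mathscr{C}$-distance $\nu_f(\mathscr{C}) = \min_{g \in \mathscr{C}}\dist(g, f)$ yields $\omega(G) = \max_{c \in \mathscr{C}} p(c\mid f) = 1 - 2^{-d}\nu_f(\mathscr{C})$, and convexity then extends the bound to probabilistic strategies. I expect the main obstacle to be the characterization of $\mathscr{C}$ itself — confirming that the per-player freedom reproduces exactly the bi-affine functions whose linear and bilinear coefficients are constrained by $\Image H_X^\tp$, $\Image H_Z^\tp$, and the matrices $H_X\operatorname{diag}(\vec{\gamma})H_Z^\tp$, and that no further functions are reachable. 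The remaining arithmetic mirrors Theorem~\ref{thm:nonlinearity} and carries no additional difficulty once the linearity of $\mathscr{C}$ is in hand.
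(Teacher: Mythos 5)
Your proposal is correct and matches the paper's argument: the paper likewise reduces Theorem~\ref{thm:CSS-success-fraction} to the distance-maximization step of Theorem~\ref{thm:nonlinearity}, with $\mathscr{C}(H)$ characterized exactly as you describe via Eq.~\eqref{eqn:unrestricted-classical-strat} --- constant, linear parts spanning all linear functions by surjectivity of $H_X$, $H_Z$, and bilinear part $H_X \Lambda H_Z^\tp$ with $\Lambda$ diagonal (your $\operatorname{diag}(\vec{\gamma})$). Your explicit verification that the per-player XOR structure makes $\mathscr{C}$ an $\F{2}$-linear subspace, so that $\nu_f(\mathscr{C})$ in Eq.~\eqref{eqn:generalized-nonlinearity} is well defined, is precisely the detail the paper leaves implicit in its ``analogous to Theorem~\ref{thm:nonlinearity}'' remark.
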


\begin{proof}
    Making use of Defs.~\ref{eqn:generalized-nonlinearity} and \ref{eqn:generalized-Walsh-def},
    the proof is analogous to that of Theorem~\ref{thm:nonlinearity}.
\end{proof}

We now describe the linear subspace of classical deterministic strategies.
Let $\abs{I_X} = 2^n$ and $\abs{I_Z} = 2^m$ and write $n+m=d$.
Summing up the answers from all players, deterministic classical strategies correspond to functions
\begin{equation}
    c(\vec{x}, \vec{z}) =
    u_0 \oplus \inner{\vec{u}_x}{\vec{x}} \oplus \inner{\vec{u}_z}{\vec{z}} \oplus \inner{\vec{x} u_{xz}}{\vec{z}},
    \label{eqn:unrestricted-classical-strat}
\end{equation}
with $u_0 \in \F{2}$, $\vec{u}_x \in \F{2}^n$, $\vec{u}_z \in \F{2}^m$, and $u_{xz} \in M_{n\times m}(\F{2})$. Since $H_X$ and $H_Z$ are surjective, $\inner{\vec{u}_x}{\vec{x}}$ and $\inner{\vec{u}_z}{\vec{z}}$ span all linear Boolean functions. However, $u_{xz}$ belongs to the space spanned by $H_X \Lambda H_Z^\tp$ where $\Lambda$ is a generic diagonal matrix in $M_N(\F{2})$.

Theorem~\ref{thm:CSS-success-fraction} is rather general and does not make use of the structure of the space $\mathscr{C}(H)$ of classically computable functions. The bi-affine nature of $\mathscr{C}(H)$ does, however, have important implications when bounding the $\mathscr{C}$-distance.
First, note that $\mathscr{L} \subseteq \mathscr{C}(H)$, where $\mathscr{L} \cong \ReedMuller{1}{d}$ is the space of linear functions, which can be used to bound from below on the optimal success fraction for \emph{any} target function $f$.

\begin{remark}
    Since $\mathscr{L} \subseteq \mathscr{C}$ with $\mathscr{C}$ parametrized by Eq.~\eqref{eqn:unrestricted-classical-strat}, the coefficients $\smash{W_f[c]}$ satisfy Parseval's theorem for all $u_{xz}$. This implies that $\nu_f(\mathscr{C}) \leq 2^{d-1} - 2^{d/2-1}$.
\end{remark}

Next, the bi-affine nature of $\mathscr{C}$ also allows us to bound $\nu_f(\mathscr{C})$ from below for a \emph{specific} target function $f$.
The map $\vec{z} \mapsto c(\vec{x}, \vec{z})$ is affine, such that we may write
\begin{equation}
    c(\vec{x}, \vec{z}) = c_0(\vec{x}) \oplus \langle \vec{y}(\vec{x}), \vec{z} \rangle
\end{equation}
with $c_0(\vec{x}) \coloneq u_0 \oplus \inner{\vec{u}_x}{\vec{x}}$ and $\vec{y}(\vec{x}) \coloneq \vec{u}_z \oplus \vec{x} u_{xz}$. Using Def.~\ref{def:generalized-Walsh}, we are able to write the generalized Walsh coefficient in terms of ordinary Walsh coefficients (Def.~\ref{def:Walsh-transform})
\begin{equation}
    W_f[c] \coloneq \sum_{\vec{x}, \vec{z}} (-1)^{c(\vec{x}, \vec{z}) \oplus f_\vec{x}(\vec{z})}
    = \sum_\vec{x} (-1)^{c_0(\vec{x})} W_{f_\vec{x}}(\vec{y}(\vec{x}))
\end{equation}
A simple bound on the magnitude of $W_f[c]$ is obtained by applying the triangle inequality, which allows us to use the techniques we have at our disposal for evaluating ordinary Walsh-Hadamard transforms:
\begin{equation}
    \abs{W_f[c]} \leq \sum_\vec{x} \abs{ W_{f_\vec{x}}(\vec{y}(\vec{x})) }.
    \label{eqn:walsh-triangle-inequality}
\end{equation}
Note that an analogous expression can be obtained by regarding the function $c(\vec{x}, \vec{z})$ as an affine function for each fixed $\vec{z}$. If the code is self-dual, this will lead to an equivalent bound. 

Finally, we relate the optimal success fraction $\omega$ to the \emph{nonquadraticity} (or second-order nonlinearity) of the target function, which is a natural generalization of the nonlinearity, and which also arises in the quantification of magic in a many-body setting~\cite{ManyBodyMagic}.
From Eq.~\eqref{eqn:unrestricted-classical-strat}, the space $\mathscr{C}$ of classical deterministic functions is contained in the space of quadratic Boolean functions. Also, recall that the codewords of the order-$r$ Reed-Muller code $\ReedMuller{r}{n}$ are in correspondence with Boolean functions of algebraic degree at most $r$ in $n$ variables. Let $\chi_f = \min_{g \in \ReedMuller{2}{n}} \dist(g, f)$ be the nonquadraticity of $f$. Since $\mathscr{C} \subset \ReedMuller{2}{d}$, it follows from Def.~\ref{eqn:generalized-nonlinearity} that we have $\nu_f \geq \chi_f$.

\begin{corollary}
    Let $G$, $f$ be the game and the associated target function from Theorem~\ref{thm:CSS-success-fraction}, respectively, and let $\smash{\chi_f}$ be the nonquadraticity of $f$. The optimal success fraction is bounded from above by $\omega(G) \leq 1 - 2^{-d}\chi_f$.
\end{corollary}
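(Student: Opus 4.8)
The plan is to chain together two ingredients already in place: the exact expression $\omega(G) = 1 - 2^{-d}\nu_f(\mathscr{C})$ furnished by Theorem~\ref{thm:CSS-success-fraction}, and a set containment that forces the $\mathscr{C}$-distance to dominate the nonquadraticity $\chi_f$.

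First I would verify that $\mathscr{C} \subseteq \ReedMuller{2}{d}$. This can be read off directly from the explicit parametrization of classical deterministic strategies in Eq.~\eqref{eqn:unrestricted-classical-strat}: every such strategy has the form $c(\vec{x}, \vec{z}) = u_0 \oplus \inner{\vec{u}_x}{\vec{x}} \oplus \inner{\vec{u}_z}{\vec{z}} \oplus \inner{\vec{x} u_{xz}}{\vec{z}}$, whose only nonlinear term $\inner{\vec{x} u_{xz}}{\vec{z}} = \sum_{i,j} x_i (u_{xz})_{ij} z_j$ is a sum of degree-two monomials. Hence every $c \in \mathscr{C}$ has algebraic degree at most two, and since the codewords of $\ReedMuller{2}{d}$ are precisely the Boolean functions of degree at most two in $d$ variables, the containment follows. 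This step is nothing more than the bi-affine structure of $\mathscr{C}$ (Def.~\ref{def:bi-affine}) recast as a degree bound.

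Next I would invoke monotonicity of a minimum under set inclusion. Because minimizing the Hamming distance $\dist(\cdot, f)$ over the smaller set $\mathscr{C}$ can only raise (or leave unchanged) the attained minimum relative to the larger set $\ReedMuller{2}{d}$, the definition of the $\mathscr{G}$-distance yields $\nu_f(\mathscr{C}) = \min_{g \in \mathscr{C}} \dist(g, f) \geq \min_{g \in \ReedMuller{2}{d}} \dist(g, f) = \chi_f$. Substituting into $\omega(G) = 1 - 2^{-d}\nu_f(\mathscr{C})$ and noting that the negative coefficient $-2^{-d}$ reverses the inequality then gives $\omega(G) \leq 1 - 2^{-d}\chi_f$, as claimed.

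I do not anticipate a genuine obstacle: the corollary is essentially a containment-plus-monotonicity observation, which is precisely why it follows so immediately from Theorem~\ref{thm:CSS-success-fraction}. The only point that deserves any care is the degree count in the first step, where one must confirm that no cubic or higher term can appear in $\mathscr{C}$; this is guaranteed by the bi-affine structure, which restricts the cross term to be bilinear and hence of degree exactly two.
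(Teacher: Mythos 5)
Your proposal is correct and matches the paper's own argument: the paper likewise observes from Eq.~\eqref{eqn:unrestricted-classical-strat} that $\mathscr{C} \subset \ReedMuller{2}{d}$, deduces $\nu_f(\mathscr{C}) \geq \chi_f$ from the definition of the $\mathscr{G}$-distance, and substitutes into Theorem~\ref{thm:CSS-success-fraction}. Your extra care with the degree count of the bilinear cross term is a fine, if unremarkable, amplification of the same reasoning.
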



\subsection{Code equivalence}

CSS games require that players collectively compute some target function, $f_H$. Here, we establish that 
\begin{enumerate*}[label=(\roman*)]
    \item basis changes naturally realize the notion of \emph{affine equivalence}
    \item equivalent stabilizer codes realize the notion of \emph{extended affine equivalence}.
\end{enumerate*}
All such equivalent functions have the same optimal classical success fraction $\omega$.

\begin{Def}[Equivalent Boolean functions]
    \label{def:ext-affine-equivalence}
    Two Boolean functions $f$ and $g$ in $n$ variables are \emph{affine equivalent} if there exists a matrix $A \in \GL{n}{\F{2}}$ and a bitstring $\vec{b} \in \F{2}^n$ such that $f(\vec{x}) = g(\vec{x}A \oplus \vec{b})$ for all $\vec{x} \in \F{2}^n$.
    Two Boolean functions $f$ and $g$ in $n$ variables are instead \emph{extended affine equivalent} if, in addition to $\vec{b} \in \F{2}^n$ and $A \in \GL{n}{\F{2}}$, there exists an affine function $h$ in $n$ variables such that $f(\vec{x}) = g(\vec{x}A \oplus \vec{b})\oplus h(\vec{x})$ for all $\vec{x} \in \F{2}^n$.
\end{Def}

Consider a basis change of the stabilizer generators: $H'_X = AH_X$ and $H'_Z = B H_Z$ for nonsingular matrices $A, B \in \GL{n}{\F{2}}$. The target function associated to this new basis $H' = (H'_X, H'_Z)$ is 
\begin{equation}
    f_{H'}(\vec{x}, \vec{z}) = f_H(\vec{x}A, \vec{z}B).
\end{equation}
Hence, basis changes modify the target function by a linear transformation on the inputs. The optimal success fraction $\omega(G)$ appearing in Theorems~\ref{thm:nonlinearity} and \ref{thm:CSS-success-fraction} is trivially invariant under such a transformation. Next, we consider the notion of \emph{equivalent stabilizer codes.}

\begin{Def}[Equivalent stabilizer codes]
    Two qubit stabilizer codes are said to be \emph{equivalent} if the codespaces of the two codes are related by a tensor product of elements of the single-qubit Clifford group and a qubit permutation.
\end{Def}

Considering the functions computed by measuring the codewords of equivalent qubit stabilizer codes gives rise to the following theorem.

\begin{theorem}
    \label{thm:clifford-equivalence}
    Let $U = \prod_i C_i$ with each $C_i \in \mathbf{C}_1 / \U{1}$ a single-site unitary gate belonging to the single-site Clifford group, and let $\ket{\psi}$ be a codeword of a CSS code with parity-check matrices $H_X$, $H_Z$. The function that is deterministically computed by applying the Pauli strategy (Prop.~\ref{prop:perfect-quantum-strat}) in the dressed state $U \ket{\psi}$ is related to the function computed by $\ket{\psi}$ by the addition of a classically computable function.
\end{theorem}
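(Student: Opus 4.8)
The plan is to track how the Pauli strategy transforms under conjugation by the local unitary $U=\prod_i C_i$, and to show that the only change to the deterministically computed sign is an additive term that is bi-affine in the inputs $(\vec x,\vec z)$, hence an element of the space $\mathscr{C}(H)$ of classically computable functions parametrized by Eq.~\eqref{eqn:unrestricted-classical-strat}. First I would record the single-site data. Each $C_i\in\mathbf{C}_1/\U{1}$ permutes the single-qubit Paulis, so conjugation acts on the exponent vector $(a_i,b_i)\in\F2^2$ of the single-site Pauli $P(a_i,b_i)$ (with $P(a,b)=i^{ab}X^aZ^b$ as in the proof of Prop.~\ref{prop:perfect-quantum-strat}) by an invertible symplectic map $M_i\in\GL2{\F2}$ together with a sign, $C_iP(a_i,b_i)C_i^\dagger=(-1)^{\phi_i(a_i,b_i)}P(\tilde a_i,\tilde b_i)$, where $(\tilde a_i,\tilde b_i)=(a_i,b_i)M_i$ and $\phi_i:\F2^2\to\F2$. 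The key structural fact is that \emph{any} function of two bits is bi-affine, so $\phi_i(a_i,b_i)=\mu_i\oplus\lambda_i a_i\oplus\kappa_i b_i\oplus\eta_i a_ib_i$ for constants $\mu_i,\lambda_i,\kappa_i,\eta_i\in\F2$ depending only on $C_i$.

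Next I would identify the game the dressed state plays. Since $U\ket{\psi}$ is a codeword of the equivalent stabilizer code $USU^\dagger$, the stabilizer indexed by $(\vec x,\vec z)$ becomes $UX^{\vec a}Z^{\vec b}U^\dagger$ with $\vec a=\vec xH_X$ and $\vec b=\vec zH_Z$; its single-site content at site $i$ is the Pauli $P(\tilde a_i,\tilde b_i)$, so the Pauli strategy on the dressed state has player $i$ measure $P(\tilde a_i,\tilde b_i)$. This keeps the computation deterministic, because the product of these measurements is exactly the dressed stabilizer up to a sign. Concretely, from $\prod_i C_iP(a_i,b_i)C_i^\dagger=U\big(\prod_i P(a_i,b_i)\big)U^\dagger=(-1)^{f_H}\,UX^{\vec a}Z^{\vec b}U^\dagger$ together with Lemma~\ref{lem:submeasurement-dist}, the rotated operators $C_iP(a_i,b_i)C_i^\dagger$ reproduce the original target function $f_H$ on $U\ket{\psi}$ (their product has expectation $(-1)^{f_H}$).

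The final step compares the measured operator $P(\tilde a_i,\tilde b_i)$ to the rotated operator $C_iP(a_i,b_i)C_i^\dagger$. These differ precisely by the sign $(-1)^{\phi_i(a_i,b_i)}$, so the individual outcomes obey $\tilde s_i=s_i\oplus\phi_i(a_i,b_i)$, and the deterministically computed function on the dressed state is $\tilde g=f_H\oplus\bigoplus_i\phi_i(a_i,b_i)$. It then remains to substitute $a_i=(\vec xH_X)_i$ and $b_i=(\vec zH_Z)_i$ into $\bigoplus_i\phi_i$: the constants sum to some $u_0$, the linear terms $\bigoplus_i\lambda_ia_i$ and $\bigoplus_i\kappa_ib_i$ become linear forms $\langle\vec u_x,\vec x\rangle$ and $\langle\vec u_z,\vec z\rangle$, and the bilinear terms give $\bigoplus_i\eta_i(\vec xH_X)_i(\vec zH_Z)_i=\langle\vec xH_X\Lambda H_Z^\tp,\vec z\rangle$ with $\Lambda=\mathrm{diag}(\eta_1,\dots,\eta_N)$. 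This is exactly the form of Eq.~\eqref{eqn:unrestricted-classical-strat}, with cross-term coefficient $u_{xz}=H_X\Lambda H_Z^\tp$ lying in the admissible diagonal-mediated space; hence $\bigoplus_i\phi_i\in\mathscr{C}(H)$ and $\tilde g=f_H\oplus c$ with $c$ classically computable.

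I expect the main obstacle to be the bookkeeping in this last step: one must verify that the quadratic part of the accumulated sign has precisely the diagonal-mediated structure $H_X\Lambda H_Z^\tp$ characterizing $\mathscr{C}(H)$, rather than a generic quadratic form—this is what guarantees the correction is genuinely classically computable and not merely quadratic. A secondary subtlety is conceptual rather than computational: one must measure the dressed stabilizers' single-site content $P(\tilde a_i,\tilde b_i)$ (equivalently, supply the transformed queries) so that the strategy stays deterministic on $U\ket{\psi}$; measuring the untransformed $P(a_i,b_i)$ would, for a non-self-dual code, generically yield a vanishing expectation and no definite answer.
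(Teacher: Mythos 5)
Your proposal is correct and takes essentially the same route as the paper's proof in Appendix~\ref{app:equivalent-codes}: conjugate the stabilizer through $U$, track the per-site sign picked up in rewriting the dressed stabilizer as a product of single-site Paulis $P_i(a_i',b_i')$, and verify that the accumulated sign has exactly the bi-affine form of Eq.~\eqref{eqn:unrestricted-classical-strat} with bilinear part $H_X\Lambda H_Z^\tp$ for diagonal $\Lambda$. The one (minor, and rather clean) difference is that where the paper invokes the decomposition $C_i = V_i P(\vec{u}_i)$ of Lemma~\ref{lem:C1-representative-elements} to compute the sign explicitly as $[\vec{u}_i,\vec{s}_i]\oplus\lambda_{V_i}(\vec{s}_i)$, you bypass this bookkeeping by noting that the per-site sign $\phi_i$ is a Boolean function on $\F{2}^2$ and hence \emph{automatically} bi-affine, $\phi_i(a,b)=\mu_i\oplus\lambda_i a\oplus\kappa_i b\oplus\eta_i ab$ --- the same conclusion by a slightly more structural argument.
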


\begin{proof}
    See Appendix~\ref{app:equivalent-codes}, where we calculate the function computed by applying the Pauli strategy to $U\ket{\psi}$.
\end{proof}

Intuitively, write each $C_i$ as $C_i = V_i P_i$, where $P \in \{ I, X, Y, Z \}$ is a Pauli operator and $V$ is an element of $\mathbf{C}_1 / \U{1}$ modulo Pauli operators. If $U = P_i$, the target function becomes $f \to f \oplus L$ with $L$ a linear function. On the other hand, $U = V_i$ can give rise to the product $a_ib_i$ on site $i$, which corresponds to the bilinear part of classical strategies~\eqref{eqn:unrestricted-classical-strat}. Hence, under a generic product of single-site Clifford unitaries, $f \to f \oplus c'$ with $c' \in \mathscr{C}$. Using Def.~\ref{def:generalized-Walsh},
\begin{equation}
    W_{f \oplus c'}[c] = \sum_{\vec{x} \in \F{2}^d} (-1)^{c(\vec{x}) \oplus f(\vec{x}) \oplus c'(\vec{x})} = W_{f}[c \oplus c'].
\end{equation}
By the closure property of $\mathscr{C}$, this is simply a permutation of the generalized Walsh coefficients, which leaves the maximum unchanged. Hence, the optimal classical success fraction $\omega$ is again invariant under such a transformation.


\subsection{Quantifying contextuality}
\label{sec:quantifying-contextuality}

So far, we have shown that the performance of classical strategies for CSS games can be related to properties of the target Boolean function entering the win condition~\eqref{eqn:CSS-game-victory}. Here, we connect the performance of strategies involving quantum resources to contextuality and the \emph{contextual fraction}~\cite{Abramsky2017}. The former corresponds to the inability to describe the measurement outcomes obtained in, e.g., the Pauli strategy (Prop.~\ref{prop:perfect-quantum-strat}) via a global probability distribution over deterministic values assignments, whereas the latter quantifies the \emph{extent} to which such a description is possible.

Let $X = \{ O_a \}$ be the set of all possible (nonidentity) single-qubit Pauli operators entering the Pauli strategy of Prop.~\ref{prop:perfect-quantum-strat}, i.e., the union, over all queries, of the sets of all single-site measurements made by players. All $O_a$ are, by assumption, dichotomic, and we represent measurement outcomes via $(-1)^{s_a}$ with each $s_a \in \F{2}$. The measurements made by the players are collected into a set $M$ consisting of subsets $C \subset X$, where each $C \in M$ is referred to as a \emph{measurement context}. Elements $C \in M$ are in one-to-one correspondence with queries, i.e., $O_a \in C$ are the single-site measurements made by the players given the query associated to $C$. For a given context $C$, all $O_a \in C$ commute since they are supported on disjoint sites. The pair $( X, M )$ is called a \emph{measurement scenario}. See Example~\ref{ex:GHZ-measurement-scenario} for an illustration of $( X, M )$ for the GHZ game (Example~\ref{ex:ghz-game}).

For each context, $C \in M$, an \emph{empirical model} $e$ associates a probability distribution $e_C$ to measurement outcomes in $\{ 0, 1 \}^{\abs{C}}$. In order to be physical, such empirical models must satisfy certain criteria, including compatibility of marginals~\cite{Abramsky_2011, Abramsky2017}, and all empirical models arising from quantum mechanics satisfy these criteria. An empirical model $e$ is said to be \emph{contextual} if it cannot be obtained via marginalization of some global probability distribution over value assignments in $\{ 0, 1 \}^{\abs{X}}$, which are functions $\Omega : X \to \{ 0, 1 \}^{\abs{X}}$. 

\begin{remark}
    Value assignments $\Omega : X \to \{0, 1 \}^{\abs{X}}$ correspond to classical deterministic strategies. Hence, $\omega$ is interpreted as the maximum fraction of measurement contexts $C \in M$ whose sign can be correctly described by $\Omega$, modulo a global sign. 
\end{remark}

To see this, consider the constraints that must be satisfied by $\Omega$. As a consequence of Lemma~\ref{lem:submeasurement-dist}, the sum of measurement outcomes associated to the Pauli operators $O_a \in C$ in each context $C \in M$ is fixed:
\begin{equation}
    \bigoplus_{O_a \in C} \Omega(O_a) \stackrel{!}{=} f(C), \quad \forall C \in M
\end{equation}
where $f(C) = 0$ ($1$) if $\prod_{O_a \in C} O_a$ belongs to the stabilizer group $\mathcal{S}$ ($-\mathcal{S}$). These equations are equivalent to the win conditions~\eqref{eqn:CSS-game-victory}, up to a global sign, if we identify player $i$'s answer $y_i(a, b)$ with the value assignment $\Omega(P_i(a, b))$, for $ab \neq 0$, of the Pauli operators on site $i$.

To characterize the extent to which an empirical model $e$ is contextual, one introduces the noncontextual fraction by decomposing $e$ into a convex combination of two empirical models, a noncontextual part $e^{\text{NC}}$ and a contextual part $e^{\text{C}}$:
\begin{equation}
    e = \lambda e^{\text{NC}} + (1 - \lambda) e^{\text{C}} , \quad 0 \leq \lambda \leq 1.
\end{equation}
The noncontextual part, $e^{\text{NC}}$, is a mixture of global value assignments.
Maximizing $\lambda$ over possible valid decompositions of $e$ then gives its noncontextual fraction
\begin{equation}
    \mathsf{NCF}(e) \coloneq \max_{e^{\text{NC}}} \lambda 
    \eqcolon
    1 - \mathsf{CF}(e)
    ,
\end{equation}
with $\mathsf{CF}(e)$ the contextual fraction. The noncontextual fraction may be computed using the linear-programming methods outlined in Refs.~\cite{Abramsky2016,Abramsky2017}.

\begin{figure}
    \centering
    \includegraphics[width=\linewidth]{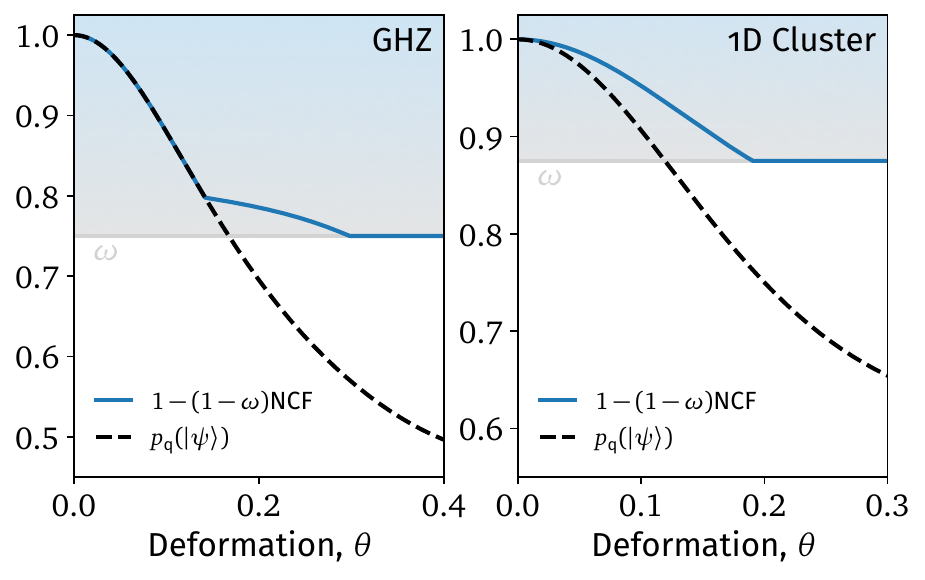}
    \caption{Upper bound~\eqref{eqn:NCF-bound} on the performance of strategies that make use of the empirical model that derives from measurement of a quantum state compared with the performance of the Pauli strategy (Prop.~\ref{prop:perfect-quantum-strat}). States are deformed by the nonunitary operator $A(\theta) = e^{\theta Z} e^{2i\theta Y}$ and the noncontextual fraction ($\mathsf{NCF}$) is computed using the linear-programming method described in Ref.~\cite{Abramsky2017}. \textbf{Left}: The deformed GHZ state $A(\theta) \ket{\text{GHZ}}$ with three qubits, for which $\omega = 3/4$. \textbf{Right}: The deformed 1D cluster state $A(\theta)\ket{C}$ with four qubits, for which $\omega = 7/8$.}
    \label{fig:ncf-vs-exact}
\end{figure}

\begin{Example}[GHZ measurement scenario]
    \label{ex:GHZ-measurement-scenario}
    The quantum strategy for the GHZ game from Example~\ref{ex:ghz-game} can be phrased as a measurement scenario $( X, M )$ described by an empirical model $e$. In this case, 
    \begin{gather*}
        X = \{ X_1, Y_1, X_2, Y_2, X_3, Y_3 \}, \\
        M = \{ \{ X_1, X_2, X_3 \}, \{ X_1, Y_2, Y_3 \}, \{ Y_1, X_2, Y_3 \}, \{ Y_1, Y_2, X_3 \} \}
    \end{gather*}
    The frequencies of various measurement outcomes with respect to the three-qubit GHZ state are captured by an empirical model $e$:
    \noindent
    \begin{center}
    \setlength{\tabcolsep}{4pt}
    \begin{tabularx}{\linewidth}{@{}lllllllll@{}}
    \toprule
                    & 000            & 001            & 010            & 011            & 100            & 101            & 110            & 111            \\ \midrule
$X_1, X_2, X_3$ & $\sfrac{1}{4}$ & $0$            & $0$            & $\sfrac{1}{4}$ & $0$            & $\sfrac{1}{4}$ & $\sfrac{1}{4}$ & $0$            \\
$X_1, Y_2, Y_3$ & $0$            & $\sfrac{1}{4}$ & $\sfrac{1}{4}$ & $0$            & $\sfrac{1}{4}$ & $0$            & $0$            & $\sfrac{1}{4}$ \\
$Y_1, X_2, Y_3$ & $0$            & $\sfrac{1}{4}$ & $\sfrac{1}{4}$ & $0$            & $\sfrac{1}{4}$ & $0$            & $0$            & $\sfrac{1}{4}$ \\
$Y_1, Y_2, X_3$ & $0$            & $\sfrac{1}{4}$ & $\sfrac{1}{4}$ & $0$            & $\sfrac{1}{4}$ & $0$            & $0$            & $\sfrac{1}{4}$ \\ \bottomrule
\end{tabularx}
\end{center}
That is, all parity-even bitstrings occur with equal probability in the top row (measurement context), whereas all parity-odd bitstrings occur with equal probability in all other measurement contexts. This empirical model has $\mathsf{CF}=1$, as illustrated in Fig.~\ref{fig:ncf-vs-exact} for deformation $\theta=0$.
\end{Example} 

With these ingredients in hand, we are ready to bound the performance of quantum-mechanical strategies in terms of the noncontextual fraction. A proof of this proposition can be found in the proofs of Theorems 3 and 4 in Ref.~\cite{Abramsky2017}. We present an abridged version of the proof here for completeness.

\begin{proposition}
    \label{prop:NCF-bound}
    Let $G = \mathsf{CSS}(H)$ be the CSS game defined by full-rank parity-check matrices $H = (H_X, H_Z)$.
    The success fractions of quantum-mechanical strategies for $G$ using an empirical model $e$ as a resource are bounded by
    \begin{equation}
        \omega^*(e) 
        \leq 
        1 - (1-\omega)\mathsf{NCF}(e) 
        =
        \omega + (1 - \omega) \mathsf{CF}(e)
        ,
        \label{eqn:NCF-bound}
    \end{equation}
    where $\mathsf{NCF}(e)$ is the noncontextual fraction of $e$ and $\omega \leq 1$ is the optimal classical success fraction of $G$.
\end{proposition}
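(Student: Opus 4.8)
The plan is to decompose the empirical model $e$ into its noncontextual and contextual parts, bound the success fraction of each part separately, and then recombine using convexity. The key structural fact, already established in the Remark preceding Example~\ref{ex:GHZ-measurement-scenario}, is that global value assignments $\Omega : X \to \{0,1\}^{\abs{X}}$ correspond precisely to classical deterministic strategies, and that $\omega$ is the maximum fraction of contexts on which such an $\Omega$ correctly reproduces the required sign. This is exactly the bridge that ties the noncontextual part of $e$ to the classical quantity $\omega$.

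First I would write $e = \lambda\, e^{\text{NC}} + (1-\lambda)\, e^{\text{C}}$ for the decomposition achieving the maximum, so that $\lambda = \mathsf{NCF}(e)$. The success fraction of a strategy based on $e$ is linear in the empirical model (it is an average, over queries drawn uniformly, of outcome probabilities predicted by $e$), so it decomposes as
\begin{equation}
    \omega^*(e) = \lambda\, \omega^*(e^{\text{NC}}) + (1-\lambda)\, \omega^*(e^{\text{C}}).
\end{equation}
The contextual part contributes at most $1$ to the win probability, so I bound $\omega^*(e^{\text{C}}) \leq 1$ trivially. The substance is in the noncontextual part: since $e^{\text{NC}}$ is by definition a mixture of global value assignments, playing the game with $e^{\text{NC}}$ is equivalent to playing with a (probabilistic) classical deterministic strategy. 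By the Remark, such a strategy wins at most a fraction $\omega$ of contexts, so $\omega^*(e^{\text{NC}}) \leq \omega$.

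Combining these two bounds gives
\begin{equation}
    \omega^*(e) \leq \lambda\, \omega + (1-\lambda)\cdot 1 = \omega + (1-\lambda)(1-\omega) = \omega + (1-\omega)\mathsf{CF}(e),
\end{equation}
using $\lambda = \mathsf{NCF}(e) = 1 - \mathsf{CF}(e)$, which is exactly Eq.~\eqref{eqn:NCF-bound} after rewriting $\omega + (1-\omega)\mathsf{CF}(e) = 1 - (1-\omega)\mathsf{NCF}(e)$.

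The main obstacle I anticipate is justifying the claim that the success fraction is genuinely \emph{linear} in the empirical model and that the optimal decomposition is the relevant one to use, rather than some other decomposition that might be exploited by a cleverer strategy. Concretely, one must be careful that ``the strategy that uses $e$'' really does inherit the convex structure: the player must commit to a single empirical model (the physical one) and cannot adaptively choose different decompositions per context. The correct framing, following Ref.~\cite{Abramsky2017}, is that the quantity being bounded is determined by $e$ alone via Lemma~\ref{lem:submeasurement-dist}, and the decomposition is a tool for upper-bounding it; the bound holds for \emph{every} valid decomposition, so it holds in particular for the one maximizing $\lambda$, which yields the tightest statement. I would make this logical direction explicit to avoid any appearance of the player ``choosing'' a favorable decomposition.
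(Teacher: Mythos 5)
Your proposal is correct and follows essentially the same route as the paper's own proof: decompose $e$ at the optimal $\lambda = \mathsf{NCF}(e)$, use linearity of the success fraction, bound $\omega^*(e^{\text{C}}) \leq 1$ trivially and $\omega^*(e^{\text{NC}}) \leq \omega$ via the correspondence between global value assignments and classical deterministic strategies (Theorem~\ref{thm:CSS-success-fraction}), and regroup. Your closing remark on why the bound holds for every valid decomposition --- so the player cannot ``choose'' a favorable one --- is a careful point the paper leaves implicit, but it does not change the argument.
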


\begin{proof}
    By decomposing $e$ into its contextual and noncontextual parts,
    \begin{equation}
        \omega^*(e) = \mathsf{NCF}(e) \omega^*(e^{\text{NC}}) + \mathsf{CF}(e) \omega^*(e^{\text{C}}) .
    \end{equation}
    Then, use $\omega^*(e^{\text{C}}) \leq 1 $ (i.e., assuming the contextual part of $e$ always computes the target function) and $\omega^*(e^{\text{NC}}) \leq \omega = 1-\nu_f / 2^d$ (by Theorem~\ref{thm:CSS-success-fraction}). This gives
    \begin{equation}
        \omega^*(e) \leq \mathsf{NCF}(e) \omega + [1 - \mathsf{NCF}(e)].
    \end{equation}
    Regrouping terms gives Eq.~\eqref{eqn:NCF-bound}.
\end{proof}

Note that if a perfect strategy that uses $e$ exists, so that $\omega^*(e)=1$, while $\omega < 1$, then the model has contextual fraction equal to one, $\mathsf{CF}(e)=1$. More generally, if the performance of the strategy that uses $e$ exceeds the classical success fraction, $\omega^*(e) > \omega$, then we can deduce that the model's contextual fraction is nonzero, $\mathsf{CF}(e) > 0$. This means that succeeding with probability $\omega^*(e) > \omega$ acts as a witness for contextuality of $e$. Finally, if the resource $e$ is entirely \emph{non}contextual, i.e., $\mathsf{NCF}(e)=1$, then it does not enable better performance than the optimal classical strategy and we have $\omega^*(e) \leq \omega$ for such noncontextual models $e$. The upper bound is compared with the performance of the Pauli strategy using a specific one-parameter family of states in Fig.~\ref{fig:ncf-vs-exact}. Note that the upper bound~\eqref{eqn:NCF-bound} is tight for the GHZ state for sufficiently small $\theta$. This is possible since the inequality $\omega^*(e^{\text{NC}}) \leq \omega$ is actually an equality for half of the possible global value assignments (recall that the target function is bent, so all classical strategies perform equally as poorly, up to a global sign). The discontinuous behavior arises from taking the maximum over noncontextual empirical models.


\section{Evaluating Walsh transforms with hypergraph states}
\label{sec:walsh-from-graphs}

We have shown that the optimal classical win probability $\omega$ is intimately connected to the contextuality of quantum states. Both since it can be interpreted as the maximum fraction of measurement contexts whose sign can be correctly described by a global value assignment, and since succeeding with probability greater than $\omega$ signals that the shared resource state must have been contextual. Here, we are concerned with introducing tools that make the evaluation of $\omega$ straightforward. We establish general relationships between properties of Boolean functions, namely their (generalized) Walsh spectrum, and (hyper)graph states. For a quadratic target function, we prove the following theorems:

\begin{theorem}
    \label{thm:walsh-from-symmetries}
    Let $f(\vec{x}) = \bigoplus_{i < j} A_{ij} x_ix_j$ be a homogeneous degree-2 Boolean function in $n$ variables. Let $\ket{C_G}$ be the graph state associated to the adjacency matrix $A \in M_n(\F{2})$. If the group $\mathcal{S}_X$ of $X$ symmetries of $\ket{C_G}$ has dimension $\dim_{\F{2}}(\mathcal{S}_X) = n_X$, the function $f$ has exactly $2^{n-n_X}$ nonzero Walsh coefficients of magnitude $\abs*{W_f}^2=2^{n+n_X}$.
\end{theorem}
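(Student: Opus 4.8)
The plan is to reduce the statement to two facts about the homogeneous quadratic form $f(\vec{x}) = \bigoplus_{i<j} A_{ij} x_i x_j$ over $\F{2}$: first, that the dimension $n_X$ of the $X$-symmetry group coincides with $\dim \ker A$; and second, that the Walsh spectrum of such a form is ``flat,'' i.e., supported with constant magnitude on an affine subspace whose size is fixed by $\operatorname{rank} A$. The object that bridges the two is the polar (associated bilinear) form $B(\vec{x}, \vec{y}) \coloneq f(\vec{x}\oplus\vec{y}) \oplus f(\vec{x}) \oplus f(\vec{y})$, which I would compute first by expanding each product $(x_i\oplus y_i)(x_j\oplus y_j)$; since $A$ is symmetric with zero diagonal, this yields the clean identity $B(\vec{x},\vec{y}) = \inner{\vec{x}}{\vec{y}A}$. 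This identity is used at every subsequent step.

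To identify $n_X$, I would write the graph state in the computational basis, $\ket{C_G} = 2^{-n/2}\sum_{\vec{x}}(-1)^{f(\vec{x})}\ket{\vec{x}}$, act with a general $X$-type operator $X^{\vec{s}}$, and relabel $\vec{x}\mapsto\vec{x}\oplus\vec{s}$ to obtain $X^{\vec{s}}\ket{C_G} = 2^{-n/2}\sum_{\vec{x}}(-1)^{f(\vec{x}\oplus\vec{s})}\ket{\vec{x}}$. Applying the polar identity $f(\vec{x}\oplus\vec{s}) = f(\vec{x})\oplus f(\vec{s})\oplus\inner{\vec{x}}{\vec{s}A}$ shows that $X^{\vec{s}}$ stabilizes $\ket{C_G}$ up to a sign precisely when the $\vec{x}$-dependent term $\inner{\vec{x}}{\vec{s}A}$ vanishes identically, i.e.\ when $\vec{s}A = \vec{0}$, in which case the sign is $(-1)^{f(\vec{s})}$. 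Hence $\mathcal{S}_X = \{(-1)^{f(\vec{s})}X^{\vec{s}} : \vec{s}\in\ker A\}$ and $n_X = \dim\ker A = n - \operatorname{rank} A$. (Equivalently, and in the spirit of this section, conjugating by $H^{\otimes n}$ turns these into $Z$-symmetries of $H^{\otimes n}\ket{C_G}$, whose amplitudes are the $W_f(\vec{y})$, giving geometric intuition for the coset structure found below.)

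For the spectrum I would evaluate the autocorrelation $\abs{W_f(\vec{y})}^2$ directly. Expanding the square and substituting $\vec{x}'=\vec{x}\oplus\vec{u}$, the polar identity for $f(\vec{x})\oplus f(\vec{x}\oplus\vec{u})$ collapses the $\vec{x}$-dependence into a single inner product, leaving
\[
    \abs{W_f(\vec{y})}^2 = \sum_{\vec{u}} (-1)^{\inner{\vec{u}}{\vec{y}}\oplus f(\vec{u})} \sum_{\vec{x}} (-1)^{\inner{\vec{x}}{\vec{u}A}}.
\]
The inner character sum is $2^n$ when $\vec{u}\in\ker A$ and $0$ otherwise, reducing the expression to $2^n\sum_{\vec{u}\in\ker A}(-1)^{\inner{\vec{u}}{\vec{y}}\oplus f(\vec{u})}$. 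The crucial observation is that $B$ vanishes on $\ker A$, so $f$ restricted to $\ker A$ is a \emph{linear} functional; hence $\vec{u}\mapsto\inner{\vec{u}}{\vec{y}}\oplus f(\vec{u})$ is linear on $\ker A$, and the remaining sum is $2^{n_X}$ when this functional vanishes on all of $\ker A$ and $0$ otherwise. This already shows every nonzero Walsh coefficient satisfies $\abs{W_f}^2 = 2^{n+n_X}$.

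Finally, to count the nonzero coefficients I would characterize the $\vec{y}$ for which the functional vanishes, namely $\inner{\vec{u}}{\vec{y}} = f(\vec{u})$ for all $\vec{u}\in\ker A$. Because $f|_{\ker A}$ is linear it extends to a linear functional on all of $\F{2}^n$, so at least one solution $\vec{y}$ exists; the full solution set is its coset of $(\ker A)^\perp$, of size $2^{\dim(\ker A)^\perp} = 2^{n-n_X} = 2^{\operatorname{rank} A}$. A Parseval check, $2^{n-n_X}\cdot 2^{n+n_X} = 2^{2n}$, confirms the count. The only genuinely delicate step is this counting argument --- guaranteeing the solution set is nonempty and is a full coset of $(\ker A)^\perp$ rather than a proper sub-coset --- and it hinges precisely on the linearity of $f|_{\ker A}$ established above; everything else is routine character-sum manipulation over $\F{2}$.
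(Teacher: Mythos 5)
Your proof is correct, and it takes a genuinely different route from the paper's. The paper never evaluates $W_f$ directly: it uses Lemma~\ref{lem:walsh-from-cluster} to encode the Walsh coefficients as $X$-basis amplitudes of $\ket{C_G}$ and then derives relations among them from the \emph{full} stabilizer group $\mathcal{S}$ --- the $X$-only stabilizers force $W_{f}(\vec{y})=0$ outside an affine solution space of dimension $n-n_X$ [Eq.~\eqref{eqn:Walsh-self-relation}], while translation by the $Z$-part $\vec{b}$ of any coset representative of $\mathcal{S}/\mathcal{S}_X$ shows that all $2^{n-n_X}$ coefficients in one orbit share a magnitude [Eq.~\eqref{eqn:Walsh-magnitude-relation}]; Parseval is then genuinely \emph{needed} to fix $\abs{W}^2=2^{n+n_X}$. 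You instead compute the autocorrelation $\abs{W_f(\vec{y})}^2$ in closed form via the polar identity $B(\vec{x},\vec{y})=\inner{\vec{x}}{\vec{y}A}$, which simultaneously yields the flat magnitude $2^{n+n_X}$ and the exact support (a coset of $(\ker A)^{\perp}$), demoting Parseval to a consistency check; this is essentially the classical flat-spectrum argument for quadratic forms over $\F{2}$, made self-contained. You also make explicit something the paper leaves implicit: the identification $n_X=\dim\ker A=n-\operatorname{rank}A$. The one step worth spelling out there is that the map $\vec{s}\mapsto(-1)^{f(\vec{s})}X^{\vec{s}}$ is a group isomorphism from $\ker A$ onto $\mathcal{S}_X$ --- this follows from precisely the linearity of $f|_{\ker A}$ you establish later, since $(-1)^{f(\vec{s})}X^{\vec{s}}(-1)^{f(\vec{t})}X^{\vec{t}}=(-1)^{f(\vec{s}\oplus\vec{t})}X^{\vec{s}\oplus\vec{t}}$ for $\vec{s},\vec{t}\in\ker A$ --- and it is what licenses $\dim_{\F{2}}\mathcal{S}_X=\dim\ker A$. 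As for what each approach buys: yours is more elementary and delivers the support coset explicitly, whereas the paper's symmetry-based relations are the version that feeds its broader program, extending conceptually to the non-Pauli stabilizers of hypergraph states and generalized Walsh coefficients in Sec.~\ref{sec:hypergraph-states} and underpinning the reduction algorithm of Theorem~\ref{thm:bell-pairs}. Everything else in your argument checks, including the delicate counting step: nonemptiness via extension of the linear functional $f|_{\ker A}$, and the full-coset structure via nondegeneracy of the standard inner product on $\F{2}^n$.
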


\begin{theorem}
    \label{thm:bell-pairs}
    Let $\ket{C_G}$ be the graph state from Theorem~\ref{thm:walsh-from-symmetries}. The state $\ket{C_G}$ can be transformed into a product of $n_x$ qubits in the $\ket{+}$ state and $(n-n_x)/2$ maximally entangled pairs using a circuit of $\CX$ gates. 
\end{theorem}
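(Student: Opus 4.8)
The plan is to pass to the binary (symplectic) description of the graph state and to recognize that a circuit of $\CX$ gates acts on the $Z$-block of the stabilizer check matrix by \emph{congruence}. Recall that $\ket{C_G}$ is stabilized by $g_i = X_i \prod_j Z_j^{A_{ij}}$, so its generators have check matrix $(I \mid A)$. Any $\CX$ circuit implements a reversible linear map $\ket{\vec{x}} \mapsto \ket{\vec{x}M}$ for some $M \in \GL{n}{\F{2}}$, under which $X^{\vec a} \mapsto X^{\vec a M}$ and $Z^{\vec b} \mapsto Z^{\vec b (M^{-1})^\tp}$. Feeding the generators through and re-choosing a generating set, the check matrix becomes $(I \mid N A N^\tp)$ with $N = M^{-1}$. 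The whole problem therefore reduces to finding $N \in \GL{n}{\F{2}}$ that brings $A$ to a normal form under congruence, $A \mapsto N A N^\tp$, whose associated graph state is manifestly a product of $\ket{+}$'s and entangled pairs.

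The key algebraic input is the classification of alternating forms over $\F{2}$. Because $A$ is the adjacency matrix of a simple graph it is symmetric with vanishing diagonal, which over $\F{2}$ is precisely the statement that the bilinear form $\vec v \mapsto \vec v A \vec v^\tp$ is \emph{alternating}. This property is preserved under congruence, and every alternating matrix over a field is congruent to a direct sum of $r$ hyperbolic blocks $\left(\begin{smallmatrix}0&1\\1&0\end{smallmatrix}\right)$ together with a zero block, where $2r = \operatorname{rank}(A)$; in particular the rank is even. I would obtain this normal form constructively by the standard symplectic Gram--Schmidt procedure (repeatedly pick a nonzero vector, pair it with a partner on which the form is nonzero, and clear the rest), which directly produces $N$. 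The counts are fixed by the rank: since the $X$ symmetries of $\ket{C_G}$ correspond to $\ker A$ (a product $\prod_i g_i^{v_i}$ is purely of $X$ type iff $\vec v A = \vec 0$), Theorem~\ref{thm:walsh-from-symmetries} gives $n_X = \dim \ker A$, hence $\operatorname{rank}(A) = n - n_X$ and $r = (n - n_X)/2$.

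It remains to realize the congruence by $\CX$ gates and to read off the output. Every $N \in \GL{n}{\F{2}}$ factorizes into elementary transvections by Gaussian elimination, and each transvection is a single $\CX$ gate, so the required linear map is a $\CX$ circuit. Applying it turns $\ket{C_G}$ into the graph state of $N A N^\tp$, which is again a bona fide graph state because congruence preserves the alternating property. Its normal-form adjacency matrix is a disjoint union of $n - 2r = n_X$ isolated vertices and $r = (n-n_X)/2$ single edges. An isolated vertex of a graph state is a qubit in $\ket{+}$, while a single edge is the two-qubit graph state $\CZ\ket{++}$, which is maximally entangled. The output is therefore $\ket{+}^{\otimes n_X}$ tensored with $(n-n_X)/2$ maximally entangled pairs, as claimed.

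The main obstacle---indeed the only point demanding genuine care---is working in characteristic two. Over $\F{2}$ the conditions ``symmetric with zero diagonal'' and ``alternating'' coincide, but ``symmetric'' alone is strictly weaker, and it is essential that congruence preserves the alternating property (equivalently, the zero diagonal) so that every intermediate and final state remains a graph state rather than a general stabilizer state. I would also verify the transpose placement in $Z^{\vec b}\mapsto Z^{\vec b (M^{-1})^\tp}$, since this is exactly what converts the $\CX$ action into the two-sided congruence $A\mapsto NAN^\tp$; the remaining manipulations are routine linear algebra.
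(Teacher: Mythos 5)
Your proof is correct in substance and is essentially the Heisenberg-picture version of the paper's Schr\"odinger-picture argument. The paper works at the level of amplitudes: a $\CX$ gate permutes computational basis states, $\CX_{ij}\ket{\vec z}=\ket{\vec z T_{j\to i\oplus j}}$, so the encoded quadratic function transforms by a linear change of coordinates, and Dickenson's theorem (implemented by the paper's Algorithm~1, a congruence reduction of the polar bilinear form) brings it to standard form $x_1x_2\oplus\cdots\oplus x_{2k-1}x_{2k}\oplus h(\vec x)$; the count $k$ is fixed by invariance of the Walsh spectrum together with Theorem~\ref{thm:walsh-from-symmetries}. You instead conjugate the stabilizer tableau: $(I\mid A)\mapsto(I\mid NAN^\tp)$ under a $\CX$ circuit, invoke the classification of alternating forms over $\F{2}$ (which is the same algebraic fact as Dickenson's theorem --- your symplectic Gram--Schmidt is the paper's Algorithm~1), and fix the counts directly via $n_X=\dim\ker A$, since $\prod_i g_i^{v_i}$ is pure-$X$ iff $\vec v A=\vec 0$. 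This last step is a nice shortcut: it bypasses the Walsh-magnitude bookkeeping and makes the identity $\operatorname{rank}A=n-n_X$ transparent. Your verification that congruence preserves the zero diagonal in characteristic two is exactly the right point to insist on, and your transpose bookkeeping $Z^{\vec b}\mapsto Z^{\vec b(M^{-1})^\tp}$ is correct.

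One caveat you should state explicitly: your check-matrix argument tracks only the symplectic part of the tableau, not the phase column, so what you actually prove is that the $\CX$ circuit maps $\ket{C_G}$ to $Z^{\vec v}\bigl(\ket{+}^{\otimes n_X}\otimes(\CZ\ket{++})^{\otimes(n-n_X)/2}\bigr)$ for some $\vec v\in\F{2}^n$, i.e.\ the normal-form graph state up to local $Z$'s. The signs are not always trivial: for the triangle graph $K_3$ one has $\vec v=(1,1,1)\in\ker A$ with $-X_1X_2X_3$ in the stabilizer group, and since $\CX$ conjugation sends $X^{\vec a}\mapsto X^{\vec a M}$ with a plus sign, the isolated qubit of the normal form ends in $\ket{-}$ rather than $\ket{+}$. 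The $Z$'s on the paired qubits are harmless (the pairs remain maximally entangled), so only the isolated vertices are affected; this is why the paper's own proof concludes with ``local-Clifford equivalent,'' and the intermediate steps there insert explicit $Z_i$ corrections to cancel the degree-1 terms generated by Eq.~\eqref{eqn:column-complementation}. Adding one sentence acknowledging this residual $Z^{\vec v}$ (or appending the corresponding single-qubit $Z$'s to your circuit) closes the gap and puts your argument at the same level of precision as the paper's.
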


Theorem~\ref{thm:bell-pairs} is proven by constructing the $\CX$ circuit using an explicit polynomial-time algorithm. It provides a recipe to identify the maximally contextual subsets of operators for generic CSS codes, which in turn allows us to determine the game that minimizes the classical win probability (within the class of CSS XOR games). This is how will be able to``extract'' the maximum contextuality out of a given CSS state. The graph-state mapping is also useful for converting the evaluation of generalized Walsh coefficients into classical partition functions, which is a result that we use extensively in Sec.~\ref{sec:examples}. 


\subsection{Graph states and quadratic Boolean functions}

A graph $G = (V, E)$ is a pair of two sets, $V$ (vertices) and $E \subseteq [V]^2$ (edges), with $[V]^2$ the set of all unordered pairs of vertices. The number of vertices, $\abs{V}$, is the \emph{order} of $G$. The adjacency matrix $A$ associated to $G$ is a $\abs{V} \times \abs{V}$ matrix with $A_{ij} = 1$ if the pair $\{ i, j \} \in E$, and $A_{ij} = 0$ otherwise. Note that we do not assume that the graph is connected, i.e., $G$ may have multiple disconnected components. Given $G$, we can define the associated \emph{graph state}:
\begin{equation}
    \ket{C_G} \coloneq \prod_{\{ i, j \} \in E(G) } \CZ_{ij} \ket{+}^{\otimes \abs{V}}
    ,
    \label{eqn:general-graph-state}
\end{equation}
where $\CZ_{ij}$ is the controlled-$Z$ gate between $i$ and $j$ with the explicit representation 
$\CZ_{ij} = (I + Z_i + Z_j - Z_iZ_j)/2$, 
and $\ket{+}$ is the $+1$ eigenstate of Pauli $X$. The state $\ket{C_G}$ is then stabilized by the operators
\begin{equation}
    S_i = X_i \prod_{j \in N(i)} Z_j
    ,
    \label{eqn:graph-state-stabilizers}
\end{equation}
where $N(i) \subset V$ is the \emph{neighborhood} of vertex $i$, defined by $N(i) = \{ j \in V \setminus \{ i \} : \{ i, j \} \in E \}$. The state $\ket{C_G}$ has amplitudes $\pm1$ in the $Z$ basis. Any such state $\ket{\phi}$ on $n$ qubits can be written as
\begin{equation}
    \ket{\phi} = \frac{1}{2^{n/2}} \sum_{\vec{z} \in \F{2}^n} (-1)^{f_\phi(\vec{z})} \ket{\vec{z}}
\end{equation}
for some Boolean function $f_\phi : \F{2}^n \to \F{2}$. To see that the state $\ket{C_G}$ is of this form, write the state $\ket{C_G}$ in the $Z$ basis and make use of the fact that, on a $Z$-basis eigenstate $\ket{\vec{z}}$, the gate $\CZ_{ij}$ acts as $\CZ_{ij}\ket{\vec{z}} = (-1)^{z_iz_j}\ket{\vec{z}}$:
\begin{equation}
    \ket{C_G} \propto
    \sum_{\vec{z} \in \F{2}^{\abs{V}}} \prod_{\{ i, j \} \in E } (-1)^{z_i z_j}\ket{\vec{z}} 
    = 
    \sum_{\vec{z} \in \F{2}^{\abs{V}}} (-1)^{\, f_G(\vec{z})} \ket{\vec{z}}.
\end{equation}
It follows that the state $\ket{C_G}$ encodes a homogeneous degree-2 Boolean function $f_G$, which can be expressed in ANF in terms of the graph $G$'s adjacency matrix by
\begin{equation}
    f_G(\vec{z}) = \bigoplus_{i < j} A(G)_{ij} \, z_i z_j
    .
    \label{eqn:graph-Boolean-function}
\end{equation}
By Eq.~\eqref{eqn:graph-Boolean-function} a graph $G$ uniquely defines a degree-2 Boolean function, which we call $f_G$. Conversely, a quadratic Boolean function uniquely defines a graph, and hence a graph state, in the following sense. Given a generic quadratic Boolean function $g(\vec{z})$, which may include terms of degree less than two, we can work with the \emph{polar bilinear form} derived from $g$, i.e., $B(\vec{z}, \vec{y}) = g(\vec{z}\oplus\vec{y}) \oplus g(\vec{z}) \oplus g(\vec{y})$, which removes the linear part of $g$ and symmetrizes the associated matrix: Given $g(\vec{z}) = \bigoplus_{i \leq j} g_{ij} z_i z_j$, the polar bilinear form gives rise to the matrix $B$ with entries $B_{ij} = g_{ij} \oplus g_{ji}$, which can be interpreted as an adjacency matrix.
Note that if $h(\vec{z})$ is an affine Boolean function, then $f_G(\vec{z})$ and $f_G(\vec{x})\oplus h(\vec{z})$ have the same Walsh spectrum, up to a sign and a relabeling of Walsh coefficients.
In the following we therefore focus on quadratic Boolean functions with no degree-1 terms. 

Now consider the Walsh spectrum of $f_G$ in~\eqref{eqn:graph-Boolean-function}. Since $(-1)^{zy} = \sqrt{2} \braket{y | H | z}$ with $H$ the Hadamard gate, we have the following Lemma:

\begin{lemma}
    \label{lem:walsh-from-cluster}
    Let $f_G(\vec{z}) = \bigoplus_{i < j} A(G)_{ij} z_i z_j$ be the degree-2 Boolean function in $n$ variables associated to the graph $G$, and let $\ket{C_G}$ be the corresponding $n$-qubit graph state. The Walsh spectrum $\{ W_{f_G}(\vec{y}) : \vec{y} \in F_2^n \}$ of $f_G$ is encoded in the $X$-basis amplitudes of the state $\ket{C_G}$, i.e.,
    \begin{equation}
        \ket{C_G} = 2^{-n} \sum_{\vec{y}\in\F{2}^n} W_{f_G}(\vec{y}) H^{\otimes n} \ket{\vec{y}},
    \end{equation}
    where $\ket{\vec{y}}$ is a $Z$-basis state and $H$ is the Hadamard gate with matrix elements $\sqrt{2}H_{ij} = (-1)^{ij}$ in the $Z$ basis.
\end{lemma}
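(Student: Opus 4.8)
The plan is to recognize that $H^{\otimes n}$ is precisely the operator implementing the Walsh–Hadamard transform on the vector of $Z$-basis amplitudes, so that the claimed identity is nothing more than the statement that $H^{\otimes n}$ is its own inverse. Rather than manipulate the sum on the right-hand side head-on, I would instead compute the $X$-basis amplitudes $\braket{\vec{y}|H^{\otimes n}|C_G}$ of the graph state and recognize them as the Walsh coefficients.

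First I would take as given the $Z$-basis form already established above, $\ket{C_G} = 2^{-n/2}\sum_{\vec{z}\in\F{2}^n}(-1)^{f_G(\vec{z})}\ket{\vec{z}}$. Next I would tensor the single-qubit relation quoted just before the lemma, $(-1)^{zy}=\sqrt{2}\braket{y|H|z}$, into its $n$-qubit form $\braket{\vec{y}|H^{\otimes n}|\vec{z}} = 2^{-n/2}(-1)^{\langle\vec{y},\vec{z}\rangle}$. Projecting the $Z$-basis expansion onto $\bra{\vec{y}}H^{\otimes n}$ then gives
\[
\braket{\vec{y}|H^{\otimes n}|C_G} = 2^{-n}\sum_{\vec{z}\in\F{2}^n}(-1)^{f_G(\vec{z})\oplus\langle\vec{y},\vec{z}\rangle},
\]
and the sum on the right is exactly $W_{f_G}(\vec{y})$ by Def.~\ref{def:Walsh-transform}. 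Hence $H^{\otimes n}\ket{C_G} = 2^{-n}\sum_{\vec{y}\in\F{2}^n}W_{f_G}(\vec{y})\ket{\vec{y}}$.

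To finish, I would apply $H^{\otimes n}$ to both sides and use $(H^{\otimes n})^2 = I$, which transfers the Hadamard onto $\ket{\vec{y}}$ on the right and yields the stated expression for $\ket{C_G}$ verbatim. There is no genuine conceptual obstacle here; the only points requiring care are bookkeeping the powers of $2$ contributed by the normalization of each Hadamard factor, and the fact that the exponents of $(-1)$ live in $\F{2}$ while the sign itself is a real number---so that $f_G(\vec{z})\oplus\langle\vec{y},\vec{z}\rangle$ and $f_G(\vec{z})+\langle\vec{y},\vec{z}\rangle$ produce the same sign and may be used interchangeably inside the exponent. An equally valid alternative would substitute the definition of $W_{f_G}$ directly into the right-hand side and collapse the resulting double sum using the character-sum orthogonality $\sum_{\vec{y}\in\F{2}^n}(-1)^{\langle\vec{w},\vec{y}\rangle}=2^n\delta_{\vec{w},\vec{0}}$; this route is slightly longer but sidesteps invoking $(H^{\otimes n})^2=I$.
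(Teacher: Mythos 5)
Your proposal is correct and matches the paper's own argument in all essentials: both identify $\braket{\vec{y}|H^{\otimes n}|C_G} = 2^{-n}W_{f_G}(\vec{y})$ by combining the $Z$-basis expansion of $\ket{C_G}$ with the factorized Hadamard matrix elements, and Def.~\ref{def:Walsh-transform}. The only cosmetic difference is that the paper inserts a resolution of identity in the $X$ basis up front while you invoke $(H^{\otimes n})^2 = I$ at the end, which amounts to the same step.
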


\begin{proof}
    The state $H^{\otimes n}\ket{\vec{y}}$ is a normalized $X$-basis product state satisfying $X_i H^{\otimes n}\ket{\vec{y}} = (-1)^{y_i} H^{\otimes n}\ket{\vec{y}}$. Using a complete basis of $X$-basis states to insert a resolution of identity, we have $\ket{C_G} = \sum_{\vec{y} \in \F{2}^n} \braket{\vec{y} | H^{\otimes n} | C_G} H^{\otimes n} \ket{\vec{y}}$. We therefore need to relate the matrix element $\braket{\vec{y} | H^{\otimes n} | C_G}$ to the Walsh spectrum.
    Using Def.~\ref{def:Walsh-transform},
    \begin{equation*}
        W_{f_G}(\vec{y}) = \sum_{\vec{z} \in \F{2}^n} (-1)^{\langle \vec{z}, \vec{y} \rangle \oplus f_G(\vec{z})} = 2^n \sum_{\vec{z} \in \F{2}^n} \braket{\vec{y} | H^{\otimes n} | \vec{z}} \braket{\vec{z} | C_G}.
    \end{equation*}
    Summing over $\vec{z} \in \F{2}^n$ gives $W_{f_G}(\vec{y}) = 2^n \braket{\vec{y} | H^{\otimes n} | C_G}$.  
\end{proof}

This allows us to compute the Walsh transform by studying the symmetry properties of the associated graph state $\ket{C_G}$. Let us denote the stabilizer group of $\ket{C_G}$ by $\mathcal{S}$ and a generic element $S_{\vec{a}\vec{b}} \in \mathcal{S}$ by $S_{\vec{a}\vec{b}} = (-1)^{\tau_{\vec{a}\vec{b}}} X^{\vec{a}} Z^{\vec{b}}$, where, e.g., $X^\vec{a}$ is shorthand notation for the product $\prod_{i=1}^{n} X_i^{a_i}$ and $\tau_{\vec{a}\vec{b}} \in \F{2}$ governs the sign obtained by pulling all $X$'s to the left of $Z$'s in products of stabilizer generators~\eqref{eqn:graph-state-stabilizers}. That is, the vectors $\vec{a}$, $\vec{b}$ respectively determine the support of Pauli $X$ and $Z$ operators in each element of the stabilizer group. We denote the order of the graph $G$ by $n$, which implies $\dim_{\F{2}}(\mathcal{S}) = n$. Applying Lemma~\ref{lem:walsh-from-cluster}, we can write Walsh coefficients as $W_{f_G}(\vec{y}) = 2^n \braket{ +_n | Z^{\vec{y}} | C_G }$. Using $S_{\vec{a}\vec{b}}\ket{C_G} = \ket{C_G}$, and anticommutation of Pauli operators, $Z^{\vec{y}} X^{\vec{a}} = (-1)^{\inner{\vec{a}}{\vec{y}}}X^{\vec{a}}Z^{\vec{y}}$, we deduce that the Walsh coefficients satisfy 
\begin{subequations}
\label{eqn:walsh-relations}
\begin{align}
     W_{f_G}(\vec{y}) &= 2^n (-1)^{\tau_{\vec{a}\vec{b}} \oplus \langle \vec{a}, \vec{y} \rangle} \braket{+_n | Z^{\vec{y}\oplus\vec{b}}  | C_G} \\
     &= (-1)^{\tau_{\vec{a}\vec{b}} \oplus \langle \vec{a}, \vec{y} \rangle}  W_{f_G}(\vec{y}\oplus \vec{b}),
\end{align}%
\end{subequations}
with $\ket{+_n} \coloneq \ket{+}^{\otimes n}$. Let $\mathcal{S}_X < \mathcal{S}$ be the proper subgroup of $X$ symmetries of $\ket{C_G}$. That is, we define $\mathcal{S}_X = \{ S_{\vec{a}\vec{b}} \in \mathcal{S} : \vec{b} = \vec{0} \}$, the subset of stabilizers with no $Z$ operators. Applying Eq.~\eqref{eqn:walsh-relations} for $S_{\vec{a}\vec{b}} \in \mathcal{S}_X$, we find
\begin{equation}
    W_{f_G}(\vec{y}) = (-1)^{\tau_{\vec{a}} \oplus \langle \vec{a}, \vec{y} \rangle}  W_{f_G}(\vec{y}) \quad \forall\: \vec{a} \text{~s.t.~} S_{\vec{a}\vec{b}} \in \mathcal{S}_X,
    \label{eqn:Walsh-self-relation}
\end{equation}
with $\tau_\vec{a} \coloneq \tau_{\vec{a}\vec{0}}$. This result implies that $\tau_{\vec{a}} \oplus \langle \vec{a}, \vec{y} \rangle = 0$ and/or $ W_{f_G}(\vec{y}) = 0$. On the other hand, for a nontrivial element of the quotient group $g_\vec{b} \in \mathcal{S}/\mathcal{S}_X$, we have 
\begin{equation}
    \abs{ W_{f_G}(\vec{y}) } = \abs{  W_{f_G}(\vec{y} \oplus \vec{b}) }.
    \label{eqn:Walsh-magnitude-relation}
\end{equation}
Denoting $\dim_{\F{2}}(\mathcal{S}_X) \eqcolon n_X$, by the rank-nullity theorem, the dimension of the space of solutions $\vec{y}$ to the equations $\tau_{\vec{a}} \oplus \langle \vec{a}, \vec{y} \rangle = 0$ for all $\vec{a}$ equals $n - n_X$.
Similarly, $\dim_{\F{2}}(\mathcal{S}/\mathcal{S}_X) = n-n_X$ since $\mathcal{S}$ is isomorphic to $\Z{2}^n$. Hence, defining an equivalence relation $\vec{y}_1 \sim \vec{y}_2$ if there exists a $\vec{b}$ such that $\vec{y}_1 = \vec{y}_2 \oplus \vec{b}$ and $S_{\vec{a}\vec{b}} \in \mathcal{S}$, there exists exactly one equivalence class associated to nonzero Walsh coefficients, all of which have the same magnitude. It follows that the $X$ symmetries of $\ket{C_G}$ are sufficient to determine which Walsh coefficients are nonzero, and their magnitude.

\begin{proof}[Proof of Theorem~\ref{thm:walsh-from-symmetries}]
    From Eq.~\eqref{eqn:Walsh-magnitude-relation}, given a single nonzero Walsh coefficient, we can generate $2^{n-n_X}$ distinct nonzero coefficients of equal magnitude. However, from Eq.~\eqref{eqn:Walsh-self-relation} there are at most $2^{n-n_X}$ nonzero Walsh coefficients. We have therefore constructed all nonzero Walsh coefficients. Parseval's theorem states that
    \begin{equation}
        \sum_{\vec{y}\in\F{2}^n} W_f(\vec{y})^2 = 2^{2n},
        \label{eqn:parseval}
    \end{equation}
    for any $f$,
    from which it follows that the magnitude $\abs{W}$ of the nonzero Walsh coefficients satisfies
    $
        2^{n-n_X} \abs{W}^2 = 2^{2n}
    $,
    and thus $\abs{W}^2 = 2^{n+n_X}$.
\end{proof}

\begin{corollary}
    If the cluster state $\ket{C_G}$ associated to a graph $G$ of order $n$ (with $n$ even) has no $X$ symmetries, the function $f_G$ is bent.
\end{corollary}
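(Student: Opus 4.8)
The plan is to specialize Theorem~\ref{thm:walsh-from-symmetries} to the case $n_X = 0$ and then read off the defining property of a bent function directly from Def.~\ref{def:bent}. First I would observe that the hypothesis that $\ket{C_G}$ has no $X$ symmetries means precisely that the subgroup $\mathcal{S}_X$ is trivial, i.e., $\dim_{\F{2}}(\mathcal{S}_X) = n_X = 0$. Everything else is already packaged into the theorem.

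Substituting $n_X = 0$ into Theorem~\ref{thm:walsh-from-symmetries}, the function $f_G$ has exactly $2^{n-0} = 2^n$ nonzero Walsh coefficients, each of magnitude $\abs{W_{f_G}}^2 = 2^{n+0} = 2^n$, so that $\abs{W_{f_G}} = 2^{n/2}$. The key point to flag is that, since $\vec{y}$ ranges over all of $\F{2}^n$, there are only $2^n$ Walsh coefficients in total; hence having $2^n$ nonzero coefficients forces \emph{every} coefficient to be nonzero, with the common magnitude $2^{n/2}$. In other words, the Walsh spectrum of $f_G$ has constant absolute value, which is exactly one of the two equivalent characterizations of a bent function listed in Def.~\ref{def:bent}. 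Equivalently, feeding $\max_{\vec{y}} \abs{W_{f_G}(\vec{y})} = 2^{n/2}$ into Eq.~\eqref{eqn:nonlinearity-from-walsh} yields $N_{f_G} = 2^{n-1} - 2^{n/2-1}$, the maximal attainable nonlinearity, which reproduces the other characterization and closes the loop.

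Because this is an immediate corollary of Theorem~\ref{thm:walsh-from-symmetries}, I do not expect any genuine obstacle; the only step deserving a moment's care is the counting argument, namely recognizing that $2^{n-n_X}$ nonzero coefficients at $n_X=0$ saturates the total count of $2^n$ and therefore leaves no room for a single vanishing coefficient, which is what promotes ``constant magnitude on the support'' to ``constant magnitude everywhere.'' The evenness of $n$ plays no structural role in the argument beyond guaranteeing that $2^{n/2}$ is an integer, consistent with the Walsh coefficients being integer-valued, and matching the parity requirement built into Def.~\ref{def:bent}.
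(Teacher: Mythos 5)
Your proof is correct and takes the same route as the paper's own (one-line) argument: set $n_X=0$ in Theorem~\ref{thm:walsh-from-symmetries}, conclude the Walsh spectrum is flat with $\abs{W_{f_G}}^2 = 2^n$, and invoke Def.~\ref{def:bent}. Your explicit counting step — that $2^{n-n_X}=2^n$ nonzero coefficients saturates the total number of coefficients, so none can vanish — is left implicit in the paper but is a worthwhile clarification, not a deviation.
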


\begin{proof}
    If $n_X = 0$, $\abs{W(\vec{y})}^2 = 2^n$ for all $\vec{y}$, i.e., the Walsh spectrum is flat, which is a defining property of bent functions (Def.~\ref{def:bent}).
\end{proof}

\begin{figure}[t]
    \centering
    \includegraphics[width=\linewidth]{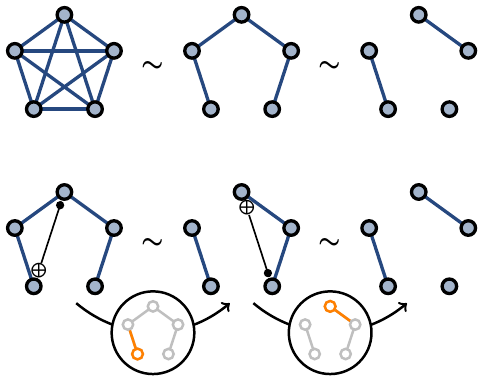}
    \caption{Illustration of graph equivalence under linear transformations. \textbf{Top row}: The left graph, the complete graph $K_{5}$, represents the Boolean function~\eqref{eqn:GHZ-all-to-all} computed in the GHZ game for a specific choice of stabilizer generators with six qubits. There exists a linear transformation that transforms $K_5$ to a linear graph [corresponding to Eq.~\eqref{eqn:ghz-function}]. Finally, there exists a sequence of linear transformations that relates the linear graph to a ladder rung graph with two rungs. \textbf{Bottom row}: Explicit elementary operations that transform the linear graph (left) to the ladder rung graph (right). The source and target qubits that are acted upon in accordance with Eq.~\eqref{eqn:CX-transform} are illustrated by the $\CX_{ij}$ gates. The neighborhood $N(j)$ of the target $j$ that enters Eq.~\eqref{eqn:graph-transform} is illustrated in orange. In each step, the connections between the control, $i$, and the neighbors ($\neq i$) of the target, $j$, are toggled.}
    \label{fig:standard-form}
\end{figure}

Finally, to prove Theorem~\ref{thm:bell-pairs}, we consider how affine equivalence of Boolean functions (Def.~\ref{def:ext-affine-equivalence}) manifests at the level of the associated graph and the corresponding graph state. Recall that two functions $f$ and $g$ are affine equivalent if $g(\vec{z}) = f(\vec{z} T \oplus\vec{b})$ for nonsingular $T$.

\begin{Def}[Standard form]
    \label{def:standard-form}
    A quadratic Boolean function $f(\vec{x})$ in $n$ variables is written in \emph{standard form} if $f(\vec{x}) = x_1x_2 \oplus x_3x_4 \oplus \dots \oplus x_{2k-1}x_{2k} \oplus h(\vec{x})$ for some $2k \leq n$, where $h(\vec{x})$ is a linear Boolean function.
\end{Def}

\begin{theorem}[Dickenson's theorem]
    \label{thm:dickenson}
    There exists a sequence of linear coordinate transformations that brings any quadratic Boolean function into standard form.
\end{theorem}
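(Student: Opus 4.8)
The plan is to recognize this statement as the classification of alternating bilinear forms over $\F{2}$ up to congruence, applied to the polar form of $f$. Given a quadratic $f$ I would first pass to its polar bilinear form $B(\vec{x},\vec{y}) = f(\vec{x}\oplus\vec{y})\oplus f(\vec{x})\oplus f(\vec{y})$, already introduced above, whose matrix $B_{ij}=g_{ij}\oplus g_{ji}$ is symmetric with zero diagonal -- i.e.\ \emph{alternating}. A direct check shows $B$ is unchanged under $f\mapsto f\oplus(\text{affine})$, so $B$ records precisely the genuine degree-2 content of $f$. I would also record how coordinate changes act: writing the transformed function as $\tilde{f}(\vec{x})=f(\vec{x}T)$ with $T\in\GL{n}{\F{2}}$, one has $B_{\tilde f}(\vec{x},\vec{y})=B_f(\vec{x}T,\vec{y}T)$, i.e.\ at the level of matrices $B\mapsto TBT^\tp$. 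Thus coordinate transformations act by \emph{congruence}, and the theorem reduces to choosing $T$ so that $TBT^\tp$ is block diagonal with symplectic blocks.

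The technical core is the normal form for alternating matrices, which I would prove by induction on $n$. If $B=0$ then $f$ has no degree-2 terms and is already affine. Otherwise pick indices with $B_{ij}=1$ and set $U=\mathrm{span}\{e_i,e_j\}$; on $U$ the form restricts to the block $\left(\begin{smallmatrix}0&1\\1&0\end{smallmatrix}\right)$, which is nondegenerate. The key fact is that nondegeneracy of $B|_U$ forces a decomposition $V=U\oplus U^\perp$, where $U^\perp=\{v:B(u,v)=0\;\forall u\in U\}$: indeed $U\cap U^\perp=0$ since $B|_U$ is nondegenerate, and the surjectivity of $v\mapsto B(\,\cdot\,,v)|_U$ gives $\dim U^\perp=n-2$. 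The restriction of $B$ to $U^\perp$ is again alternating, so induction supplies a basis of $U^\perp$ putting it in normal form; adjoining $e_i,e_j$ yields the full symplectic basis, and the corresponding change of basis is the desired $T$.

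In the new coordinates the degree-2 part of $\tilde{f}$ is then exactly $\bigoplus_{i=1}^{k} x_{2i-1}x_{2i}$, with $2k=\mathrm{rank}\,B$ (necessarily even). It remains to show the leftover $\tilde{f}\oplus\bigoplus_{i=1}^{k}x_{2i-1}x_{2i}$ is affine: this function has vanishing polar form, since both summands share the same $B$, and a quadratic function with zero polar form has no $x_ax_b$ term with $a\neq b$, so only $x_a^2=x_a$, linear, and constant terms survive, all of which are affine. Collecting these into $h(\vec{x})$ produces the standard form of Def.~\ref{def:standard-form} (absorbing the harmless constant $\tilde{f}(\vec{0})$, which contributes only a global sign to the Walsh spectrum).

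I expect the main obstacle to be making the characteristic-two subtlety airtight rather than the induction itself: over $\F{2}$ a quadratic function is \emph{not} determined by its polar form, because the diagonal $x_a^2=x_a$ terms are invisible to $B$, so one must track these separately and verify that it is precisely they, together with degree-$\le 1$ terms, that land in $h$. It is worth noting that the induction can be made fully constructive: each elementary congruence $B\mapsto TBT^\tp$ is a symmetric row/column operation, realized on the associated graph state by a $\CX$ gate that toggles edges as illustrated in Fig.~\ref{fig:standard-form}, and this is exactly the polynomial-time procedure underlying Theorem~\ref{thm:bell-pairs}.
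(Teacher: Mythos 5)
Your proof is correct and is essentially the argument the paper itself relies on: the paper delegates the proof to cited references and to the explicit reduction in Algorithm~\ref{alg:standard-form}, which carries out precisely your inductive symplectic reduction of the polar alternating form under congruence $B \mapsto T B T^{\tp}$, splitting off hyperbolic $2\times 2$ blocks via symmetric row/column operations (realized as $\CX$ gates on the graph state, as you note). Your explicit treatment of the characteristic-two subtlety --- the diagonal terms $x_a^2 = x_a$ invisible to $B$, which together with the degree-$\leq 1$ terms constitute the affine remainder $h$ of Def.~\ref{def:standard-form} --- is a welcome tightening of a point the paper glosses over, but the route is the same.
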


\begin{proof}
    See, e.g., Ref.~\cite[Appendix A]{ovsienko_real_2016}. Viewed from the perspective of graph states, it also follows from Ref.~\cite[Lemma 18]{bataille2021reduced}. The linear coordinate transformation that rewrites any quadratic Boolean function in standard form is constructed explicitly in Algorithm~\ref{alg:standard-form}. See also Ref.~\cite[Theorem 18]{TokarevaBent2015chap2}.
\end{proof}

A matrix $T \in M_n(\F{2})$ is nonsingular iff it can be connected to the identity matrix via elementary column operations. Consider the matrix $T_{j \to i \oplus j}$ (with $i \neq j$) such that $M T_{j \to i \oplus j}$ sends $M_{kl} \to M_{ki} \oplus M_{kl}$ if $l=j$ and $M_{kl} \to M_{kl}$ otherwise, i.e., adding column $i$ to column $j$ of an arbitrary matrix $M \in M_n(\F{2})$. Hence, when this matrix acts on the row vector $\vec{z}$, we obtain
\begin{equation}
    \vec{z} T_{j \to i\oplus j} = (z_1, \dots, z_i , \dots, z_j \oplus z_i, \dots, z_n).
    \label{eqn:column-operation}
\end{equation}
It follows that an elementary column operation transforms the Boolean function $f_G$ as
\begin{equation}
    f_G(\vec{z} T_{j \to i \oplus j}) = A_{ij} z_i \oplus f_G(\vec{z}) \oplus \bigoplus_{\substack{ k \in N(j) \\ k \neq i }} z_k z_i
    .
    \label{eqn:column-complementation}
\end{equation}
The final term in Eq.~\eqref{eqn:column-complementation} toggles connections between the edges $\{ i, k \}$ for $k \neq i$ in the neighborhood of $j$. This modified function is encoded by the action of $\CX$ on the graph state. Consider
\begin{equation}
    \CX_{ij} \ket{C_G} \propto \sum_{\vec{z} \in \F{2}^n} (-1)^{f_G(\vec{z})} \CX_{ij} \ket{\vec{z}},
\end{equation}
where $\CX_{ij}$ denotes the controlled-$X$ gate with $i$ the control and $j$ the target, and note that $\CX_{ij}\ket{\vec{z}} = \ket{\vec{z}T_{j\to i \oplus j}}$, with the action of $T_{j \to i \oplus j}$ given explicitly in Eq.~\eqref{eqn:column-operation}. Note that the matrix $T_{j\to i \oplus j}$ is self-inverse. Therefore, defining $\vec{y} = \vec{z} T_{j\to i\oplus j}$, we arrive at 
\begin{equation}
    \CX_{ij} \ket{C_G} \propto \sum_{\vec{y} \in \F{2}^n} (-1)^{f_G(\vec{y} T_{j \to i \oplus j})} \ket{\vec{y}}.
    \label{eqn:CX-transform}
\end{equation}
To remove the degree-1 term in Eq.~\eqref{eqn:column-complementation}, we can consider the state
\begin{equation}
    Z_i^{\mathbf{1}(i \in N(j))} \CX_{ij} \ket{C_G},
\end{equation}
which removes the negative sign on the stabilizer generator associated to vertex $i$ if $i$ and $j$ are neighbors, which is the graph state associated to the Boolean function
\begin{equation}
    f_G(\vec{z}) \oplus \bigoplus_{\substack{ k \in N(j) \\ k \neq i }} z_k z_i.
    \label{eqn:graph-transform}
\end{equation}
Thus, if the function $f_G$ contains no degree-1 terms, neither does Eq.~\eqref{eqn:graph-transform}.
Graphically, Eq.~\eqref{eqn:graph-transform} is interpreted as ``toggling'' (i.e., $1 \leftrightarrow 0$) the connections between the control, $i$, and the neighbors ($\neq i$) of the target, $j$. The above discussion is summarized by the commutative diagram
\[
\begin{tikzcd}[row sep=large, column sep=1cm]
f(\vec{x}) = \bigoplus_{i \leq j} A_{ij} x_ix_j
  \arrow[r, leftrightarrow, "{T_{j \to i \oplus j} }"{above}]
  \arrow[d, "{}"']
&
\tilde{f}(\vec{x}) = f(\vec{x} T)
  \arrow[d, "{}"]
\\
\ket{C_{G(f)}}
  \arrow[r, leftrightarrow, "{Z_i^{\mathbf{1}({i \in N(j))}} \CX_{ij}}"{above}]
&
\ket{C_{G(\tilde{f})}}
\end{tikzcd}
\]
where $G(f)$ is the graph obtained from interpreting the degree-2 part of $f$ as an adjacency matrix [see the discussion below Eq.~\eqref{eqn:graph-Boolean-function}].

\begin{proof}[Proof of Theorem~\ref{thm:bell-pairs}]
    By \hyperref[thm:dickenson]{Dickenson's theorem}
    there exists a matrix $T \in \GL{n}{\F{2}}$ such that $f(\vec{x} T)$ is in standard form.
    If the function $f$ has Walsh coefficients of magnitude $\abs{W} = 2^{n-k}$, then $f(\vec{x} T) = x_1x_2 \oplus \dots \oplus x_{2k-1} x_{2k} \oplus h(\vec{x})$ since linear transformations preserve Walsh spectra and the function $x_1x_2 \oplus \dots \oplus x_{2m-1} x_{2m}$ with $2m \leq n$ has Walsh coefficients $\in \{ 0, \pm 2^{n-m} \}$. The graph state corresponding to $f(\vec{x} T)$ is local-Clifford equivalent to $k$ maximally entangled pairs because $\CZ \ket{+}^{\otimes 2} = (I \otimes H) \ket{\Phi}$ with $\ket{\Phi}$ the 2-qubit Bell state. Since the matrix $T$ is invertible, it can always be decomposed into a product of elementary column operations, and each such column operation corresponds to the action of a $\CX$ gate. Hence, a sequence of $\CX$ gates transforms $\ket{C_G}$ into a state that is local-Clifford equivalent to $k$ Bell pairs and $n-2k$ qubits in the $\ket{+}$ state. 
\end{proof}

\begin{figure}
    \centering
    \includegraphics[width=\linewidth]{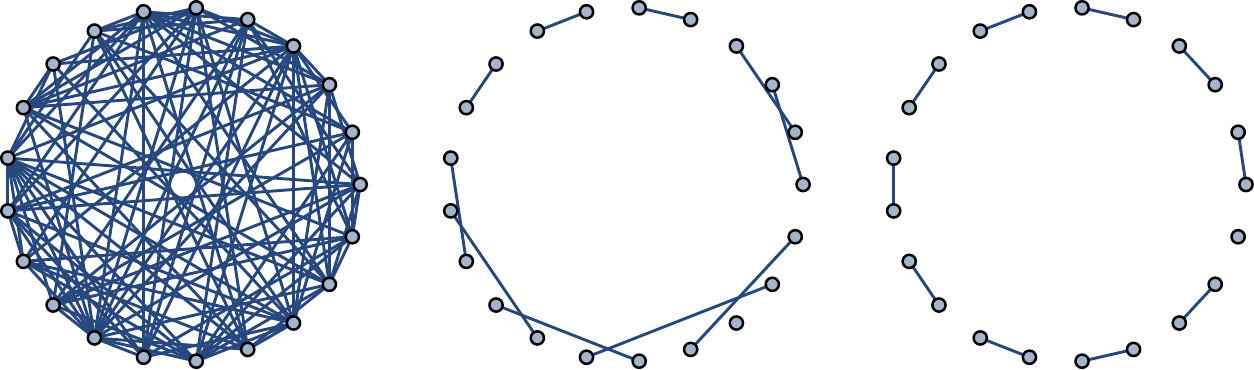}
    \caption{An illustration of Algorithm~\ref{alg:standard-form} for a generic, randomly generated adjacency matrix, constructed by taking a matrix with entries $g_{ij}$ that are iid using the uniform distribution over $\{ 0, 1 \}$, then writing $B_{ij} = g_{ij} \oplus g_{ji}$. \textbf{Left}: The initial graph, in which each vertex is connected to many other vertices. \textbf{Middle}: An intermediate reduced graph in which every vertex is connected to at most one other vertex. \textbf{Right}: The graph in standard form (Def.~\ref{def:standard-form}).}
    \label{fig:generic-reduction}
\end{figure}

\begin{Example}
    \label{ex:ghz-spectrum}
    Consider the GHZ game from Example~\ref{ex:ghz-game} with an \emph{even} number of qubits. Thus, $n=N-1$ is odd. Consider the following stabilizer generators:
    \begin{equation}
        \prod_{i=1}^{N} X_i \, ,
        \quad 
        \{Z_{i} Z_{N} : i \in \mathbb{N}, 1 \leq i < N \}
        \, ,
    \end{equation}
    and respectively associate the bits $x$ and $z_i$ to the stabilizer generators. Fixing $x=1$, the target Boolean function~\eqref{eqn:CSS-game-victory} involves all-to-all coupling
    \begin{equation}
        \label{eqn:GHZ-all-to-all}
        \tilde{f}_{\text{GHZ}}(\vec{z}) = 
        \bigoplus_{\{ i, j \}} z_i z_j \oplus \bigoplus_{i=1}^n z_i
        ,
    \end{equation}
    where the first sum runs over the set of all unordered pairs $\{i, j \}$. This function is affine equivalent to Eq.~\eqref{eqn:ghz-function}. The explicit transformation that implements the equivalence is $\tilde{f}_\text{GHZ}(\vec{z}) = f_\text{GHZ}(\vec{z}T)$ where $f_\text{GHZ}$ is the function in Eq.~\eqref{eqn:ghz-function} with $x=1$ and the matrix $T \in \GL{n}{\F{2}}$ is equal to the upper-triangular matrix $T_{ij} = 1$ if $j \geq i$ and $T_{ij} = 0$ otherwise. The sequence of linear transformations that brings the function $f_{\text{GHZ}}$ into standard form for $n=5$ is shown in the bottom row of Fig.~\ref{fig:standard-form}.
    
    We now examine the Walsh spectrum from the graph-state perspective. The graph $G$ associated to $f_\text{GHZ}$ is the linear graph, such that the auxiliary quantum state $\ket{C_G}$ is the 1D cluster state with open boundary conditions. Removing the degree-1 terms from $f_\text{GHZ}$, the stabilizer generators are $S_i = Z_{i-1}X_iZ_{i+1}$ for $2 \leq i \leq n-1$, as well as $S_1 = X_1Z_2$ and $S_n = Z_{n-1}X_n$ at the boundaries. Since $n$ is assumed odd, both $Z_2$ and $Z_{n-1}$ belong to the \emph{even} sublattice [$Z_i$ belongs to the even (odd) sublattice if $i$ is even (odd)]. Hence, there exists an $X$ symmetry of the form $X_\text{odd} = \prod_{i \text{ odd }}X_{i}$. The subgroup of $X$ symmetries is $\mathcal{S}_X \cong \Z{2}$ and there are thus $2^{n-1}$ nonzero Walsh coefficients, consistent with the standard form in Fig.~\ref{fig:standard-form}.
\end{Example}

In general, the sequence of transformations that brings a generic quadratic Boolean function into standard form can be found using an algorithm analogous to Gaussian elimination over $\F{2}$ applied to the adjacency matrix (defined by the polar bilinear form of the function). An explicit algorithm that performs this is given in Ref.~\cite[Fig.~6]{bataille2021reduced}. For completeness, we also present a slightly modified version of this algorithm in Algorithm~\ref{alg:standard-form}. We remark that the circuit of $\CX$ gates that brings a graph state into standard form necessarily has the effect of mapping the group of $X$-symmetries $\mathcal{S}_X$ to the stabilizer group of $\ket{+}^{\otimes (n-2k)}$. 

Using this algorithm, we can always find a subspace on which the function acts as a bent function. Physically, this means that, having fixed say the $X$ inputs to some specific bitstring (such that the computed function is quadratic), we can naturally find a subspace of $Z$ inputs leading to maximal Bell-inequality violations. We illustrate the algorithmic procedure for a random symmetric matrix over $\F{2}$ in Fig.~\ref{fig:generic-reduction}.


\subsection{Hypergraph states and higher-order Boolean functions}
\label{sec:hypergraph-states}

Any Boolean function can be written in ANF:
\begin{equation}
    f(\vec{z}) = \bigoplus_{I \subseteq \{ 1, \dots, n \} } a_I x^I
    .
    \label{eqn:ANF-to-hypergraph}
\end{equation}
The coefficients $a_I$ entering the ANF of $f$ can be taken to define an associated hypergraph. A $k$-uniform hypergraph $G_k = (V, E)$ is pair of two sets, $V$ (vertices) and $E \subset [V]^k$. Elements of $E$ are hyperedges that connect exactly $k$ pairwise-distinct vertices. We also use the notation $G_{\leq k}$ to denote a general hypergraph that contains hyperedges of any order $\leq k$. In this case, $G_{\leq k} = (V, E)$, where $E \subset \bigcup_{i=1}^k [V]^i$. In Eq.~\eqref{eqn:ANF-to-hypergraph}, the nonzero coefficients $a_I$ determine the hyperedges that are included in the hypergraph, i.e., $E = \{ I : a_I \neq 0 \}$, and $k$ is determined by the algebraic degree of $f$.

To each hypergraph $G_{\leq k}$, we associate a quantum state
\begin{equation}
    \ket{G_{\leq k}} \coloneq \prod_{m=1}^k \prod_{ \{i_1, \dots, i_m\} \in E } \mathsf{C}^{m}\mathsf{Z}_{ \{ i_1, \dots, i_m\} } \ket{+}^{\otimes \abs{V}}
    ,
    \label{eqn:hypergraph-state}
\end{equation}
where $\mathsf{C}^{m}\mathsf{Z}_{ \{ i_1, \dots, i_m\} }$ is diagonal in the $Z$ basis. Given a hyperedge $e \in E$, the action of this operator on a $Z$-basis state $\ket{\vec{z}}$ is 
\begin{equation}
    \mathsf{C}^{\abs{e}}\mathsf{Z}_{ e } \ket{\vec{z}} = (-1)^{\prod_{i \in e} z_i} \ket{\vec{z}}.
\end{equation}
That is, it acts as $-1$ only when all $z_i=1$ for all $i \in e$. The operator $\mathsf{C}^{1}\mathsf{Z}_i = Z_i$, while $\mathsf{C}^{2}\mathsf{Z}_{\{i,j\}} = \mathsf{CZ}_{ij}$. We also define $\mathsf{C}^{0}\mathsf{Z}_i = (-1)^i$ for convenience. Hence, Eq.~\eqref{eqn:hypergraph-state} encodes the function~\eqref{eqn:ANF-to-hypergraph} in its $Z$-basis amplitudes. The state~\eqref{eqn:hypergraph-state} is stabilized by operators
\begin{equation}
    S_i = X_i \prod_{I \in N(i)} \mathsf{C}^{\abs{I}}\mathsf{Z}_I
    \label{eqn:hypergraph-stabilizers}
\end{equation}
where the neighborhood of a vertex $i$ is defined as $N(i) = \{ I \subseteq V \setminus \{ i \} : \{i\} \cup I \in E \}$. If we are considering a hypergraph $G_{\leq k}$, then the product runs over $I$ satisfying $\abs{I} < k$. If $\{ i \} \in E$ then Eq.~\eqref{eqn:hypergraph-stabilizers} includes $I = \emptyset$, giving the sign $\mathsf{C}^{0}\mathsf{Z}_i = (-1)^i$. Note that the stabilizers in Eq.~\eqref{eqn:hypergraph-stabilizers}, $S_i \ket{G_{\leq k}} = \ket{G_{\leq k}}$, are in general no longer elements of the Pauli group.

Now we consider the case relevant to CSS games where $f : \F{2}^d \to \F{2}$ has algebraic degree three. Let $\mathscr{C}$ be the space of deterministic classical strategies of the form~\eqref{eqn:unrestricted-classical-strat}. By definition, denoting $\vec{w} = (\vec{x}, \vec{z})$, the generalized Walsh coefficient $W_f[c]$ is
\begin{equation}
    \sum_{\vec{w}} (-1)^{c(\vec{w}) \oplus f(\vec{w})} 
    = \sum_{\vec{w}} \braket{ \vec{w} | U_c (-1)^{f(\vec{w})} | \vec{w} }
    ,
\end{equation}
where we introduced $U_c$, a diagonal unitary operator with matrix elements $\braket{ \vec{w} | U_c | \vec{w} } = (-1)^{c(\vec{w})}$. Introducing the two hypergraphs $G_{\leq 3}[f]$ (associated to the target function $f$) and $G_{\leq 2}[c]$ (associated to the deterministic classical strategy $c$), we have
\begin{equation}
    2^{-d} W_f[c] = \braket{ +_d | U_c | G_{\leq 3}[f] } = \braket{ G_{\leq 2}[c] | G_{\leq 3}[f] }.
\end{equation}
The generalized Walsh coefficients for CSS games therefore reduce to the overlap between hypergraph states of orders two and three. The regular Walsh spectrum of $f$ is encoded in the overlaps $\braket{ G_{1}(\vec{y}) | G_{\leq 3}[f] }$, where $\ket{G_1(\vec{y})}$ is a 1-uniform hypergraph state, equivalent to $Z^\vec{y} \ket{+_d}$, i.e., a generic state in the $X$ basis, as in Lemma~\ref{lem:walsh-from-cluster}.
 

\section{Bounding classical strategies for prototypical quantum phases}
\label{sec:examples}

In this section we illustrate the power of the theoretical tools introduced in the previous two sections by application to several prototypical quantum states. 


\subsection{Warm up: GHZ state}

First, as a warm-up calculation, we compute the optimal classical win probability $\omega$ for the GHZ game. While the optimal probability in this case is well-known~\cite{Brassard2003multiparty,broadbent2004thesis,brassard2005recasting}, our method for deriving $\omega$ provides new insights and will be useful for later calculations.

Consider the GHZ game from Example~\ref{ex:ghz-game} with $N$ players and introduce the variable $n = N-1$. The target function~\eqref{eqn:ghz-function} with $x=1$ is expressed in ANF as
\begin{equation}
    f(\vec{z}) = \bigoplus_{i=1}^{n-1} z_i z_{i+1} \oplus \bigoplus_{i=1}^n z_i .
    \label{eqn:GHZ-function-fixed}
\end{equation}
The graph $G$ associated to this function is the line graph with $n$ vertices. Therefore, the corresponding auxiliary graph state that encodes Eq.~\eqref{eqn:GHZ-function-fixed} is the 1D cluster state with open boundaries with $n$ qubits. The case of odd $n$ was worked out in Example~\ref{ex:ghz-spectrum}. For even $n$, the cluster-state stabilizers are similarly $S_i = Z_{i-1}X_iZ_{i+1}$ for $2 \leq i \leq n-1$, as well as $S_1 = X_1Z_2$ and $S_n = Z_{n-1}X_n$ at the boundaries. However, since $n$ is even, $X_1$ and $X_n$ belong to opposite sublattices, such that there are no nontrivial $X$ symmetries. It follows from Theorem~\ref{thm:walsh-from-symmetries} that
\begin{equation}
    \max_{\vec{y} \in \F{2}^n} \abs{W_f(\vec{y})} = 2^{\lceil n / 2 \rceil} \implies N_f = 2^{n-1} - 2^{\lceil n / 2 \rceil - 1}.
\end{equation}
Thus, the target function~\eqref{eqn:GHZ-function-fixed} in the GHZ game is bent for $n$ even.
Note that, for even $n$, the target function~\eqref{eqn:GHZ-function-fixed} can be written in standard form $f(\vec{z}) = z_1z_2 \oplus z_3 z_4 \oplus \dots \oplus z_{n-1}z_n$ by working with the stabilizer generators $Z_{2k}Z_{2k+1}$ and $Z_1 Z_2 \cdots Z_{2k}$ for all $k\in \{1, \dots, n/2 \}$.
The optimal success fraction $\omega$ is therefore as small as possible:
\begin{equation}
    \omega = 1-2^{-n}N_f = 
    \frac{1}{2} \left( 1 + 2^{-\lfloor n / 2 \rfloor} \right).
    \label{eqn:GHZ-success-fraction}
\end{equation}
An interesting observation is that, since the Walsh spectrum is flat, $\abs*{W_f(\vec{y})}=\text{const.}$, all deterministic classical strategies perform equally poorly (up to an overall parity bit).
It is also useful to calculate the coefficients exactly via a transfer-matrix method -- \emph{à la} the 1D classical Ising model -- as presented in appendix~\ref{app:walsh-spectrum-ghz}, which corroborates the result~\eqref{eqn:GHZ-success-fraction}.

It is also straightforward to compute the win probability of the GHZ game with \emph{no} input restrictions. Half the time, the target function is trivial (corresponding to measuring commuting $Z$ operators only). Hence, the optimal success fraction increases to $\omega = (3 + 2^{-\lfloor n/2 \rfloor})/4$.


\subsection{1D cluster state}
\label{sec:1D-cluster-bounds}

Consider a 1D cluster state with periodic boundary conditions consisting of $N$ qubits with $N$ even. The associated CSS game uses the $N$ stabilizer generators
\begin{equation}
    Z_{2k-2} Z_{2k-1} Z_{2k} ,
    \:\:
    X_{2k-1}X_{2k}X_{2k+1} ,
    \quad 
    k = 1, \dots, N/2
    \label{eqn:cluster-state-stabilizers}
\end{equation}
associated to the odd and even sublattices, respectively. The bits associated to these stabilizer generators are collected into a single bitstring $\vec{w}$ by identifying $z_k = w_{2k-1}$ and $x_k = w_{2k}$. 
{The players on the even sublattice are handed $a_{2k} = w_{2k}$ and $b_{2k} = w_{2k-1} \oplus w_{2k+1}$, while the odd-sublattice players are handed $a_{2k+1} = w_{2k}\oplus w_{2k+2}$ and $b_{2k+1} = w_{2k+1}$.} 
Note that the stabilizers in Eq.~\eqref{eqn:cluster-state-stabilizers} are unitary equivalent to the canonical set of stabilizers $\{ Z_{i-1} X_i Z_{i+1} \}$ used to define the cluster state under the action of $U_H = \prod_{k} H_{2k-1}$ with $H$ the Hadamard matrix. As a function of $\vec{w}$, the target function is
\begin{equation}
    f_C(\vec{w}) = \bigoplus_{i=1}^N w_i w_{i+1} w_{i+2} \oplus \bigoplus_{i=1}^N w_i w_{i+1}
    ,
\end{equation}
where subscript indices belong to $\mathbb{Z}/N\mathbb{Z}$.
First, observe that fixing, e.g., $w_{2k}=1$ reduces to the GHZ function (with periodic boundary conditions)
\begin{equation}
    f_C(\vec{w}\rvert_{w_{2k}=1}) = \bigoplus_{k=1}^{N/2} w_{2k-1} w_{2k+1} \oplus \bigoplus_{k=1}^{N/2} w_{2k-1}
    .
\end{equation}
While the target function is now identical to Eq.~\eqref{eqn:GHZ-function-fixed}, the questions asked to the players are different. While this does not modify the performance of noncommunicating deterministic classical strategies, it has important consequences for ``strategies'' that correspond to classical circuits consisting of bounded fan-in gates~\cite{Daniel2022Exp}.

We now compute the optimal success fraction of the cluster-state game \emph{without} input restrictions. Deterministic classical strategies correspond to a specific family of quadratic Boolean functions
\begin{equation}
    c(\vec{w}) = u_0 \oplus \bigoplus_{i=1}^N u_i w_i \oplus \bigoplus_{i=1}^N v_i w_i w_{i+1}
\end{equation}
parametrized by $u_0\in \F{2}$, $\vec{u},\vec{v}\in \F{2}^N$ subject to the constraint $\bigoplus_i v_i = 0$. The corresponding generalized Walsh coefficient (Def.~\ref{def:generalized-Walsh}) we denote by $W_C(\vec{u}, \vec{v})$, which, using the results of Sec.~\ref{sec:walsh-from-graphs}, has an interpretation as the overlap of some auxiliary graph state with the state $\ket{+}^{\otimes N}$.
Explicitly, by definition 
\begin{equation}
    W_C(\vec{u}, \vec{v}) = \sum_{\vec{w} \in \F{2}^{N}} (-1)^{\bigoplus_{i=1}^N u_iw_i \oplus v_i w_i w_{i+1} \oplus w_{i} w_{i+1} w_{i+2} }.
    \label{eqn:cluster-state-walsh}
\end{equation}
Note that we drop the parity bit $u_0$ for convenience and that only $\vec{v}$ satisfying $\bigoplus_i v_i = 0$ correspond to implementable classical strategies. Translating Eq.~\eqref{eqn:cluster-state-walsh} into the quantum-state language of Sec.~\ref{sec:walsh-from-graphs},
\begin{equation}
    W_C(\vec{u}, \vec{v}) = 
    2^N 
    \langle + | \prod_i Z_i^{u_i} \CZ_{\{i,i+1\}}^{v_i} \CCZ_{\{i,i+1,i+2\}} | + \rangle.
    \label{eqn:cluster-state-walsh-quantum}
\end{equation}
We can either interpret this expression as the overlap of a hypergraph state $\ket{G_{\leq 3}}$ with the state $\ket{+}^{\otimes N}$ or as the overlap of a 3-uniform hypergraph state with a graph state parametrized by $(\vec{u}, \vec{v})$.
We now give Eq.~\eqref{eqn:cluster-state-walsh} a convenient graphical interpretation. Consider arranging the bits $w_i$ in two rows of size $N/2$ as
\begin{equation*}
    \begin{tikzpicture}[every path/.style={thick,draw=lightgray,opacity=0.75,line cap=round}]
        \foreach \x in {0, 1, ..., 3} {
            \draw ({\x}, 0) --++(0:1);
            \draw ({\x+1}, 0) --++(120:1);
            \draw ($({\x},0)+(60:1)$) --++(1, 0);
        }
        \foreach \x in {0, 1, ..., 4} {
            \draw ({\x}, 0) --++(60:1);
        }
        \node [below] at (2, 0) {$x_k$}; 
        \node [below] at (3, 0) {$x_{k+1}$};
        \node [below] at (1, 0) {$\cdots$};
        \node [below] at (4, 0) {$\cdots$};
        \node [above] at ($(1,0)+(60:1)$) {$z_k$}; 
        \node [above] at ($(2,0)+(60:1)$) {$z_{k+1}$}; 
        \node [above] at ($(0,0)+(60:1)$) {$\cdots$}; 
        \node [above] at ($(3,0)+(60:1)$) {$\cdots$}; 
    \end{tikzpicture}
\end{equation*}
With this spatial arrangement of bits, there exists a $\CCZ$ gate for every triangle (both ``up'' and ``down'' orientations), and a $\CZ$ gate for every edge connecting the upper and lower rows (if the corresponding $v_i$ is nonzero). The $Z$ gates act on vertices. Next, consider fixing the $\vec{x}$ bits on the bottom row. This allows us to write Eq.~\eqref{eqn:cluster-state-walsh-quantum} as
\begin{equation}
    W_C(\vec{u}, \vec{v}) = 2^n\sum_{\vec{x} \in \F{2}^{n}} \braket{ \vec{+}_n | U_{\vec{u},\vec{v}}(\vec{x})  | +_n},
    \label{eqn:generalized-Walsh-fixed}
\end{equation}
where $U_{\vec{u},\vec{v}}(\vec{x})$ is an $\vec{x}$-dependent unitary quantum circuit consisting of $Z$'s and $\CZ$'s and $n=N/2$. Fixing the $\vec{x}$ bits reduces the $\CCZ$ gates to ($Z^{ab} = Z\delta_{ab}$)
\begin{equation*}
    \begin{tikzpicture}[gray lines]
        \draw (0, 0) --++(0:1);
        \draw (1, 0) --++(120:1);
        \draw (0, 0) --++(60:1);
        \node [text=zpauli] at (60:1) {$Z^{ab}$};
        \node at (0, 0) {$a$};
        \node at (1, 0) {$b$};
    \end{tikzpicture}
    \qquad
    \begin{tikzpicture}[gray lines]
        \draw (0.5, 0) --++(120:1);
        \draw (0.5, 0) --++(60:1);
        \draw ($(0.5,0)+(120:1)$) --++(1, 0);
        \node at (0.5, 0) {$0$};
    \end{tikzpicture}
    \qquad
    \begin{tikzpicture}[gray lines]
        \draw (0.5, 0) --++(120:1);
        \draw (0.5, 0) --++(60:1);
        \draw ($(0.5,0)+(120:1)$) --++(1, 0);
        \node at (0.5, 0) {$1$};
        \coordinate (L) at ($(0.5,0)+(120:1)$);
        \coordinate (R) at ($(0.5,0)+(60:1)$);
        \coordinate (C) at ($(L)!0.5!(R)$);
        \draw [red, ultra thick, opacity=1] (L) to (R);
        \fill [red, draw=none, fill opacity=1] (L) circle [radius=2pt];
        \fill [red, draw=none, fill opacity=1] (R) circle [radius=2pt];
        \node [above, text=red, fill=none, font=\small] at (C) {$\textsf{CZ}$};
    \end{tikzpicture}
\end{equation*}
Each summand in Eq.~\eqref{eqn:generalized-Walsh-fixed} therefore reduces to the overlap of some graph state with the state $\ket{+}^{\otimes n}$. The locations of the $\CZ$ gates are determined entirely by $\vec{x}$, while the locations of the $Z$ gates depend also on $\vec{u}$, $\vec{v}$. For $\vec{u} = \vec{v} = \vec{0}$, the sum includes terms such as
\begin{equation*}
    \begin{tikzpicture}[gray lines]
        \foreach \x in {0, 1, ..., 3} {
            \draw ($({\x},0)+(60:1)$) --++(1, 0);
        }
        \foreach \x in {0, 1, ..., 4} {
            \draw ({\x}, 0) --++(60:1);
            \draw ({\x}, 0) --++(0:1);
            \draw ({\x+1}, 0) --++(120:1);
        }
        \node at (0, 0) {$0$};
        \node at (1, 0) {$1$};
        \node at (2, 0) {$0$};
        \node at (3, 0) {$1$};
        \node at (4, 0) {$1$};
        \node at (5, 0) {$0$};
        \coordinate (L1) at ($(0,0)+(60:1)$);
        \coordinate (R1) at ($(1,0)+(60:1)$);
        \coordinate (L2) at ($(2,0)+(60:1)$);
        \coordinate (R2) at ($(3,0)+(60:1)$);
        \coordinate (L3) at ($(3,0)+(60:1)$);
        \coordinate (R3) at ($(4,0)+(60:1)$);

        \foreach \n in {L1, L2, R1, R2, R3} {
            \fill [red, draw=none, fill opacity=1] (\n) circle [radius=2pt];
        }
        \foreach \L/\R in {L1/R1, L2/R2, L3/R3} {
        \draw[red,ultra thick,opacity=1]
            (\L) -- (\R)
            node[midway,above,text=red,fill=none,font=\small]{\textsf{CZ}};
        }
        \Zop{L3}
        \node [above,text=zpauli,fill=none,font=\small,inner sep=3pt] at (L3) {$Z$};
    \end{tikzpicture}
\end{equation*}
Changing $\vec{u}$ and $\vec{v}$ will modify the pattern of $Z$'s inserted on the top row.
Each summand can, in principle, be evaluated exactly using the methods presented in Sec.~\ref{sec:walsh-from-graphs}. We instead provide simple upper and lower bounds on the largest generalized Walsh coefficient~\eqref{eqn:cluster-state-walsh}.

\paragraph{Lower bound} A lower bound on $\omega$ is obtained by evaluating $W_C(\vec{u}, \vec{v})$ exactly for a specific choice of $\vec{u}$, $\vec{v}$. Consider the strategy in which $\vec{u} = \vec{v} = \vec{0}$. In this case, we are tasked with evaluating the sum
\begin{equation}
    W_C(\vec{0}, \vec{0}) = \sum_{\vec{w} \in \F{2}^{N}} (-1)^{\bigoplus_{i=1}^N w_{i} w_{i+1} w_{i+2} }.
\end{equation}
This expression can be calculated straightforwardly by representing the sum as a tensor network:
\begin{equation*}
    \begin{tikzpicture}[baseline={(0, 0.5)}, x=2.4ex, y=2.4ex]
        \clip (-1.5, -2.5) rectangle ({9.5+4}, 5);
        
        \foreach \x in {-4, 0, 4, 8, 12} {
            \begin{scope}[shift={(\x, 0)}]
                \draw[thick, domain=0:4, smooth, variable=\x, black] plot ({\x}, {1.5*(1-cos(360*\x/8))});
                \draw[line width=4, domain=0:4, smooth, variable=\x, white] plot ({\x}, {1.5*(1+cos(360*\x/8))});
                \draw[thick, domain=0:4, smooth, variable=\x, black] plot ({\x}, {1.5*(1+cos(360*\x/8))});
            \end{scope}
        }
        \foreach \x in {0, 4, 8, 12} {
            \draw [thick] ({\x}, 0) --({\x}, 3);
            \fill ({\x}, 0) circle [radius=1.5pt];
            \fill ({\x}, 3) circle [radius=1.5pt];
            \mpstensor{\x}{3}
            \node [font=\footnotesize] at ({\x}, 3) {$\mathsf{CCZ}$};
        }
        \draw [lightgray, rounded corners=5pt] ({2+1}, -0.4) rectangle ({6+1}, {3+1});
        \draw[
            thick,
            decorate,
            decoration={brace,mirror,amplitude=6pt}
        ]
            (-1.5, -0.75)
              --
            (13.5, -0.75)
            node[midway,below=5pt,font=\small] {$N$ times};
    \end{tikzpicture}
    \: = \Tr(T^N),
    \label{eqn:cluster-state-T}
\end{equation*}
where the 3-index tensor represented by the rounded square is the $\CCZ$ gate, and the black filled circle represents the 3-index Kronecker delta tensor $\delta_{ijk}$ in the $Z$ basis (equal to one if $i=j=k$ and zero otherwise). The gray box defines the transfer matrix $T$. This transfer matrix has a characteristic polynomial equal to
\begin{equation}
    (y^3 -2y - 2)y = 0,
    \label{eqn:characteristic-cubic}
\end{equation}
and the trace evaluates to $\Tr(T^{N}) = \sum_{i=1}^3 y_i^N$, where $y_i \in \mathbb{C}$ are the three nonzero solutions to the characteristic equation~\eqref{eqn:characteristic-cubic}. Denoting the root with the largest magnitude by $\lambda$, we have $\Tr(T^N) = \lambda^N[1+o(1)]$, where
\begin{equation}
    \lambda = \sum_{\sigma=\pm 1 }\frac{1}{3} \left[3 (9+\sigma \sqrt{57})
    \right]^{1/3} \approx 1.7693 .
    \label{eqn:cluster-state-lowerbound}
\end{equation}
We have verified numerically that $\vec{u} = \vec{v} = \vec{0}$ in fact corresponds to the optimal classical strategy for system sizes amenable to exact numerical enumeration of deterministic classical strategies.  

\paragraph{Upper bound} A simple upper bound on $\omega$ is obtained by applying the triangle inequality to Eq.~\eqref{eqn:generalized-Walsh-fixed}, as in Eq.~\eqref{eqn:walsh-triangle-inequality}, and bringing the maximum under the sum:
\begin{equation}
    \max_{\vec{u}, \vec{v}} \abs{W_C(\vec{u}, \vec{v})} 
    \leq 2^n\sum_{\vec{x} \in \F{2}^{n}} 
    \max_{\vec{u}, \vec{v}} \abs{  \braket{ \vec{+}_n | U_{\vec{u},\vec{v}}(\vec{x})  | +_n} }.
    \label{eqn:cluster-walsh-after-triangle}
\end{equation}
The maximization of each term under the sum can be achieved using the following Lemma.

\begin{restatable}{lemma}{zremove}
    \label{lem:z-removal}
    Let $\ket{C_\ell} = \prod_{i=1}^{\ell - 1} \CZ_{i,i+1}\ket{+_\ell}$ be the 1D cluster state with open boundaries on $\ell \geq 2$ qubits. For all $\vec{y} \in \F{2}^\ell$ the matrix element obeys the bound
    \begin{equation}
        \braket{ +_\ell | Z^{\vec{y}} | C_\ell  } \leq \braket{ +_\ell | C_\ell  }.
        \label{eqn:z-removal-obc}
    \end{equation}
    With periodic boundary conditions, define $\ket{C'_\ell} = \CZ_{\ell,1}\ket{C_\ell}$ for $\ell > 2$.
    Then, for even $\ell \geq 4$,
    \begin{equation}
        \braket{ +_\ell |  Z^{\vec{y}} | C'_{\ell} } \leq \braket{ +_\ell |  C'_{\ell} } .
        \label{eqn:z-removal-pbc}
    \end{equation}
\end{restatable}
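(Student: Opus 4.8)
\emph{The plan} is to recognize each matrix element as an ordinary Walsh coefficient of the quadratic function attached to the underlying graph, and then to read off the spectrum from Theorem~\ref{thm:walsh-from-symmetries}. Expanding in the $Z$ basis, $\ket{C_\ell} = 2^{-\ell/2}\sum_{\vec z}(-1)^{f(\vec z)}\ket{\vec z}$ with line-graph function $f(\vec z) = \bigoplus_{i=1}^{\ell-1} z_i z_{i+1}$, and $f'(\vec z) = \bigoplus_{i=1}^{\ell} z_i z_{i+1}$ (indices mod $\ell$) for the ring $\ket{C'_\ell}$. Since $\braket{+_\ell | \vec z} = 2^{-\ell/2}$ and $Z^{\vec y}\ket{\vec z} = (-1)^{\inner{\vec y}{\vec z}}\ket{\vec z}$, I would write
\[
    \braket{+_\ell | Z^{\vec y} | C_\ell} = 2^{-\ell}\sum_{\vec z \in \F{2}^\ell} (-1)^{\inner{\vec y}{\vec z} \oplus f(\vec z)} = 2^{-\ell} W_f(\vec y),
\]
which is precisely the identity underlying Lemma~\ref{lem:walsh-from-cluster}. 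Setting $\vec y = \vec 0$ gives $\braket{+_\ell | C_\ell} = 2^{-\ell} W_f(\vec 0)$, so both Eqs.~\eqref{eqn:z-removal-obc} and \eqref{eqn:z-removal-pbc} reduce to the single assertion $W_f(\vec y) \le W_f(\vec 0)$ for all $\vec y$ (and analogously for $f'$).

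The key step is that Theorem~\ref{thm:walsh-from-symmetries} forces every \emph{nonzero} Walsh coefficient of a graph function to share one common magnitude, fixed by the dimension $n_X$ of the $X$-symmetry group of the corresponding graph state. For the open line I would count these symmetries exactly as in Example~\ref{ex:ghz-spectrum}: the stabilizers $Z_{i-1}X_iZ_{i+1}$ (with boundary terms $X_1 Z_2$ and $Z_{\ell-1}X_\ell$) admit the single $X$ symmetry $X_{\mathrm{odd}}$ when $\ell$ is odd and none when $\ell$ is even, so $n_X = 1$ or $0$, respectively, and $\abs{W_f}^2 = 2^{\ell + n_X}$ yields magnitude $2^{\lceil \ell/2 \rceil}$ in both parities. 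For the even ring I would instead use its bipartiteness: the products $X_{\mathrm{even}} = \prod_{i\,\mathrm{even}} S_i$ and $X_{\mathrm{odd}} = \prod_{i\,\mathrm{odd}} S_i$ are genuine $X$ symmetries (each site carries two $Z$'s contributed by its two neighbors on the opposite sublattice, which cancel pairwise), and these generate all of $\mathcal S_X$, so $n_X = 2$ and the magnitude is $2^{\ell/2 + 1}$.

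It then remains to check that $W_f(\vec 0)$ is not merely maximal in absolute value but is the \emph{positive} representative of that magnitude, so that it dominates the signed coefficients rather than only their moduli. This is the one place a direct computation is needed: using the transfer matrix $T$ with entries $T[z,z'] = (-1)^{zz'}$ and the identity $T^2 = 2I$, one obtains $W_f(\vec 0) = \vec 1^\tp T^{\ell-1}\vec 1 = 2^{\lceil \ell/2 \rceil}$ for the line and $W_{f'}(\vec 0) = \Tr(T^\ell) = 2^{\ell/2+1}$ for the even ring, both strictly positive and equal to the common magnitude from the previous step. Chaining the inequalities, $W_f(\vec y) \le \abs{W_f(\vec y)} \le 2^{\lceil \ell/2 \rceil} = W_f(\vec 0)$, which is Eq.~\eqref{eqn:z-removal-obc}; the ring case is identical with $2^{\ell/2+1}$.

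The hard part is the symmetry bookkeeping rather than any analytic estimate: the result lives or dies on all nonzero coefficients being magnitude-degenerate, and this is exactly where the parity of $\ell$ enters. The periodic bound genuinely requires $\ell$ even so that the ring is bipartite and $n_X = 2$; for odd $\ell$ the ring is frustrated, the count of $\mathcal S_X$ changes, and $W_{f'}(\vec 0)$ need not saturate the spectrum, consistent with the stated hypothesis $\ell \ge 4$ even. I would therefore present the $\mathcal S_X$ computation carefully, after which the only remaining freedom is an overall sign, pinned down by the positivity of $W_f(\vec 0)$.
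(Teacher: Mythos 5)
Your proposal is correct and follows essentially the same route as the paper's own proof: both rest on Theorem~\ref{thm:walsh-from-symmetries} (all nonzero Walsh coefficients share one magnitude) and then reduce the lemma to checking that the $\vec{y}=\vec{0}$ matrix element is strictly positive, which the paper verifies with the same transfer matrices $\sqrt{2}H$ via Eqs.~\eqref{eqn:partition-function-final}--\eqref{eqn:partition-function-final-pbc}. Your explicit count of $n_X$ (hence of the common magnitude) is logically redundant -- positivity of $W_f(\vec{0})$ plus magnitude degeneracy already forces $W_f(\vec{0})=\abs{W}$ -- but it is a harmless and correct extra check.
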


\begin{proof}
    See appendix~\ref{app:walsh-spectrum-ghz}.
\end{proof}

Since the only effect of varying $(\vec{u} , \vec{v})$ for fixed $\vec{x}$ is to produce different patterns of $Z$ operators, each summand in Eq.~\eqref{eqn:cluster-walsh-after-triangle} can individually be upper bounded by the removal of all $Z$ operators using Lemma~\ref{lem:z-removal}. Note that we need $n$ to be even (i.e., $N$ to be an integer multiple of four) in order for Lemma~\ref{lem:z-removal} to be applicable to \emph{all} terms under the summation. This leads to
\begin{equation}
    \max_{\vec{u}, \vec{v}} \abs{W_C(\vec{u}, \vec{v})} 
    \leq
    2^n\sum_{\vec{x} \in \F{2}^{n}} 
    \braket{ \vec{+}_n | \prod_{k=1}^n \CZ_{k,k+1}^{x_k} | +_n} 
    \label{eqn:cluster-walsh-CZ-bound}
\end{equation}
The right-hand side can be evaluated by representing it as a matrix product state (MPS) with ``physical'' indices $\{ x_k \}$. Summation over $\vec{x}$ is then implemented by contracting the physical legs with the state $\ket{+}$:
\begin{equation*}    
    2^{n} \times \:
    \begin{tikzpicture}[baseline={(0, 0.4)}, x=2.4ex, y=2.4ex]
        \draw [thick] (-2, 2) -- (10, 2);
        \draw [thick] (-4, 2) -- (-4.5, 2) -- (-4.5, 2.25) -- (12.5, 2.25) -- (12.5, 2) -- (12, 2);
        \node at (-3, 2) {$\ldots$};
        \node at (11, 2) {$\ldots$};
        \foreach \x in {0, 4, 8} {
            \draw [thick] ({\x}, 0) --({\x}, 2);
            \mpstensor{\x}{2}
            \node at ({\x}, 2) {$A$};
        }
        \node [below] at (0, 0) {$+$};
        \node [below] at (4, 0) {$+$};
        \node [below] at (8, 0) {$+$};
    \end{tikzpicture}
    \label{eqn:sum-as-mps-overlap}
\end{equation*}
where the MPS tensor $A$ has components
\begin{equation*}
    \frac{1}{\sqrt{2}} A^0_{ij} = 
    \begin{tikzpicture}[baseline={(0, -0.25)}, x=2.4ex, y=2.4ex]
        \draw [thick, rounded corners=5pt] (-1, -0.75) to (-1, -0.5) to (-1, 0.5) to (-2, 0.5);
        \node [left] at (-2, 0.5) {$i$};
        \draw [thick, rounded corners=5pt] (1, -0.75) to (1, -0.5) to (1, 0.5) to (2, 0.5);
        \node [right] at (2, 0.5) {$j$};
        \node [below] at (-1, -0.75) {$+$};
        \node [below] at (1, -0.75) {$+$};
    \end{tikzpicture}
    \quad
    \frac{1}{\sqrt{2}} A^1_{ij} = 
    \begin{tikzpicture}[baseline={(0, -0.25)}, x=2.4ex, y=2.4ex]
        \draw [thick, rounded corners=5pt] (-1, -0.75) to (-1, -0.5) to (-1, 0.5) to (-2, 0.5);
        \node [left] at (-2, 0.5) {$i$};
        \draw [thick, rounded corners=5pt] (1, -0.75) to (1, -0.5) to (1, 0.5) to (2, 0.5);
        \node [right] at (2, 0.5) {$j$};
        \node [below] at (-1, -0.75) {$+$};
        \node [below] at (1, -0.75) {$+$};

        \draw [line width=1.5, red] (-1, -0.5) to (1, -0.5);
        \fill [red] (1, -0.5) circle [radius=1.5pt];
        \fill [red] (-1, -0.5) circle [radius=1.5pt];
        \node [above] at (0, -0.5) {\color{red}$\CZ$};
    \end{tikzpicture}
    \label{eqn:MPS-tensors-sum}
\end{equation*}
The leading behavior of Eq.~\eqref{eqn:cluster-walsh-CZ-bound} for large $n$ is obtained by finding the leading eigenvalue of the associated transfer matrix.
Defining $\sqrt{2}T = A^0 + A^1$, we find that
\begin{equation}
    T = \begin{pmatrix}
        1 & 1 \\
        1 & 0
    \end{pmatrix},
\end{equation}
whose leading eigenvalue is equal to the golden ratio, $\varphi$, leading to $\Tr( T^n ) \asymp \varphi^n$. 
Recall however that $n=N/2$. This means that the upper bound for the largest generalized Walsh coefficient grows as 
\begin{equation}
    \lim_{N \to \infty} \left( \max_{\vec{u}, \vec{v}} \abs{W_C(\vec{u}, \vec{v})} \right)^{1/N} \leq \sqrt{2\varphi} \approx 1.7989.
\end{equation}
The naive upper bound that combines the triangle inequality~\eqref{eqn:cluster-walsh-after-triangle} with Lemma~\ref{lem:z-removal} therefore gives a result that is within 2\% of the lower bound~\eqref{eqn:cluster-state-lowerbound}.


\subsection{2D toric code}
\label{sec:2D-toric-bounds}

Consider a 2D toric-code state with periodic boundary conditions.
We use the notation $X_e$, $Z_e$ for the Pauli $X$ and $Z$ operators on edge $e$. The toric code Hamiltonian~\cite{Kitaev2003} on an arbitrary cellulation of the torus is
\begin{equation}
    H_\text{TC} = - \sum_{v} A_v -  \sum_p B_p
    \, ,
    \label{eqn:H-toric-code}
\end{equation}
where $A_v = \prod_{e \in \delta v} Z_e$ and $B_p = \prod_{e \in \partial p} X_e$ are centered on the vertices ($v$) and plaquettes ($p$) of the lattice $\mathcal{L}$, respectively, and $\partial$ ($\delta$) is the boundary (coboundary) operator. The classical bits $\vec{x}$ are associated to $B_p$ operators and the bits $\vec{z}$ to $A_v$ operators. The question asked to the player associated to edge $e$ involves the bits 
\begin{equation}
    a_e \coloneq \bigoplus_{p \in \delta e} x_p \, ,
    \quad
    b_e \coloneq \bigoplus_{v \in \partial e} z_v
    \, .
\end{equation}
Physically, the bits $a_e$ ($b_e$) may be regarded as the domain-wall configuration of classical Ising spins $x_p$ ($z_v$) living on plaquettes (vertices), where the domain walls live on edges of the primary (dual) lattice. Since not all $B_p$ are independent, $\prod_p B_p = I$ (and similarly for $A_v$), the map from $(\vec{x}, \vec{z})$ to the pair $(\vec{a}, \vec{b})$ is neither injective nor surjective: There are exactly two $\vec{x}$ configurations for every $\vec{a}$ configuration and similarly for the map $\vec{z} \mapsto \vec{b}$. The two Ising spin configurations associated to the same domain-wall configuration are, e.g., $\vec{x} \to \vec{x} \oplus \vec{1}$. While the corresponding parity-check matrices are not full rank, this does not affect the success fraction of the optimal deterministic strategy.

Consider the function $\chi_H \coloneq (-1)^{f_H}$ with codomain $\{ +1, -1 \}$ with $f_H$ the target function. The function $\chi_H$ allows us to represent $f_H$ by a product of $\CZ$ and $\CCZ$ gates. Using the general form of the target function presented in appendix~\ref{app:computed-function}, there exists a $\CZ$ gate for every pair $(x_p, z_v)$ for which there exists an $e$ such that $v \in \partial e$ and $p \in \delta e$. Similarly, there exists a $\CCZ$ gate for every set $\{ x_p, z_v, z_{v'} \}$ for which there exists an $e$ such that $v, v' \in \partial e$ with $v\neq v'$ and $p \in \delta e$, and analogously for $\{ x_p, x_{p'}, z_v \}$. The corresponding function is more succinctly represented by introducing the so-called \emph{medial graph} $M(\mathcal{L})$ and its dual, $M^*(\mathcal{L})$. Given a graph $\mathcal{L}$, the edges of $\mathcal{L}$ correspond to vertices of $M(\mathcal{L})$. Two vertices of the medial graph are connected by an edge if the associated edges of the original graph are incident to the same vertex and face of the original graph. For example, the medial graph of the square (honeycomb) lattice is a rotated square (kagome) lattice. The dual medial graph leads to a cellulation consisting of four-sided plaquettes, as depicted in Fig.~\ref{fig:generic-TC-function}.

Equipped with the dual medial graph, the target function for a generic toric-code game on a lattice $\mathcal{L}$ can be written succinctly as
\begin{equation}
    f_{\text{TC}}(\vec{w}) =
    \bigoplus_{f} \bigoplus_{ \{ i, j, k \} \in f } w_i w_j w_k \oplus
    \bigoplus_{\langle ij \rangle} w_i w_j 
    ,
    \label{eqn:toric-code-function}
\end{equation}
where the first sum is over all sets of three indices belonging each face $f \in F[M^*(\mathcal{L})]$ of the dual medial graph, and $\langle ij \rangle \in E[M^*(\mathcal{L})]$ denotes neighboring sites.
Vertices, edges, and faces of graph $G$ are denoted by $V[G]$, $E[G]$, and $F[G]$, respectively. This function is also depicted graphically in Fig.~\ref{fig:generic-TC-function} for a generic plane graph.

\begin{figure}[t]
    \centering
    \includegraphics[width=\linewidth]{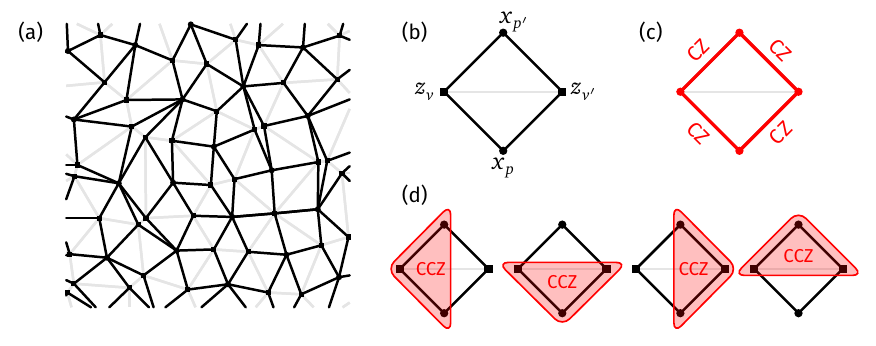}
    \caption{Graphical illustration of the function computed by the toric-code game on a lattice corresponding to a generic tesselation. \textbf{Left}: Generic tesselation $\mathcal{L}$ constructed from a Voronoi diagram (gray) and the associated dual medial graph $M^*(\mathcal{L})$ (black). \textbf{Right}: Locations of $\CZ$ and $\CCZ$ gates given the dual medial graph. For every edge of $M^*(\mathcal{L})$, there exists a $\CZ$ gate. Given a face of $M^*(\mathcal{L})$, there exists a $\CCZ$ gate for every triple of sites.}
    \label{fig:generic-TC-function}
\end{figure}


\subsubsection{Unrestricted inputs}

Deterministic classical strategies again correspond to a specific family of quadratic Boolean functions
\begin{equation}
    c(\vec{w}) = u_0 \oplus \bigoplus_{i} u_i w_i \oplus \bigoplus_{\langle i,j \rangle } v_{ij} w_i w_j
    ,
\end{equation}
parametrized by $u_0\in \F{2}$, $\vec{u}\in \F{2}^N$, and $\vec{v}\in \F{2}^{2N}$, where $i,j$ run over vertices of the dual medial graph. 
Note that not all entries $v_{ij}$ are independent. Specifically, the vector $\vec{v}$ is derived from the deterministic strategies of players on the edges of the original lattice, which are in one-to-one correspondence with faces $f$ of the dual medial graph. The entries $v_{ij}$ satisfy
\begin{equation}
    v_{ij} = \bigoplus_{f \in \delta \langle ij \rangle} v_f,
\end{equation}
where edges of the original lattice are indexed by $f$. We may therefore interpret valid $v_{ij}$ configurations (i.e., corresponding to implementable deterministic classical strategies) as domain walls that form closed loops, which live on the edges of the dual medial graph.
Combining these ingredients, the relevant generalized Walsh coefficient is given by
\begin{equation}
    W_{\text{TC}}(\vec{u}, \vec{v}) =
    \sum_{\vec{w} \in \F{2}^{N}} (-1)^{ \bigoplus_i u_iw_i + \bigoplus_{\langle ij \rangle} (1 \oplus v_{ij}) w_{i} w_{j} + \sum_{\triangle} w_{i} w_{j} w_{k} },
\end{equation}
where the sum over triangles is shorthand for the sum over triples in Eq.~\eqref{eqn:toric-code-function}.
As in Eq.~\eqref{eqn:cluster-state-walsh-quantum}, this expression can also be interpreted as a quantum matrix element where degree-$n$ terms correspond to unitary $\mathsf{C}^{n}\mathsf{Z}$ gates.

We now give the generalized Walsh coefficient a convenient graphical interpretation. Place bits $x_p$ on plaquette centers and bits $z_v$ on vertices. With this spatial bit arrangement, consider fixing the bits $\vec{x}$ on plaquettes.
This gives
\begin{equation}
    W_{\text{TC}}(\vec{u}, \vec{v}) = 2^n\sum_{\vec{x} \in \F{2}^{N-n}} \braket{ \vec{+}_n | U_{\vec{u},\vec{v}}(\vec{x})  | +_n},
    \label{eqn:generalized-Walsh-fixed-TC}
\end{equation}
for some $\vec{x}$-dependent unitary circuit consisting of $Z$'s and $\CZ$'s with $n=N_v$ the number of vertices. For the remainder of this discussion we focus our attention on the honeycomb lattice for specificity. Then, the $\CCZ$ gates, which depend only on $\vec{x}$, reduce to
\begin{equation*}
    \begin{tikzpicture}[x=2.4ex,y=2.4ex]
        \foreach \x in {0, 1, 2} {
            \draw [thick, lightgray, opacity=0.75] (0, 0) to ({90+120*\x}:{4*\s});
            \node [font=\small, gray] at ({30+120*\x}:1) {0};
        }
    \end{tikzpicture}
    \qquad
    \begin{tikzpicture}[x=2.4ex,y=2.4ex]
        \foreach \x in {0, 1, 2} {
            \draw [thick, lightgray, opacity=0.75] (0, 0) to ({90+120*\x}:{4*\s});
        }
        \node [font=\small, gray] at ({30+120*0}:1) {0};
        \node [font=\small, gray] at ({30+120*1}:1) {0};
        \node [font=\small, gray] at ({30+120*2}:1) {1};

        \draw [line width=1.25, xpauli, line cap=round] (0, 0) to ({90+120*1}:{4*\s});
        \draw [line width=1.25, xpauli, line cap=round] (0, 0) to ({90+120*2}:{4*\s});
        \fill [xpauli] ({90+120*1}:{4*\s}) circle [radius=1.5pt];
        \fill [xpauli] ({90+120*2}:{4*\s}) circle [radius=1.5pt];
        \fill [zpauli] (-2pt, -2pt) rectangle (2pt, 2pt);
    \end{tikzpicture}
    \qquad
    \begin{tikzpicture}[x=2.4ex,y=2.4ex]
        \foreach \x in {0, 1, 2} {
            \draw [thick, lightgray, opacity=0.75] (0, 0) to ({90+120*\x}:{4*\s});
        }
        \node [font=\small, gray] at ({30+120*0}:1) {1};
        \node [font=\small, gray] at ({30+120*1}:1) {1};
        \node [font=\small, gray] at ({30+120*2}:1) {0};

        \draw [line width=1.25, xpauli, line cap=round] (0, 0) to ({90+120*1}:{4*\s});
        \draw [line width=1.25, xpauli, line cap=round] (0, 0) to ({90+120*2}:{4*\s});
        \fill [xpauli] ({90+120*1}:{4*\s}) circle [radius=1.5pt];
        \fill [xpauli] ({90+120*2}:{4*\s}) circle [radius=1.5pt];
        \fill [zpauli] (-2pt, -2pt) rectangle (2pt, 2pt);
    \end{tikzpicture}
    \qquad
    \begin{tikzpicture}[x=2.4ex,y=2.4ex]
        \foreach \x in {0, 1, 2} {
            \draw [thick, lightgray, opacity=0.75] (0, 0) to ({90+120*\x}:{4*\s});
            \node [font=\small, gray] at ({30+120*\x}:1) {1};
        }
    \end{tikzpicture}
\end{equation*}
As before, the red lines denote the application of a $\CZ$ gate, while the blue square denotes the application of $Z$ operator. Thus, given some configuration $\vec{x}$ on plaquettes, the $\CZ$ gates live on the domain walls between
the bits on plaquette centers, and therefore form closed loops. 
An example is shown in Fig.~\ref{fig:loop-model-bound}. The $\CZ$ and $Z$ gates in Eq.~\eqref{eqn:generalized-Walsh-fixed-TC} give rise to additional $Z$ operators, the precise locations of which depend on $\vec{u}$, $\vec{v}$, in addition to $\vec{x}$. Each term in Eq.~\eqref{eqn:generalized-Walsh-fixed-TC} now corresponds to a graph state and can be evaluated using the methods of Sec.~\ref{sec:walsh-from-graphs}.

\begin{figure}[t]
    \centering
    \includegraphics[width=\linewidth]{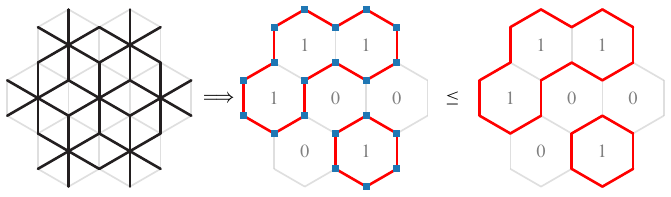}
    \caption{Mapping to a nonintersecting loop model on the honeycomb lattice. Left: the lattice whose vertices enter Eq.~\eqref{eqn:toric-code-function}. $\CZ$ gates act on every edge, and $\CCZ$ gates act on all triples of spins belonging to a face. If we fix the spins on the plaquettes of the original honeycomb lattice, then we obtain $\CZ$ ($Z$) gates on all edges (vertices) that host a domain wall. A bound on the optimal $p_\text{cl}$ may be obtained by removing all $Z$'s.}
    \label{fig:loop-model-bound}
\end{figure}

\paragraph{Upper bound} 
We proceed analogously to the 1D cluster state example. An upper bound is obtained by applying the triangle inequality to Eq.~\eqref{eqn:generalized-Walsh-fixed-TC}. The maximization of each term can then be performed by applying Lemma~\ref{lem:z-removal}. Note that closed loops on the honeycomb lattice always consist of an even number of edges, so the bound~\eqref{eqn:z-removal-pbc} can always be applied. Given a configuration $\vec{x}$ on plaquettes, removal of all $Z$'s leads us to consider
\begin{equation}
    \max_{\vec{u}, \vec{v}} \abs{W_{\text{TC}}(\vec{u}, \vec{v})} 
    \leq 2^n \sum_{\vec{x} \in \F{2}^{N-n}}
    \braket{+_{\ell_1} | C'_{\ell_1}} \braket{+_{\ell_2} | C'_{\ell_2}} \cdots
\end{equation}
where $\ell_i$ is the number of edges that participate in the $i$th closed loop and $\ket{C'_\ell}$ is the cyclic cluster state on $\ell$ qubits. The example configuration in Fig.~\ref{fig:loop-model-bound} would give two decoupled cluster states of lengths 6 and 14. The sum over $\vec{x}$ generates all domain-wall configurations.
Since $\braket{ +_\ell |  C'_{\ell} } = 2^{1-\ell/2}$, we can write
\begin{equation}
    \max_{\vec{u}, \vec{v}} \abs{W_{\text{TC}}(\vec{u}, \vec{v})} 
    \leq
    2^{1+N_v/2} \sum_{\{ \text{dw} \} } t^{N_v-\ell[\text{dw}]} n^{N[\text{dw}]}
    \label{eqn:partition-function-bound}
\end{equation}
where $\{ \text{dw} \}$ is the set of all domain-wall configurations on the honeycomb lattice, $\ell[\text{dw}]$ is the number of edges occupied by domain walls, and $N[\text{dw}]$ is equal to the number of loops. The parameters $(t, n)$ are equal to $(2^{1/2}, 2)$. The right-hand side of Eq.~\eqref{eqn:partition-function-bound} can be interpreted as the partition function of a classical, nonintersecting loop model on the honeycomb lattice, and the specific parameters $(t, n)$ place the model on a critical line, $t^2 = 2 + \sqrt{2-n}$, along which it is exactly solvable by Bethe ansatz~\cite{Baxter1986Loops,Baxter_1987}.
We have
\begin{equation}
    Z_\text{loop} \coloneq \sum_{\{ \text{dw} \} } t^{N_v-\ell[\text{dw}]} n^{N[\text{dw}]}, 
    \quad 
    \lim_{N_v \to \infty} Z_\text{loop}^{1/N_v} = W.
    \label{eqn:classical-loop-partition-function}
\end{equation}
The asymptotic growth of the partition is therefore controlled by $W$, which is expressed by
\begin{equation}
    \ln W = \frac12 \int_{-\infty}^{\infty} dx \frac{\sinh \left(\frac{\pi }{2}-\lambda \right) x \sinh \lambda  x}{x \sinh \frac{\pi  x}{2} (2 \cosh \lambda  x-1)} 
\end{equation}
The parameter $\lambda$ determines $(t, n) = (2\cos\lambda, -2\cos4\lambda)$.
Given the parameters above, we have $\lambda = \pi/4$, which allows the integral to be performed exactly.
As we show in Appendix~\ref{app:toric-code-integral}, the integral is equal to $4\ln W = I(0)$ with
\begin{align}
    I(a) &\coloneq \int_{-\infty}^{\infty} dx \,
    \frac{x}{a^2+x^2} \frac{\tanh 2\pi x }{ 2 \cosh 2 \pi x -1} \\
    &= \psi(a+\tfrac14)+\psi(a+\tfrac34) - \psi(a+\tfrac16) - \psi(a+\tfrac56) \notag
\end{align}
with $\psi(x)$ the digamma function. Plugging this result back into the bound~\eqref{eqn:partition-function-bound}, we arrive at
\begin{equation}
    \lim_{N \to \infty} \left( \max_{\vec{u}, \vec{v}} \abs{W_{\text{TC}}(\vec{u}, \vec{v})} \right)^{1/N} \leq \sqrt{3}.
\end{equation}

In the context of quantum error correction, there exists another statistical-mechanics mapping to an Ising model when computing the threshold of the toric code in the presence of iid Pauli noise~\cite{Dennis2002}. It is interesting that a different classical partition function~\eqref{eqn:classical-loop-partition-function} controls the contextuality of single-site Pauli measurements in toric-code states. It is also worthy of note that the same partition function arises in the context of strange correlators in symmetry-protected topological (SPT) states~\cite{ScaffidiGapless2017,YouStrangeCorrelator2014}.
The correspondence with strange correlators can be traced back to the hypergraph-state representation of Walsh coefficients (Sec.~\ref{sec:hypergraph-states}). Generalized Walsh coefficients correspond to the overlap of a hypergraph state, which can be interpreted as an SPT wave function, with lower-order graph states. Hence, $Z_{\Psi_{\text{SPT}}} = \braket{\Psi_{\text{Trivial}} | \Psi_{\text{SPT}}}$ is related to the generalized Walsh spectrum. 


\subsubsection{Restricting inputs}

Consider now the toric code state on an $L \times L$ square lattice with degrees of freedom on the edges with $L$ even.
Since the lattice is bipartite, it is possible to fix $\vec{a}=\vec{1}$ by choosing an antiferromagnetic configuration for the $\vec{x}$ inputs (recall that $\vec{a}$ may be interpreted as the domain-wall configuration associated to $\vec{x}$). From Eq.~\eqref{eqn:toric-code-function}, the target function in this case is
\begin{equation}
    f_\text{TC}(\vec{z}) = \bigoplus_{\langle v v' \rangle} z_{v} z_{v'}
    ,
    \label{eqn:restricted-TC-function}
\end{equation}
which is a natural generalization of the target function in the GHZ game to two dimensions. Consider the Walsh spectrum of Eq.~\eqref{eqn:restricted-TC-function}.
Applying Theorem~\ref{thm:walsh-from-symmetries}, the auxiliary graph state is now the 2D cluster state, which possesses the following $X$ symmetries:
\begin{equation}
    \prod_{i} X_{i, k+i} ,\: \prod_{i} X_{i, k-i},
    \quad 
    \prod_{i+j \text{ even}} X_{i,j}, \: \prod_{i+j \text{ odd}} X_{i,j}.
\end{equation}
The first two are subsystem symmetries along diagonal lines (indexed by the offset $k$), while the latter two are global symmetries associated to the even and odd sublattices. Note that $\prod_{k,i}X_{i,k+i}= \prod_{k,i}X_{i,k-i} =\prod_{i,j}X_{i,j}$, such that there are only $2L$ \emph{independent} $X$ symmetries.
The nonzero Walsh coefficients therefore have magnitude $\abs*{W}^2=2^{L^2 + 2L}$.

This result may alternatively be understood via a mapping to the partition function of a plaquette Ising model.
Using Lemma~\ref{lem:walsh-from-cluster}, we write 
\begin{equation}
    W_{f_\text{TC}}(\vec{0}) = 2^{L^2}
    \begin{tikzpicture}[scale=1,baseline={(0, 1.1)},x=2.4ex,y=2.4ex]
        \pgfmathsetmacro{\dx}{2};
        \foreach \j in {0,1,2,3} {
            \draw [thick] (0, \dx*\j) -- ++(-\dx/3, 0);
            \draw [thick] (3*\dx, \dx*\j) -- ++(\dx/3, 0);
            \draw [thick] (\dx*\j, 0) -- ++(0, -\dx/3);
            \draw [thick] (\dx*\j, 3*\dx) -- ++(0, +\dx/3);
        }
        \foreach \i in {0,1,2,3} {
            \foreach \j in {0,1,2,3} {
                \ifnum \i<3
                    \draw [thick] (\dx*\i,\dx*\j) --++ (\dx,0);
                    \node[fill=white, inner sep=1pt] at ({\dx*\i+\dx*0.5},\dx*\j) {\small $H$};
                \fi
                \ifnum \j<3
                    \draw [thick] (\dx*\i,\dx*\j) --++ (0,\dx);
                    \node[fill=white, inner sep=1pt] at (\dx*\i,{\dx*\j+\dx*0.5}) {\small $H$};
                \fi
                \draw[thick, fill=black] (\dx*\i,\dx*\j,0) circle (1.25pt);
            }
        }
    \end{tikzpicture}
    \label{eqn:TC-fixed-walsh}
\end{equation}
where the filled black circles represent the delta tensor in the $Z$ basis, $\delta_{ijkl}=\delta_{ij}\delta_{jk}\delta_{kl}$.
Suppose now that we fix the bits on the even sublattice. To evaluate Eq.~\eqref{eqn:TC-fixed-walsh}, it suffices to consider an elementary unit
\begin{equation}
    \begin{tikzpicture}[scale=1,baseline={(0, -0.2)},x=2.4ex,y=2.4ex]
        \pgfmathsetmacro{\dx}{2};
        
        \draw [thick] (-\dx,0) --++ (2*\dx,0);
        \node[fill=white, inner sep=1pt] at ({-\dx/2}, 0) {\small $H$};
        \node[fill=white, inner sep=1pt] at ({\dx/2}, 0) {\small $H$};
            
        \draw [thick] (0,-\dx) --++ (0,2*\dx);
        \node[fill=white, inner sep=1pt] at (0, \dx/2) {\small $H$};
        \node[fill=white, inner sep=1pt] at (0, -\dx/2) {\small $H$};

        \draw[thick, fill=black] (0, 0) circle (1.25pt);

        \node [above] at (0, \dx)  {$i$};
        \node [right] at (\dx, 0)  {$j$};
        \node [below] at (0, -\dx) {$k$};
        \node [left] at (-\dx, 0)  {$l$};
    \end{tikzpicture}
    = \frac{1}{2}
    \braket{ + | Z^{i+j+k+l} | + },
    \label{eqn:TC-fixed-constraint}
\end{equation}
which vanishes unless $i \oplus j \oplus k \oplus l = 0$. Thus, summing over the bits on the even sublattice, Eq.~\eqref{eqn:TC-fixed-constraint} implements a local constraint on bit configurations. Let $\vec{b}$ be the bitstring associated to $\vec{z}$ bits on the even sublattice.
\begin{align}
    W_{f_\text{TC}}(\vec{0}) &= 2^{L^2/2} \sum_{ \vec{b} } \prod_{v} \delta\left( \bigoplus_{v' \in N(v)} b_{v'} \right) \\
    &\eqcolon 2^{L^2/2} Z_\square(L^2/2)
    ,
\end{align}
where $N(v)$ denotes the neighborhood of vertex $v$ on the larger rotated square lattice formed by even vertices of the original square lattice. The object $Z_\square(M)$ is the zero-temperature partition function of the plaquette Ising model, also known as the Xu-Moore model~\cite{XuMoore1, XuMoore2}, on a square lattice with periodic boundaries and $M$ sites, which counts the number of ground states of the model. With $L^2/2$ sites, the number of ground states is equal to $2^L$, such that the nonzero Walsh coefficients have magnitude $\abs{W} = 2^{L^2/2 + L}$ (recall that all nonzero Walsh coefficients have the same magnitude).


\section{Beyond XOR games}
\label{sec:submeasurement}


\subsection{Pauli submeasurement games}
\label{sec:submeasurement-game}

As noted in Sec.~\ref{sec:quantum-strats}, the games considered thus far possess an alternative strategy in which players share a GHZ state and employ a modified Pauli strategy. Here, we consider a slight generalization of the win conditions that prevent GHZ states from providing a perfect quantum strategy. The game is based on the concept of \emph{Pauli submeasurements} (see Refs.~\cite{Meyer2023submeasurement, Daniel2022Exp}). Intuitively, while XOR games constrain only the sum of all players' answers, there can be finer structure when the players collectively measure stabilizers that factorize into products of stabilizers supported on different regions.

\begin{Def}[Pauli substrings]
    Given a set of sites $\Lambda$, the set of \emph{Pauli substrings} of a Pauli string $P = \prod_{i \in \Lambda} P_i$, with each $P_i \in \{ I, X, Y, Z \}$, is given by 
    \begin{equation}
        \sub(P) = \left\{ 
            \prod_{i \in K} P_i : K \subseteq \Lambda 
        \right\}.
    \end{equation}
\end{Def}

\begin{Def}[CSS submeasurement game]
    \label{def:submeasurement-game}
    Associate a player to every qubit of a CSS code on $N$ qubits defined by the pair of {full-rank} parity-check matrices $H = (H_X, H_Z)$ with stabilizer group $\mathcal{S}$. Player $i$ is handed an ordered pair of bits $(a_i, b_i) \in \F{2}^2$, with $\vec{a} \in I_X \subseteq \Image H_X^\tp$ and $\vec{b} \in I_Z \subseteq \Image H_Z^\tp$, where neither $I_X$ or $I_Z$ is equal to the empty set. The inputs are collectively denoted by the pair $I = (I_X, I_Z)$. Once the players have been handed their bits, any classical communication is forbidden. Every player outputs a bit, $y_i(a_i, b_i)$. 
    Collectively, the players win the game if, for all $P$ such that
    $P \in \sub(\prod_{i \in \Lambda} P_i(a_i, b_i))$ and $P \in \pm \mathcal{S}$,
    \begin{equation}
        \sum_{i\in \Lambda_P} y_i \equiv \frac12 \sum_{i \in \Lambda_P} a_i b_i 
        \quad {(\text{mod 2})}
        \, ,
        \label{eqn:submeasurement-game-victory}
    \end{equation}
    where $\Lambda_P = \supp(P)$ is the set of sites on which $P$ acts nontrivially. This defines the game $\mathsf{SUB}(H | I)$. If the inputs are unrestricted, we refer to this game as $\mathsf{SUB}(H)$.
\end{Def}

By definition, there exists a constraint for each Pauli substring that belongs to the stabilizer group (up to a sign). As a simple example, consider the four-qubit GHZ state and the Pauli string $ZZZZ$ (shorthand for $Z \otimes Z \otimes Z \otimes Z$). Pauli substrings are found by replacing operators with identities. The Pauli substrings $ZZII$, $IZZI$, $IIZZ$ belong to $\mathcal{S}$, whereas, e.g., $ZIII$ does not. Conversely, given the Pauli string $YYXX \in -\mathcal{S}$, there are no nontrivial substrings (i.e., $\notin \{ IIII, YYXX \}$) that belong to $\pm \mathcal{S}$. These examples also highlight that not all constraints~\eqref{eqn:submeasurement-game-victory} are necessarily independent.

Note that the definition could be further generalized such that the original Pauli string $\prod_i P_i$ does not belong to the stabilizer group, but includes nontrivial Pauli substrings that do. We do not require this more general definition for our purposes.


\subsection{Quantum strategy}

To prove that the Pauli strategy from Prop.~\ref{prop:perfect-quantum-strat} remains a perfect quantum strategy, we need a generalization of Lemma~\ref{lem:submeasurement-dist} that incorporates multiple constraints.

\begin{restatable}{lemma}{measDistribution}
    \label{lem:outcome-distribution-general}
    Let $\Lambda$ be the set of all lattice sites and consider a set of subsets of $\Lambda$, $C = \{ I_1, I_2, \dots \}$ with each $I_k \subseteq \Lambda$ such that the sets do not overlap $\bigcap_{k=1}^{\abs{C}} I_k = \emptyset$.
    Suppose that the operator $O_i$, satisfying $O_i^2 = I$, is measured on site $i$ for all $i \in \Lambda$, producing the measurement outcome $(-1)^{s_i}$. If the system is in the state $\ket{\psi}$ then
    \begin{equation}
        \Pr\left(\bigwedge_{k=1}^{\abs{C}} \bigoplus_{i \in I_k} s_i = f_k\right)  =
        \braket{\psi | \prod_{k=1}^{\abs{C}} 
        \frac{1}{2} 
        \left( 1 + (-1)^{f_k} \prod_{i \in I_k}  O_i \right)   | \psi}
        \label{eqn:general-outcome-dist-nonoverlap}
    \end{equation}
    with each $f_k \in \F{2}$. If the sets are no longer forbidden from intersecting, the most general expression is
    \begin{equation}
        \Pr =
        \frac{1}{2^{\abs{C}}} \sum_{\{r_c \}} (-1)^{\bigoplus_{c \in C} r_c f_c} \langle \psi | \prod_{i \in \Lambda} O_i^{\bigoplus_{c \ni i} r_c} | \psi \rangle
        \label{eqn:general-outcome-dist}
    \end{equation}
    with $c \ni i$ the constraints $c \in C$ that contain site $i$.
\end{restatable}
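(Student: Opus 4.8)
The plan is to establish the fully general formula~\eqref{eqn:general-outcome-dist} first and then recover the factorized expression~\eqref{eqn:general-outcome-dist-nonoverlap} as the special case in which the sets are non-overlapping (pairwise disjoint). The starting point is the joint distribution of the \emph{complete} outcome string $\vec{s} = (s_i)_{i \in \Lambda}$. Since each $O_i$ acts on a distinct site, the operators mutually commute, and the spectral projector onto outcome $s_i$ on site $i$ is $P_i^{s_i} = \tfrac12(I + (-1)^{s_i} O_i)$, which is legitimate because $O_i^2 = I$. Commutativity across sites implies that the joint probability of a fixed string factorizes as $\Pr(\vec{s}) = \braket{\psi | \prod_{i \in \Lambda} P_i^{s_i} | \psi}$, the standard Born rule for a collection of commuting dichotomic measurements.

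Next I would express the event $\bigwedge_{c \in C}(\bigoplus_{i \in I_c} s_i = f_c)$ through its $\Z{2}$-character (Fourier) expansion: each constraint contributes an indicator $\tfrac12 \sum_{r_c \in \F{2}} (-1)^{r_c(f_c \oplus \bigoplus_{i \in I_c} s_i)}$, so the joint indicator is $2^{-\abs{C}} \sum_{\{r_c\}} (-1)^{\bigoplus_c r_c f_c} \prod_i (-1)^{s_i \bigoplus_{c \ni i} r_c}$, where I have transposed the double parity sum so that each site $i$ collects the parity $\bigoplus_{c \ni i} r_c$ of the constraints containing it. Summing $\Pr(\vec{s})$ against this indicator and exchanging the (finite) sums over $\vec{s}$ and $\{r_c\}$, the inner sum over $\vec{s}$ decouples across sites. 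The single-site identity $\sum_{s_i \in \F{2}} (-1)^{s_i t_i} P_i^{s_i} = O_i^{t_i}$ — equal to $I$ for $t_i = 0$ and to $O_i$ for $t_i = 1$, using $O_i^2 = I$ — then collapses the site sum to $\braket{\psi | \prod_i O_i^{\bigoplus_{c \ni i} r_c} | \psi}$, yielding Eq.~\eqref{eqn:general-outcome-dist} directly.

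Finally, for the non-overlapping case I would specialize. When the sets are pairwise disjoint, each site lies in at most one constraint, so the parity $\bigoplus_{c \ni i} r_c$ reduces to the single bit $r_{c(i)}$, and the operator string factorizes as $\prod_i O_i^{\bigoplus_{c \ni i} r_c} = \prod_c (\prod_{i \in I_c} O_i)^{r_c}$. The factors $\prod_{i \in I_c} O_i$ have disjoint supports and hence commute, so the sum over each $r_c$ may be carried out independently inside the expectation, reassembling $\tfrac12(I + (-1)^{f_c} \prod_{i \in I_c} O_i)$ for every $c$ and giving the product form~\eqref{eqn:general-outcome-dist-nonoverlap}; setting $\abs{C} = 1$ recovers Lemma~\ref{lem:submeasurement-dist}. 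I expect the only genuine subtlety to be bookkeeping rather than conceptual: correctly transposing the double XOR sum $\bigoplus_c r_c \bigoplus_{i \in I_c} s_i$ into $\bigoplus_i s_i \bigoplus_{c \ni i} r_c$, and being careful that the factorization in the disjoint case relies on \emph{both} the disjoint supports (for commutativity) and the involution property (so that the powers are well defined modulo $2$). The commutativity of all $O_i$, which underlies the very existence of the joint distribution $\Pr(\vec{s})$, should be stated explicitly at the outset.
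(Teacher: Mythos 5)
Your proposal is correct and follows essentially the same route as the paper's proof: the Born-rule factorization over commuting single-site projectors $\tfrac12(I+(-1)^{s_i}O_i)$, the $\Z{2}$ character expansion of each parity constraint with auxiliary bits $r_c$, the single-site resummation identity $\sum_{s_i}(-1)^{s_i t_i}\,\pi_{s_i}=O_i^{t_i}$, and then specialization to the disjoint case by carrying out the $r_c$ sums to reassemble the projectors. The only difference is presentational (you prove the general formula first and state the commutativity hypothesis explicitly, which the paper leaves implicit), and your reading of the non-overlap condition as pairwise disjointness matches how the paper's own proof uses it.
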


\begin{proof}
    See Appendix~\ref{app:outcome-dist-proof}.
\end{proof}

Working with the most general expression Eq.~\eqref{eqn:general-outcome-dist}, consider the Pauli strategy in which the operator $P_i(a_i, b_i)$ is measured on site $i$, where the set $C$ is determined by Eq.~\eqref{eqn:submeasurement-game-victory}.
We have 
\begin{equation}
    \prod_{i \in \Lambda} P_i(a_i, b_i)^{\bigoplus_{c \ni i} r_c} = (-1)^{\frac12 \sum_c r_c \sum_{i \in c} a_i b_i} \prod_{i \in \Lambda} (X^{a_i} Z^{b_i} )^{\bigoplus_{c \ni i} r_c}
\end{equation}
where the operator on the right-hand side belongs to the stabilizer group by definition of the constraints.
Hence, the probability of winning for a specific query is
\begin{equation}
    \Pr\left( \bigwedge_{c \in C} \bigoplus_{i \in c} s_i = f_c \right) = \frac{1}{2^{\abs{C}}} \sum_{\{r_c \}} (-1)^{\sum_{c \in C} r_c (f_c + \frac12 \sum_{i \in c} a_i b_i ) } 
\end{equation}
which evaluates to a product of Kronecker deltas enforcing that $\bigoplus_{i \in c} s_i \equiv \frac12 \sum_{i \in c} a_i b_i$ (mod 2) for all $c \in C$. Hence, the strategy in which players answer with their measurement outcome $y_i = s_i$ is perfect.


\subsection{Rigidity of submeasurement games}

Here, we prove that submeasurement games can be won with certainty \emph{only} by the corresponding CSS codeword, up to local isometries, for the specific case of the 2D toric code on the square lattice.

Consider $N$ players attempting to win the submeasurement game $\mathsf{SUB}(H)$ for parity-check matrices $H$ defining the 2D toric code (see Sec.~\ref{sec:2D-toric-bounds}). In a general quantum strategy, the players share a quantum state $\ket{\psi}$ and the player on edge $e$ measures the Hermitian operators
\begin{equation}
    \begin{array}{cc}
         E_e & \text{if } \: (a_e, b_e) = (0, 0)  \\
         A_e & \text{if } \: (a_e, b_e) = (0, 1)  \\
         B_e & \text{if } \: (a_e, b_e) = (1, 0)  \\
         C_e & \text{if } \: (a_e, b_e) = (1, 1)
    \end{array}
\end{equation}
thereby producing an outcome $\in \{+1, -1 \}$. We assume that these operators are $(\pm 1)$-valued, satisfying $O_e^2 = I$ for all operators $O_e \in \{ E_e, A_e, B_e, C_e \}$, and are thus unitary. 
Note that for every stabilizer generator $S \in \mathcal{S}$ there exists a corresponding query $(\vec{a}, \vec{0})$ or $(\vec{0}, \vec{b})$. Since, by assumption, the strategy succeeds with certainty, it must win with $\Pr=1$ for every query.
Thus, for $Z$-type and $X$-type stabilizer generators, we respectively have that
\begin{align}
    \braket{\psi | \prod_{e \in \partial^\dagger v} A_e | \psi} &= 1 \implies \prod_{e \in \partial^\dagger v} A_e \ket{\psi} = \ket{\psi} \\
    \braket{\psi | \prod_{e \in \partial p} B_e | \psi} &= 1 \implies \, \prod_{e \in \partial p} B_e \ket{\psi} = \ket{\psi}
    \label{eqn:self-test-stabilizer-constraint}
\end{align}
where the second equalities follow from the fact that $\norm{O_e \ket{\psi}} = 1$. Note that success at the game $\mathsf{CSS}(H)$ would have constrained the sum of \emph{all} measurement outcomes rather than only the outcomes contributing to the support of the stabilizer generators.

\emph{Anticommutation}.---We now prove anticommutativity of the operators, $\{ A_e, C_e \} \ket{\psi} = \{ B_e, C_e \} \ket{\psi}=0$. Consider the following set of operators:
\begin{subequations}
\begin{align}
    S_{12} &\coloneqq
    \begin{tikzpicture}[x=2.4ex,y=2.4ex,baseline={(0,0.1)}]
    \draw [line width=1.25, zpauli, line join=round] ({-1+1.5}, {-1+1.5}) rectangle ({1+1.5}, {1+1.5});
    \draw [line width=1.25, xpauli, line join=round] ({-1+0.5}, {-1+0.5}) rectangle ({1+0.5}, {1+0.5});
    \draw [line width=1, fill=white] (+1.5, 0.5) circle [radius=2pt];
    \draw [line width=1, fill=white] (0.5, 1.5) circle [radius=2pt];
    \end{tikzpicture}
    \qquad\qquad
    S_{23} \coloneqq
    \begin{tikzpicture}[x=2.4ex,y=2.4ex,baseline={(0,0.1)}]
    \draw [line width=1.25, zpauli, line join=round] ({-1+1.5}, {-1-0.5}) rectangle ({1+1.5}, {1-0.5});
    \draw [line width=1.25, xpauli, line join=round] ({-1+0.5}, {-1+0.5}) rectangle ({1+0.5}, {1+0.5});
    \draw [line width=1, fill=white] (1.5, 0.5) circle [radius=2pt];
    \draw [line width=1, fill=white] (0.5, -0.5) circle [radius=2pt];
    \end{tikzpicture} \\[0.1cm]
    S_{31} &\coloneqq
    \begin{tikzpicture}[x=2.4ex,y=2.4ex,baseline={(0,0.1)}]
    \draw [line width=1.25, zpauli, line join=round] ({-1+1.5}, {-1-0.5}) rectangle ({1+1.5}, {1+1.5});
    \draw [line width=1.25, xpauli, line join=round] ({-1+0.5}, {-1+0.5}) rectangle ({1+0.5}, {1+0.5});
    \draw [line width=1, fill=white] (0.5, 1.5) circle [radius=2pt];
    \draw [line width=1, fill=white] (0.5, -0.5) circle [radius=2pt];
    \end{tikzpicture}
    \qquad\qquad
    S_{0} \coloneqq
    \begin{tikzpicture}[x=2.4ex,y=2.4ex,baseline={(0,0.1)}]
    \draw [line width=1.25, xpauli, line join=round] ({-1+0.5}, {-1+0.5}) rectangle ({1+0.5}, {1+0.5});
    \end{tikzpicture}
\end{align}%
\label{eqn:GHZ-embedding}%
\end{subequations}
where red edges live on the primary lattice and denote $A_e$, blue edges live on the dual (square) lattice and denote $B_e$, and white circles represent $C_e$. The win conditions ensure that these operators obey
\begin{equation}
    (S_{12} + S_{23} + S_{31} - S_0)^2 \ket{\psi} = 16\ket{\psi} 
    \label{eqn:self-testing-constraint}
\end{equation}
since they satisfy $S_{12} \ket{\psi} = S_{23} \ket{\psi} = S_{31} \ket{\psi} = -\ket{\psi}$, whereas $S_0 \ket{\psi} = \ket{\psi}$. As in Eq.~\eqref{eqn:self-test-stabilizer-constraint}, the operators $S_{ab}$ are supported locally as a result of Eq.~\eqref{eqn:general-outcome-dist-nonoverlap}.
Expanding the square in Eq.~\eqref{eqn:self-testing-constraint}, and using unitarity (which follows from $O_e^2=I$ on every edge) we arrive at
\begin{subequations}
\begin{gather}
    (T_{23} + T_{31} + T_{12}) \ket{\psi} = 12 \ket{\psi} \label{eqn:self-testing-constraint-expanded} \\
    T_{ab} \coloneq \{S_{ca}, S_{bc}\} - \{S_{ab}, S_0\} \quad c\neq a,b
\end{gather}%
\end{subequations}
Consider the operators that constitute each $T_{ab}$. Labeling sites $1, \dots, 4$ clockwise around the edges of the red plaquette in Eq.~\eqref{eqn:GHZ-embedding}, starting at the top, 
\begin{align*}
    T_{23} &= 
    I \otimes i[B_2, C_2] \otimes i[B_3, C_3] \otimes I \otimes U \\
    T_{31} &= 
    i[B_1, C_1] \otimes I \otimes i[B_3, C_3] \otimes I \otimes U' \\
    T_{12} &= 
    i[B_1, C_1] \otimes i[B_2, C_2] \otimes I \otimes I \otimes U'' 
\end{align*}
where the final factor in the tensor product corresponds to some product of $A_e$ operators and is thus a Hermitian operator with eigenvalues $\pm 1$. Furthermore, each commutator of the form $i[B, C]$ is a Hermitian operator with eigenvalues between $-2$ and $2$ (inclusive of boundaries) since $\norm{B_e} = \norm{C_e} = 1$. By submultiplicativity of the operator norm,
\begin{equation}
    \norm{T_{ab}\ket{\psi}} \leq \norm{i[B_a, C_a]} \cdot \norm{i [B_b, C_b]} \cdot \norm{U} \leq 4
\end{equation}
Next, applying the triangle inequality to the left-hand side of Eq.~\eqref{eqn:self-testing-constraint-expanded},
\begin{equation}
    \norm{T_{12} \ket{\psi}} + \norm{T_{23} \ket{\psi}} + \norm{T_{31} \ket{\psi}} \leq 12
\end{equation}
However, by Eq.~\eqref{eqn:self-testing-constraint-expanded} the inequality is actually an equality. Thus, the states $T_{ab}\ket{\psi}$ have norm 4 and are all parallel to one another. It follows that for each operator $T_{ab}$ we have an equation of the form
\begin{equation}
    I \otimes i[B_2, C_2] \otimes i[B_3, C_3] \otimes I \otimes U  \ket{\psi} = 4 \ket{\psi}.
    \label{eqn:self-testing-23-constraint}
\end{equation}
Define the two states $\ket{\phi} = i[B_2, C_2] \ket{\psi}$ and $\ket{\chi} = i[B_3, C_3] U \ket{\psi}$. Note that we have abused notation by dropping identities in the tensor product. We can now apply the Cauchy-Schwarz inequality
\begin{equation}
    \abs{\braket{\phi | \chi}} \leq \norm{\ket{\phi}} \cdot \norm{\ket{\chi}} \leq 4
\end{equation}
since $\norm{i[B_a, C_a]} \leq 2$ and $\norm{U} \leq 1$ such that $\norm{\ket{\phi}} \leq 2$ and $\norm{\ket{\chi}} \leq 2$. However, the inequality is again actually an equality by virtue of Eq.~\eqref{eqn:self-testing-23-constraint}. This implies that one of $\ket{\phi}$ and $\ket{\chi}$ is a scalar multiple of the other and $\norm{\ket{\phi}} = \norm{\ket{\chi}} = 2$ so $\ket{\phi} = \ket{\chi}$. Using this relation allows us to trade $i[B_2, C_2]$ for $i[B_3, C_3]U$ when acting on $\ket{\psi}$, which finally leads to the constraint 
\begin{equation}
    (i[B_2, C_2])^2 \ket{\psi} = 4 \ket{\psi}
\end{equation}
with analogous arguments giving rise to the same relation for $a = 1,3$.
Expanding out the commutator and using the self-inverse property of $B_a$ and $C_a$, this equation can be rearranged to give
\begin{equation}
    (B_aC_aB_aC_a + C_aB_aC_aB_a + 2) \ket{\psi} = 0,
\end{equation}
which, again using the self-inverse property, can be written in terms of the anticommutator
\begin{equation}
    \{ B_a, C_a \}^2 \ket{\psi} = 0 \implies \{ B_a, C_a \} \ket{\psi} = 0.
    \label{eqn:anticomm-constraint}
\end{equation}
A set of operators equivalent to Eq.~\eqref{eqn:GHZ-embedding} up to rotations and translations can be used to show that $\{ B_e, C_e \} \ket{\psi}=0$ and $\{ A_e , C_e\}\ket{\psi} = 0$ for all edges $e$. We have therefore shown that $B_e, C_e$ anticommute on $\ket{\psi}$, as do the operators $A_e , C_e$.

\emph{Relating products of operators}.---Next, we derive a relationship between the operators $A_e$, $B_e$, and $C_e$ when acting on $\ket{\psi}$. Consider for some fixed plaquette $p$ and neighboring vertex $v$ the operators
\begin{equation}
S_{12} \coloneqq
\begin{tikzpicture}[x=2.4ex,y=2.4ex,baseline={(0,0.3)}]
    \draw [line width=1.25, zpauli, line join=round] ({-1+1.5}, {-1+1.5}) rectangle ({1+1.5}, {1+1.5});
    \draw [line width=1.25, xpauli, line join=round] ({-1+0.5}, {-1+0.5}) rectangle ({1+0.5}, {1+0.5});
    \draw [line width=1, fill=white] (+1.5, 0.5) circle [radius=2pt];
    \draw [line width=1, fill=white] (0.5, 1.5) circle [radius=2pt];
\end{tikzpicture}
\qquad 
B_p \coloneqq
\begin{tikzpicture}[x=2.4ex,y=2.4ex,baseline={(0,0.3)}]
    \draw [line width=1.25, xpauli, line join=round] ({-1+0.5}, {-1+0.5}) rectangle ({1+0.5}, {1+0.5});
\end{tikzpicture}
\qquad 
A_v \coloneqq
\begin{tikzpicture}[x=2.4ex,y=2.4ex,baseline={(0,0.3)}]
    \draw [line width=1.25, zpauli, line join=round] ({-1+1.5}, {-1+1.5}) rectangle ({1+1.5}, {1+1.5});
\end{tikzpicture}
\label{eqn:XYZ-ops}
\end{equation}
which satisfy $S_{12}\ket{\psi} = -\ket{\psi}$ and $B_p \ket{\psi} = A_v\ket{\psi} = \ket{\psi}$ as a result of the victory conditions~\eqref{eqn:general-outcome-dist-nonoverlap}.
Thus, writing $S_{12}$ in terms of its constituent operators,
\begin{equation}
    \prod_{e \in \partial^\dagger v \setminus \partial p } \!\! A_e \prod_{e \in \partial p \setminus \partial^\dagger v} \!\! B_e  \prod_{e \in \partial^\dagger v \cap \partial p } \!\! C_e \ket{\psi} =
    - B_p A_v \ket{\psi}.
\end{equation}
By the self-inverse property, we can remove all $A_e$ and $B_e$ from the left-hand side, giving
\begin{equation}
    \prod_{e \in \partial^\dagger v \cap \partial p } C_e \ket{\psi} = \prod_{e \in \partial^\dagger v \cap \partial p } i B_e A_e  \ket{\psi}.
\end{equation}
An analogous sequence of steps can be followed for any two simply connected regions, one on the primary lattice and one on the dual lattice, whose boundaries intersect twice. Since, given an arbitrary pair of edges $(e, e')$, we can write down such regions whose boundaries intersect at $(e, e')$, we must have
\begin{equation}
    (iB_{e}C_{e}) \otimes (i B_{e'}C_{e'}) \ket{\psi} = A_e \otimes A_{e'} \ket{\psi},
    \label{eqn:XYZ}
\end{equation}
for all pairs $(e, e')$.

\emph{Constructing the factorization}.---Consider the constraint $\{ B_e, C_e \} \ket{\psi} = 0$. Performing a Schmidt decomposition of the state $\ket{\psi}$ between $e$ and its complement $\bar{e}$,  $\ket{\psi} = \sum_k \lambda_k \ket{\varphi_k}_e \otimes \ket{\chi_k}_{\bar{e}}$, we must have that $\Pi_e \{ B_e, C_e \} \Pi_e = 0$, where $\Pi_e$ is the local projector onto basis states with nonzero $\lambda_k$, i.e., $\Pi_e = \sum_{k : \lambda_k > 0} \ketbra{\varphi_k}{\varphi_k}$.
Next, introduce the projected operators $x_e$, $y_e$, and $z_e$ on every edge
\begin{equation}
    x_e \coloneq \Pi_e B_e \Pi_e, \quad y_e \coloneq \Pi_e C_e \Pi_e, \quad z_e \coloneq \frac{1}{2i}[x_e, y_e] , 
    \label{eqn:Pauli-op-embedding}
\end{equation}
Since $[\Pi_e, B_e]\ket{\psi}=[\Pi_e, C_e]\ket{\psi}=0$, we have $x_e^2 \ket{\psi} = y_e^2\ket{\psi} = \ket{\psi}$ and thus $x_e^2=y_e^2=\Pi_e$. Manipulations analogous to Eqs.~\eqref{eqn:self-testing-constraint}--\eqref{eqn:anticomm-constraint} can then be used to show that $\{x_e, y_e \} = 0$. Since the operators in Eq.~\eqref{eqn:Pauli-op-embedding} square to $I$ on the subspace restricted by $\{ \Pi_e \}$, pairwise anticommute and satisfy the $\mathfrak{su}(2)$ commutation relations, the restricted Hilbert space on edge $e$ decomposes into invariant subspaces labeled by representations of $\SU{2}$:
\begin{equation}
    \mathcal{H}_e = V_{j_1} \oplus V_{j_2} \oplus \dots
\end{equation}
with $V_j$ the spin-$j$ irrep. The anticommutation property ensures that only the spin-$1/2$ irrep can appear, from which it follows that $\mathcal{H}_e = \mathbb{C}^{d_e} \otimes V_{1/2}$ for some integer $d_e \geq 1$ specifying the multiplicity of the spin-$1/2$ irrep. We can therefore find a basis in the subspace restricted by $\{ \Pi_e \}$ in which
\begin{equation}
    x_e = I_{d_e} \otimes \sigma^x_e, \quad y_e = I_{d_e} \otimes \sigma^y_e, \quad z_e = I_{d_e} \otimes \sigma^z_e, 
    \label{eqn:spin-1/2-reps}
\end{equation}
With $\sigma_e^x$, $\sigma_e^y$, $\sigma_e^z$ the three Pauli matrices.
Introducing $D \coloneq \prod_e d_e$, we find that, in terms of these operators, the state must satisfy
\begin{equation}
    I_D \otimes \bigotimes_{e \in \partial p} \sigma^x_e W^\dagger \ket{\psi} = W^\dagger \ket{\psi}
    \label{eqn:self-test-X-constraint}
\end{equation}
for all $p$ for some edge-local unitary $W$ (i.e., factorizable over edges). To obtain the constraint involving $\sigma^z$, consider $A_v \ket{\psi} = \ket{\psi}$. Using Eq.~\eqref{eqn:XYZ}, this gives
\begin{equation}
    \prod_{e \in \partial^\dagger v} i B_e C_e \ket{\psi} = \ket{\psi} \implies 
    \prod_{e \in \partial^\dagger v} z_e \ket{\psi} = \ket{\psi}
\end{equation}
where we used $iy_e x_e = z_e$ from the definition in Eq.~\eqref{eqn:Pauli-op-embedding}. Using the decomposition~\eqref{eqn:spin-1/2-reps}, we are able to write down the final set of constraints that $\ket{\psi}$ must satisfy. For every vertex $v$,
\begin{equation}
    I_D \otimes \bigotimes_{e \in \partial^\dagger v} \sigma^z_e W^\dagger \ket{\psi} = W^\dagger \ket{\psi} 
    \label{eqn:self-test-Z-constraint}
\end{equation}
The solution to Eqs.~\eqref{eqn:self-test-X-constraint} and \eqref{eqn:self-test-Z-constraint} is given by $\ket{\psi} = W(\ket{\text{junk}}\otimes \ket{\text{TC}})$, where $\ket{\text{TC}}$ is a codeword of the toric code and $\ket{\text{junk}}$ is a generic state in the Hilbert space of dimension $D$ with nonvanishing Schmidt coefficients on every edge. Hence, players share a toric-code state up to an operator $W^{\dag}$ which is site local, i.e., expressible as a product of single-qubit operators. 

Note that we only used a small number of queries in order to self-test the toric code state. Namely, those corresponding to stabilizer generators [e.g., Eq.~\eqref{eqn:XYZ-ops}] and those corresponding to the diagrams show in Eq.~\eqref{eqn:GHZ-embedding}. We have therefore shown that the submeasurement game for the toric code naturally self-tests the state in a translation-invariant way. We leave the question of whether the self-testing result holds for states beyond the toric code to future work.


\subsection{Extensive contextuality quantification}
\label{sec: ExtensiveContextuality}

One of our original motivations was to use games to provide an operational quantification of contextuality for many-body quantum states. It would be desirable for any such quantification to be \emph{extensive}, i.e., scaling proportionally with the number of qubits involved. Submeasurement games allow us to construct properly extensive quantifications of contextuality, as we shall now illustrate, using the toric code game discussed above.

Consider an $L\times L$ patch of toric code. Carve it up into $\ell \times \ell$ blocks, with $\ell \gg 1$, and consider restricting inputs such that an independent toric code XOR game is played on each block.
The submeasurement game victory condition~\eqref{eqn:submeasurement-game-victory} then requires that \emph{every} XOR game is won simultaneously. It then immediately follows that the classical victory probability for the submeasurement game $\omega_L$ is written in terms of the win probability of the block $\omega_\ell$ as $\log\omega_L = (L/\ell)^2 \log\omega_\ell$, whereas with access to quantum resources, we still have a quantum victory probability $\omega^*=1$. We can then straightforwardly define the quantity
\begin{equation}
    \log(\omega^*/\omega) \propto (L/\ell)^2, 
    \label{eq: extensivecontextuality}
\end{equation}
which, using the results of Sec.~\ref{sec:quantifying-contextuality}, serves as an extensive quantification of contextuality, being related to the number of contexts whose measurement outcomes can be correctly described by a hidden-variable model per unit area. We expect that a similar result holds for the toric code submeasurement game with no input restrictions. Note that the translational invariance of the submeasurement game (Def.~\ref{def:submeasurement-game}) means that the possible inputs to each block are identical, which is crucial for the interpretation of Eq.~\eqref{eq: extensivecontextuality} as a density. This contrasts with previous similar constructions of games or Bell inequalities~\cite{Baccari2020,selftest1,selftest2}, which are not translationally invariant even if the underlying CSS state is.


\section{Discussion}
\label{sec: discussion}

In this manuscript we have discussed classes of multiplayer quantum games that can be won with certainty, given single-site Pauli measurement access to a resource state that is a CSS codeword. We have related success at these games to the evaluation of a code-dependent Boolean function, and have shown that the success fraction for optimal classical strategies can be bounded by the degree of nonlinearity of the Boolean function, which in turn may be quantified by its Walsh-Hadamard transform, and efficiently evaluated by identifying the $X$ symmetries of an auxiliary hypergraph state. We have argued that this provides a route to the quantification of the contextuality of (single-site Pauli measurements on) a CSS codeword. We have further illustrated these results with explicit computations for a number of paradigmatic CSS codewords, including GHZ states, 1D cluster states, and 2D toric code states. We believe we have introduced a powerful new perspective on both quantum games and quantum contextuality. 

We have also introduced the notion of \emph{submeasurement games}, where the players are required to satisfy multiple overlapping constraints. The additional structure inherent in submeasurement games affords considerable advantages. For one, it opens the door to self-testing \cite{selftest1, selftest2} via the design of games for which perfect strategies can be achieved if and only if the resource state equals a specific CSS codeword state (up to local isometries). We have illustrated this with a game that can \emph{only} be won with certainty using a resource state that is equivalent to a toric code state up to local isometries. A submeasurement game of this type can be used as a ``fingerprint'' for toric code states. It appears likely that similar ``fingerprinting'' games can be designed for arbitrary CSS codewords, and developing a general framework for identifying the appropriate game for any given codeword would be a worthwhile direction for future work. Separately, we have also shown how submeasurement games enable a properly \emph{extensive} quantification of contextuality, see Sec.~\ref{sec: ExtensiveContextuality}.

One could ask whether the ideas introduced herein may be extended beyond CSS codeword states. One path to doing so may involve incorporating error correction. That is, an imperfect CSS codeword may be viewed as a parent codeword plus errors, which are removed under the action of some decoder if the error rate is sufficiently low. The results introduced herein can then be applied to the error-corrected state. However, how to do this in practice is presently unclear. As a stepping stone, it may be fruitful to consider states with less structure including subsystem codes~\cite{Poulin2005stabilizer}, or by going beyond qubit stabilizer codes altogether.

While the existing literature on quantum games and self-testing has largely focused on games \emph{without} communication, one could also ask about games where a limited amount of communication between players is permitted (see Refs.~\cite{Murta2023selftesting,meyer2024selftesting}) based, e.g., on spatial locality. While it is still possible to obtain quantum advantage with limited classical communication~\cite{Daniel2022Exp} (see also Refs.~\cite{bravyi2020shallow,bravyi2020noisyShallow,Watts2019exponential,LeGall2019averagecase,Grier2020interactive,Daniel2022Exp,caha2023colossal} for the connection between quantum games and unconditional quantum advantage in shallow quantum circuits), it would be desirable to develop a framework analogous to the one presented herein, both to develop intuition and to increase the separation between optimal quantum and classical strategies. We leave this and other related questions, such as the inclusion of noise~\cite{fu2025gamesselftests}, to future work.


\textbf{Notes.}
The posting of this preprint to the arXiv was coordinated with a simultaneous submission by Zhao \emph{et al.}~\cite{ZhaoInPrep}, which was developed independently from the work presented here. They also evaluate the optimal success fraction for the cyclic cluster state and the toric code, and, where our results overlap, they agree.
The code used to prepare the figures in this manuscript is available in Ref.~\cite{github}.


\begin{acknowledgements}
We thank Jan Behrands and Shival Dasu for helpful feedback on the manuscript. DTS acknowledges support from the Simons Collaboration on Ultra-Quantum Matter, which is a grant from the Simons Foundation (651440).
\end{acknowledgements}   


\appendix


\section{Function computed by CSS game}
\label{app:computed-function}

\begin{proof}[Proof of Lemma~\ref{lem:oplus-to-reals}]
    The sum $\bigoplus_{i=1}^n x_i$ evaluates to 0 if $\vec{x}$ contains an even number of ones and evaluates to 1 otherwise. In the following, all arithmetic is expressed over $\Reals$. Hence,
    \begin{equation}
        \bigoplus_{i=1}^n x_i = \frac12 \left[ 1-(-1)^{\sum_{i=1}^n x_i} \right] . 
    \end{equation}
    Using $(-1)^{x_i} = 1-2x_i$ and expanding out the product 
    \begin{equation}
        \bigoplus_{i=1}^n x_i = \frac12 \left[ 1 - \sum_{I \subseteq \{ 1, \dots, n \} } (-2)^{\abs{I}}\prod_{i \in I} x_i \right] .
    \end{equation}
    The sum over all subsets $I \subseteq \{1, \dots, n \}$ can alternatively be written as a sum over all bitstrings $\vec{b}\in \F{2}^n$, where the nonzero entries of $\vec{b}$ correspond to the elements of $\{ 1, \dots, n \}$ that are included in $I$. This gives
    \begin{equation}
        \bigoplus_{i=1}^n x_i = \frac12 \left[ 1 - \sum_{\vec{b} \in \F{2}^n } (-2)^{\weight(\vec{b})}\prod_{i =1}^n x_i^{b_i} \right] .
    \end{equation}
    The contribution from $\vec{b}=\vec{0}$ then cancels with $2^{-1}$. Note that all $\vec{b}\in\F{2}^n\setminus \vec{0}$ then obey $\weight(\vec{b}) \geq 1$ such that all corresponding coefficients are elements of $\Z{}$. Thus, we arrive at the final result
    \begin{equation}
        \bigoplus_{i=1}^n x_i =
        \sum_{\vec{b} \in \F{2}^n \setminus \vec{0}} (-2)^{\weight(\vec{b})-1} \prod_{i=1}^n x_i^{b_i} 
        .
    \end{equation}
    If we consider the value of the right-hand side mod 2, the expression reduces to $\sum_{i=1}^n x_i$, as expected.
\end{proof}

To construct an explicit expression for the Boolean function $f_H(\vec{x}, \vec{z})$ entering the win condition~\eqref{eqn:CSS-game-victory}, we are interested the value of
\begin{equation}
    \frac12 \sum_{i\in\Lambda} a_i b_i = \frac12 \sum_{i\in\Lambda} [\vec{x}H_X]^{\oplus}_i [H_Z^\tp \vec{z}^\tp]^{\oplus}_i
    \label{eqn:function-mod-4}
\end{equation}
modulo 2, expressed in algebraic normal form (ANF). Recall that $\Lambda$ is the set of all lattice sites and we use the notation $[\vec{v}A]_i^\oplus$ to denote matrix-vector multiplication over $\F{2}$ between the matrix $A$ and the row vector $\vec{v}$, expressed as an element of $\Reals$. We thus wish to evaluate
\begin{equation}
    \sum_{i\in\Lambda} [\vec{x}H_X]^{\oplus}_i [H_Z^\tp \vec{z}^\tp]^{\oplus}_i \mod 4 .
\end{equation}
Applying Lemma~\ref{lem:oplus-to-reals} we have that
\begin{equation}
    \bigoplus_{i=1}^n x_i \equiv_{(\text{mod 4})}
        \sum_{\substack{ \vec{b} \in \F{2}^n \setminus \vec{0} \\ \weight(\vec{b}) \leq 2  } } (-2)^{\weight(\vec{b})-1} \prod_{i=1}^n x_i^{b_i}  ,
\end{equation}
since all higher-degree terms evaluate to zero modulo 4. More generally, evaluating Lemma~\ref{lem:oplus-to-reals} modulo $2^k$ will restrict the summation over bitstrings to those satisfying $\weight(\vec{b}) \leq k$. We therefore have
\begin{equation}
    [\vec{x}H_X]^{\oplus}_i \equiv_{(\text{mod 4})}  \sum_\alpha x_\alpha [H_X]_{\alpha i} + 2 \sum_{\alpha < \gamma} x_\alpha x_\gamma [H_X]_{\alpha i} [H_X]_{\gamma i} 
\end{equation}
where the second summation is over all unordered pairs,
with an analogous expression involving $H_Z$. Inserting this result into Eq.~\eqref{eqn:function-mod-4} and summing over all $i \in \Lambda$, we have
\begin{equation}
    f_H(\vec{x}, \vec{z}) = 
    \bigoplus_{I \subseteq G : \abs{I}=2,3} b_I  w^I
\end{equation}
where $\vec{w} = (\vec{x}, \vec{z})$, and $I \subseteq G$ correspond to subsets of stabilizer generators. Only terms of algebraic degree two or three contribute to $f_H$. The coefficients $b_I$ are \emph{only nonzero when $I$ contains at least one $X$-type and one $Z$-type stabilizer generator}, in which case they equal
\begin{align}
    b_{\abs{I}=2} &\equiv_{(\text{mod 2})} \weight \left[ H_{I_1} \wedge H_{I_2} \right] / 2 \label{eqn:deg(2)-coeff} \\
    b_{\abs{I}=3} &\equiv_{(\text{mod 2})} \weight \left[ H_{I_1} \wedge H_{I_2} \wedge H_{I_3} \right] \label{eqn:deg(3)-coeff}
\end{align}
where $H_m = \vec{e}_m (H_X, H_Z)$ with $\vec{e}_m$ the basis vector corresponding to stabilizer generator $m$, and $\wedge$ denotes ``and.'' Note that the degree-2 term can alternatively be written
\begin{equation}
    \bigoplus_{\alpha, \beta} b_{\alpha\beta} x_\alpha z_\beta
\end{equation}
where $b_{\alpha\beta} \equiv \frac12 \sum_{i \in \Lambda} [H_X]_{\alpha i} [H_Z^\tp]_{i \beta}$ (mod 2).


\section{Code equivalence}
\label{app:equivalent-codes}

\begin{lemma}[e.g., Ref.~\cite{Bravyi2021HadamardFree}, Lemma 3]
    \label{lem:C1-representative-elements} 
    Any element $C$ of the Clifford group $\mathbf{C}_1 / \U{1}$ can be written as $C = V P$, where $P \in \{ I, X, Y, Z \}$ is a Pauli matrix and $V \in \{ I, H, S, HS, SH, HSH \}$, where $H$, $S$ are the Hadamard and phase gates, respectively.
\end{lemma}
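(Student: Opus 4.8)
The plan is to exploit the coset structure of the single-qubit Clifford group. Since $\mathbf{C}_1/\U{1}$ has order $24$ and normalizes the single-qubit Pauli group, whose image $\mathcal{P} = \{I, X, Y, Z\}$ in $\mathbf{C}_1/\U{1}$ is a normal subgroup of order $4$, the quotient $(\mathbf{C}_1/\U{1})/\mathcal{P}$ has order $6$. I would realize this quotient concretely as the symmetric group $S_3$ acting by permutations on the three Pauli axes, and then identify $\{I, H, S, HS, SH, HSH\}$ as a transversal.

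Concretely, I would introduce the conjugation map $\phi : \mathbf{C}_1/\U{1} \to S_3$ that sends each Clifford $C$ to the permutation it induces on the unsigned axes $\{X, Y, Z\}$ via $P \mapsto C P C^\dagger$. This is well defined precisely because Cliffords map Paulis to Paulis up to sign, and it is a group homomorphism because conjugation composes functorially. The inclusion $\mathcal{P} \subseteq \ker\phi$ is immediate, since each Pauli fixes every axis up to sign (e.g.\ conjugation by $X$ sends $Y \mapsto -Y$, $Z \mapsto -Z$, $X \mapsto X$).

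The key step is then a direct computation of $\phi$ on the six candidate representatives. One finds $\phi(H) = (X\,Z)$ (from $HXH = Z$, $HZH = X$, $HYH = -Y$) and $\phi(S) = (X\,Y)$ (from $SXS^\dagger = Y$, $SYS^\dagger = -X$, $SZS^\dagger = Z$); composing these gives $\phi(HS)$ and $\phi(SH)$ equal to the two $3$-cycles and $\phi(HSH) = (Y\,Z)$, while $\phi(I) = e$. Because these six images are pairwise distinct and exhaust $S_3$, the map $\phi$ is surjective; combined with $\abs{\mathbf{C}_1/\U{1}} = 24$ and $\abs{S_3} = 6$ this forces $\abs{\ker\phi} = 4$, so the inclusion $\mathcal{P}\subseteq\ker\phi$ is an equality and $\{I, H, S, HS, SH, HSH\}$ is a complete transversal for the cosets of $\mathcal{P}$.

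To conclude, I would invoke normality of $\mathcal{P}$: every $C \in \mathbf{C}_1/\U{1}$ lies in a unique coset $V\mathcal{P} = \mathcal{P}V$, where $V$ is the transversal element with $\phi(V) = \phi(C)$; setting $P \coloneq V^{-1}C$ gives $P \in \ker\phi = \mathcal{P}$ and hence $C = VP$ with $V \in \{I, H, S, HS, SH, HSH\}$ and $P \in \{I, X, Y, Z\}$. The main obstacle is purely the sign bookkeeping in the middle step: one must track the induced signs carefully enough to be certain that $\phi(HS)$, $\phi(SH)$, and $\phi(HSH)$ are genuinely the three remaining elements of $S_3$ rather than accidental coincidences. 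This is a finite and entirely mechanical verification, but it is where all the content of the lemma resides.
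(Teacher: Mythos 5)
Your proposal is correct, but there is nothing in the paper to compare it against: the paper does not prove this lemma at all, deferring instead to the cited reference (Lemma~3 of Ref.~\cite{Bravyi2021HadamardFree}), so your argument fills in a proof the authors chose to outsource. The structure is sound: the axis-permutation map $\phi$ is a well-defined homomorphism, your sign computations are all correct ($HXH=Z$, $HZH=X$, $HYH=-Y$; $SXS^\dagger = Y$, $SYS^\dagger=-X$, $SZS^\dagger=Z$), and with the conjugation convention $\phi(HS)=\phi(H)\circ\phi(S)$ one indeed gets $\phi(HS)=(X\,Y\,Z)$, $\phi(SH)=(X\,Z\,Y)$, and $\phi(HSH)=(Y\,Z)$, so the six representatives hit all six elements of $S_3$ and the coset decomposition $C=VP$ follows (normality of $\mathcal{P}$ is not even needed here --- $V^{-1}C\in\ker\phi$ already gives it, and the kernel of a homomorphism is normal anyway). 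The one input you take on faith is $\abs{\mathbf{C}_1/\U{1}}=24$, which you use to force $\ker\phi=\mathcal{P}$; since that order is itself usually established by precisely this kind of counting, you could make the proof fully self-contained by computing the kernel directly: if $C$ fixes all three axes, then $CXC^\dagger=\pm X$ and $CZC^\dagger=\pm Z$, the four sign patterns are realized by $I,X,Y,Z$, and for the Pauli $P$ matching the signs of $C$ the operator $P^{-1}C$ commutes with both $X$ and $Z$ and is therefore a scalar, giving $\ker\phi=\mathcal{P}$ without any order count.
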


To prove Theorem~\ref{thm:clifford-equivalence}, we need to know how an arbitrary Pauli matrix $P'$ transforms under conjugation, $C P' C^\dagger$, for any element $C \in \mathbf{C}_1 / \U{1}$. Since $C = V P$ from Lemma~\ref{lem:C1-representative-elements}, we can consider conjugation by $P$ and $V$ separately. Given $\vec{u} = (u_0, u_1)$, writing $P(\vec{u}) \coloneq i^{u_0 u_1} X^{u_0} Z^{u_1}$, anticommutation of Pauli matrices gives rise to: 
\begin{equation}
    P({\vec{u}}) P({\vec{v}}) P({\vec{u}}) = (-1)^{[\vec{u}, \vec{v}]} P({\vec{v}}) 
    \label{eqn:pauli-commutation-symplectic}
\end{equation}
with $[\vec{u}, \vec{v}] = u_0v_1 \oplus v_0u_1$ the symplectic inner product. The adjoint action of $V$, on the other hand, leads to $P({\vec{s}}) \mapsto P({\vec{s}C_V})$ (up to a sign) with $C_V \in \Sp(2, \F{2})$, the group of 2-dimensional symplectic matrices over $\F{2}$:
\begin{equation}
    V P({\vec{s}}) V^\dagger =(-1)^{\lambda_V(\vec{s})} P({\vec{s} C_V}) ,
    \label{eqn:V-adjoint-action}
\end{equation}
where the sign is given by $\lambda_S(\vec{s}) = \lambda_{H}(\vec{s}) = s_0 s_1$ and
\begin{equation}
    C_H =
    \begin{pmatrix}
        0 & 1 \\
        1 & 0
    \end{pmatrix}
    ,
    \qquad
    C_S =
    \begin{pmatrix}
        1 & 1 \\
        0 & 1
    \end{pmatrix}
    .
    \label{eqn:Sp2-generators}
\end{equation}
These expressions follow immediately from the adjoint action of $H$ and $S$ on Pauli matrices. By direct calculation, $H(X, Y, Z)H = (Z, -Y, X)$ and $S(X, Y, Z)S^{\dagger} = (Y, -X, Z)$. Note the minus signs for $Y$, which are responsible for the factor $(-1)^{\lambda_V}$ in Eq.~\eqref{eqn:V-adjoint-action}. The matrices $C_H$ and $C_S$ generate $\Sp(2, \F{2})$.

\begin{proof}[Proof of Theorem~\ref{thm:clifford-equivalence}]
    We proceed by explicitly computing the function realized by measuring single-site Pauli operators in the state $U\ket{\psi}$ for any codeword $\ket{\psi}$. Using Lemma~\ref{lem:submeasurement-dist}, measurement of single-site Pauli operators $P_i$ on every lattice site $i \in \Lambda$ results in a deterministic outcome if
    \begin{equation}
        \prod_{i \in \Lambda} P_i = \pm S,
        \label{eqn:deterministic-sign}
    \end{equation}
    for some element $S \in \mathcal{S}$ of the code's stabilizer group $\mathcal{S}$. Since we are interested in the function computed by \emph{any} codeword $\ket{\psi}$, we do not consider logical operators on the right-hand side of Eq.~\eqref{eqn:deterministic-sign}. Hence, given a basis, we decompose a generic element of the stabilizer group into its constituent Pauli operators, up to a sign:
    \begin{equation}
        \prod_{\alpha} {B'_\alpha}^{x_\alpha} \prod_{\beta} {A'_\beta}^{z_\beta} =  (-1)^{g(\vec{x}, \vec{z})}  \prod_{i \in \Lambda} P_i[a'_i(\vec{x}), b'_i(\vec{z})]
        \label{eqn:function-computation-sufficient}
    \end{equation}
    where $B'_\alpha = U B_\alpha U^\dagger$ with $B_\alpha$ an $X$-like stabilizer generator, and likewise for each dressed $Z$-like generator $A_\beta$. The Boolean function computed is $g(\vec{x}, \vec{z})$.
    Our goal is to determine the function $g$ and the bits $a_i'$, $b_i'$.
    By definition of the parity-check matrices, the stabilizer generators can be decomposed as
    \begin{equation}
        B_\alpha = \prod_{i\in \Lambda} X_i^{[H_X]_{\alpha i}},
        \quad
        A_\beta = \prod_{i\in \Lambda} Z_i^{[H_Z]_{\beta i}}
    \end{equation}
    Therefore, introducing the bitstrings $\vec{a}(\vec{x}) = \vec{x}H_X$ and $\vec{b}(\vec{z}) = \vec{z}H_Z$, as in Eq.~\eqref{eqn:input-bits}, the left-hand side of Eq.~\eqref{eqn:function-computation-sufficient} decomposes over lattice sites as
    \begin{equation}
        \prod_{\alpha} {B'_\alpha}^{x_\alpha} \prod_{\beta} {A'_\beta}^{z_\beta}
        =
        \prod_{i \in \Lambda} C_i X_i^{a_i} Z_i^{b_i} C_i^\dagger .
    \end{equation}
    Introducing $\vec{s}_i = (a_i, b_i) \in \F{2}^2$, we can now use Eqs.\ \eqref{eqn:pauli-commutation-symplectic}--\eqref{eqn:Sp2-generators} to determine the sign accumulated under conjugation by $U$. Write each $C_i = V_i P(\vec{u}_i)$. Then
    \begin{equation}
        \prod_{i \in \Lambda} C_i X_i^{a_i} Z_i^{b_i} C_i^\dagger = 
        (-1)^{g(\vec{x}, \vec{z})} \prod_{i \in \Lambda} P_i (a_i', b_i')
    \end{equation}
    where we defined $(a_i', b_i') = \vec{s}' = \vec{s} C_{V_i}$ and the function
    \begin{equation}
        g(\vec{x}, \vec{z}) = f_H(\vec{x}, \vec{z}) \oplus \bigoplus_{i \in \Lambda} \left( [\vec{u}_i, \vec{s}_i] \oplus \lambda_{V_i}(\vec{s}_i) \right)
        \label{eqn:equivalent-function}
    \end{equation}
    where $f_H$ is the target function~\eqref{eqn:CSS-game-victory}.
    Thus, measurement of $P_i(a'_i, b'_i)$ $\forall i \in \Lambda$ computes $g(\vec{x}, \vec{z})$.
    The $U$-dependent term on the right-hand side of Eq.~\eqref{eqn:equivalent-function} corresponds to a generic deterministic classical strategy. The term $[\vec{u}_i, \vec{s}_i]$ is the most general linear function of the bits $a_i$, $b_i$, while the term $\lambda_V$ can give rise to the product $a_i b_i$.
\end{proof}


\section{GHZ game Walsh spectrum}
\label{app:walsh-spectrum-ghz}


\subsection{Open boundaries}

\begin{proposition}
    \label{prop:ghz-bent}
    The GHZ function~\eqref{eqn:ghz-function} acting on $n$ variables (with $n$ even) is bent. For general $n \geq 1$, the GHZ function has nonlinearity $N_f = 2^{n-1} - 2^{ \lceil n/2 \rceil -1}$.
\end{proposition}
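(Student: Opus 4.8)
The plan is to invoke Theorem~\ref{thm:walsh-from-symmetries}, which converts the computation of a quadratic function's Walsh spectrum into a count of the $X$ symmetries of the associated graph state. First I would strip off the linear part. Fixing $x=1$ turns Eq.~\eqref{eqn:ghz-function} into Eq.~\eqref{eqn:GHZ-function-fixed}, whose homogeneous degree-2 part is $\bigoplus_{i=1}^{n-1} z_i z_{i+1}$; as noted below Def.~\ref{def:Walsh-transform}, adding the linear piece $\bigoplus_i z_i$ only relabels the Walsh coefficients ($W_f(\vec y)\mapsto W_f(\vec y\oplus\vec 1)$) and hence leaves both their magnitudes and the nonlinearity unchanged. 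The adjacency matrix $A$ of the surviving quadratic form is that of the path graph $P_n$, so $\ket{C_G}$ is the open-boundary 1D cluster state, and the whole problem reduces to determining $n_X = \dim_{\F{2}}\mathcal{S}_X$.

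Next I would compute $n_X$ directly. A product $\prod_i S_i^{a_i}$ of the stabilizers~\eqref{eqn:graph-state-stabilizers} carries $X$-content $X^{\vec a}$ and, on site $k$, a $Z$ with exponent $(A\vec a)_k$; it is a pure-$X$ operator precisely when $\vec a\in\ker A$, so $n_X = \dim_{\F{2}}\ker A$. The kernel condition $(A\vec a)_k = a_{k-1}\oplus a_{k+1}=0$ (with $a_0=a_{n+1}=0$) forces every even-indexed component to vanish and every odd-indexed component to share a common value $c\in\F{2}$. The endpoint equation at $k=n$ reads $a_{n-1}=0$, whose meaning splits on the parity of $n$: for $n$ even the index $n-1$ is odd, pinning $c=0$ so that $\ker A=\{\vec 0\}$, whereas for $n$ odd the index $n-1$ is even (already zero), leaving $c$ free. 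Hence $n_X=0$ for even $n$ and $n_X=1$ for odd $n$, the latter recovering the symmetry $\prod_{i\ \mathrm{odd}} X_i$ of Example~\ref{ex:ghz-spectrum}.

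Finally I would feed $n_X$ into Theorem~\ref{thm:walsh-from-symmetries}: all nonzero Walsh coefficients have magnitude $\abs{W_f}=2^{(n+n_X)/2}$, equal to $2^{n/2}$ for $n$ even and $2^{(n+1)/2}$ for $n$ odd, i.e.\ $\abs{W_f}=2^{\lceil n/2\rceil}$ in both parities. Substituting into Eq.~\eqref{eqn:nonlinearity-from-walsh} gives $N_f = 2^{n-1}-\tfrac12\cdot 2^{\lceil n/2\rceil} = 2^{n-1}-2^{\lceil n/2\rceil-1}$, the claimed nonlinearity for general $n$. For $n$ even, $n_X=0$ yields $\abs{W_f}^2=2^{n}$ at every frequency, so the spectrum is flat and $f$ satisfies both equivalent criteria of Def.~\ref{def:bent}, establishing bentness.

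The main obstacle is the boundary bookkeeping in the kernel computation: the bulk recurrence $a_{k-1}\oplus a_{k+1}=0$ is parity-blind, and it is solely the endpoint constraint $a_{n-1}=0$ that discriminates even from odd $n$ and thereby manufactures the ceiling function. I would take care to track both endpoints consistently—unlike the periodic case, where both global sublattice symmetries persist—since everything else follows mechanically once $n_X$ is fixed.
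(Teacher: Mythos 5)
Your proof is correct, but it follows a genuinely different route from the paper's own proof of this proposition. The paper's Appendix~\ref{app:walsh-spectrum-ghz} evaluates the Walsh transform \emph{directly} as a 1D Ising-style partition function: it writes the sum~\eqref{eqn:walsh-partition-function} as a product of $2\times 2$ transfer matrices that collapse to $H$ and $Z$ factors, reduces everything to powers of the Hadamard matrix via Eq.~\eqref{eqn:Hadamard-powers}, and reads off the full signed spectrum~\eqref{eqn:GHZ-walsh-spectrum}. You instead invoke Theorem~\ref{thm:walsh-from-symmetries} and reduce the problem to computing $n_X=\dim_{\F{2}}\ker A$ for the path graph, handling the parity-sensitive boundary equation $a_{n-1}=0$ explicitly; this is essentially the main-text derivation of Sec.~\ref{sec:examples} and Example~\ref{ex:ghz-spectrum}, but you improve on it by replacing the paper's informal sublattice-parity argument (``$X_1$ and $X_n$ belong to opposite sublattices'') with a clean kernel computation, and your bookkeeping (even-indexed components vanish, odd-indexed share a value $c$, endpoint pins $c$ iff $n$ is even) is exactly right, as is the reduction of the linear part to a relabeling $W_f(\vec y)\mapsto W_f(\vec y\oplus\vec 1)$. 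The trade-off: your symmetry route is shorter and suffices for bentness and $N_f$ since only the count and common magnitude of nonzero coefficients matter, but it yields less information than the transfer-matrix computation, which identifies \emph{which} coefficients vanish together with their signs, extends immediately to periodic boundary conditions, and underpins the positivity statements needed for Lemma~\ref{lem:z-removal}. One contextual caveat, not a gap: the appendix proof exists precisely to corroborate Theorem~\ref{thm:walsh-from-symmetries} by an independent method, so your proof, while logically sound (the theorem is proved in Sec.~\ref{sec:walsh-from-graphs} without reference to this proposition, so there is no circularity), would not serve that independent-check role.
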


\begin{proof}
The Walsh transform (Def.~\ref{def:Walsh-transform}) of the GHZ function on $n$ variables with generic open boundary conditions $z_0 = l \in \F{2}$ and $z_{n+1} = r \in \F{2}$ is given by
\begin{equation}
    [\mathcal{Z}_n(\vec{x})]_{lr} \coloneq \sum_{z_i \in \mathbb{F}_2} (-1)^{  \sum_{i=0}^{n}  z_i z_{i+1} + \sum_{i=1}^{n} (1+x_i) z_i  }
    ,
    \label{eqn:walsh-partition-function}
\end{equation}
where only $\{ z_1, \dots, z_n \}$ are summed over, i.e., excluding the fixed boundary conditions. The left-hand side of Eq.~\eqref{eqn:walsh-partition-function} can be regarded as the matrix elements of a $2\times 2$ matrix. The full matrix $\mathcal{Z}_n(\vec{x})$ can be evaluated by writing it as a product of $2 \times 2$ transfer matrices, which evaluate to Hadamard $H$ and Pauli $Z$ matrices:
\begin{subequations}
\begin{align}
    \mathcal{Z}_n(\vec{x}) &= 2^{N/2} [H Z^{1+x_1} H Z^{1+x_2} H \cdots H Z^{1+x_{n}} H] \\
    &= \pm X^{\sum_{k} (1+x_{2k+1}) } Z^{\sum_{k} (1+x_{2k}) } (2^{N/2} H^N) \label{eqn:partition-function-final}
\end{align}%
\end{subequations}
where the overall sign arises from commuting all $X$'s to the left of $Z$'s.
Note that the $2 \times 2$ Hadamard matrix is normalized such that $H^2 = I$ and that $N = n+1$. Taking the $(l,r)=(0,0)$ matrix element of this expression allows us to obtain the desired Walsh spectrum. Depending on the parity of $N$, the matrix $2^{N/2}H^N$ evaluates to
\begin{equation}
    2^{N/2}H^N = 2^{\lfloor N/2 \rfloor} \times 
    \begin{cases}
        I &\text{$N$ even}, \\
        \sqrt{2}H &\text{$N$ odd}.
    \end{cases}
    \label{eqn:Hadamard-powers}
\end{equation}
Since the $X$ and $Z$ Pauli matrices in Eq.~\eqref{eqn:partition-function-final} only permute matrix elements or change their sign, respectively, we deduce that the Walsh spectrum takes the values 
\begin{equation}
    W_\mathsf{GHZ}(\vec{x}) \in 
    \begin{cases} 
        \left\{ 0, \pm 2^{ N/2} \right\} &\text{$N$ even}, \\
        \left\{ \pm 2^{\lfloor N/2 \rfloor}\right\}
        & \text{$N$ odd}.
    \end{cases}
    \label{eqn:GHZ-walsh-spectrum}
\end{equation}
Hence, the Walsh spectrum is flat for $N$ odd (i.e., $n$ even), and it follows that the function is bent. The magnitude of the maximal Walsh coefficient is equal to $2^{\lfloor N/2 \rfloor} = 2^{\lceil n/2 \rceil}$. From Eq.~\eqref{eqn:nonlinearity-from-walsh} we deduce that $N_f = 2^{n-1} - 2^{\lceil n/2 \rceil - 1}$.
\end{proof}

These results corroborate the understanding of the Walsh spectrum from the perspective of cluster-state symmetries presented in Sec.~\ref{sec:walsh-from-graphs}. When $n$ is even, the associated 1D cluster state has no $X$ symmetries and thus by Theorem~\ref{thm:walsh-from-symmetries} the largest Walsh coefficient has magnitude $\abs{W} = 2^{n/2}$. When $n$ is odd, there is one nontrivial $X$ symmetry and from Theorem~\ref{thm:walsh-from-symmetries} we have $\abs{W} = 2^{(n+1)/2}$. Combining these expressions, the largest Walsh coefficient is given by $\abs{W} = 2^{\lceil n/2 \rceil}$, in agreement with Eq.~\eqref{eqn:GHZ-walsh-spectrum}.


\subsection{Periodic boundaries}

We also compute the Walsh spectrum of the GHZ function with periodic boundary conditions. The associated Walsh coefficient is given by
\begin{equation}
    \sum_{z_i \in \mathbb{F}_2} (-1)^{  \sum_{i=1}^{n}  z_i z_{i+1} + \sum_{i=1}^{n} (1+x_i) z_i  } = \Tr[\mathcal{Z}'_{n}(\vec{x})].
\end{equation}
where we introduced the product of transfer matrices
\begin{subequations}
\begin{align}
    \mathcal{Z}'_{n}(\vec{x}) &= 2^{n/2} H Z^{1+x_1} H Z^{1+x_2} \cdots H Z^{1+x_n} \\
    &= \pm X^{\sum_{k} (1+x_{2k+1}) } Z^{\sum_{k} (1+x_{2k}) } (2^{n/2} H^n). \label{eqn:partition-function-final-pbc}
\end{align}%
\end{subequations}
We now use the fact that $\Tr[X^a Z^b] = 2\delta_a \delta_b$, and that $\Tr[X^a Z^b \sqrt{2}H] = 2\delta_{a \oplus b}$ . The corresponding Walsh spectrum therefore always contains zeros; its entries are
\begin{equation}
    W_{\mathsf{GHZ}'}(\vec{x}) \in \left\{ 0, \pm 2^{\lfloor n/2 \rfloor + 1}  \right\}.
\end{equation}


\subsection{Matrix element bounds}

\zremove*

\begin{proof}
    From Theorem~\ref{thm:walsh-from-symmetries}, we know that all nonzero Walsh coefficients have the same magnitude. We can therefore prove the Lemma by showing that $\braket{+_\ell | C_\ell}$ is positive for all $\ell \geq 2$ and that $\braket{+_\ell  | C'_\ell}$ is positive for all even $\ell \geq 4$. 

    First, consider Eq.~\eqref{eqn:z-removal-obc}. From Eq.~\eqref{eqn:partition-function-final} we have that $\mathcal{Z}_\ell(\vec{0}) = 2^{(\ell+1)/2}H^{\ell + 1}$, which is evaluated according to Eq.~\eqref{eqn:Hadamard-powers}. Since $\braket{0 | H | 0} > 0$ and $\braket{0 | I | 0} > 0$, we have $\braket{+_\ell  | C_\ell} > 0$, as required.

    For Eq.~\eqref{eqn:z-removal-pbc}, the overlap $\braket{+\ell  | C'_\ell}$ is calculated using Eq.~\eqref{eqn:partition-function-final-pbc}, which gives $\mathcal{Z}'_\ell(\vec{0}) = 2^{\ell/2}H^\ell$. Now, for $\ell$ even, we have $H^\ell = I$ and $\Tr I > 0$. It follows that $\braket{+_\ell  | C'_\ell} > 0$, as required.
\end{proof}


\section{The algorithm \texttt{StandardForm}}

The algorithm that brings a quadratic Boolean function into standard form is presented as Algorithm~\ref{alg:standard-form}.
As an input, this algorithm takes the matrix defined by the polar bilinear form associated to the function. As an output, it gives the reduced polar bilinear form and the the linear coordinate transformation that produces it. Note that the rank of the polar bilinear form, which can immediately be read off from its reduced form, is always even, consistent with Def.~\ref{def:standard-form}. The algorithm makes use of the matrices (with $i < j$ for concreteness)
\begin{align}
    X_{[ij]} \vec{x}^\tp &= (x_1, \dots, x_i \oplus x_j, \dots, x_j, \dots, x_n)^\tp \\
    S_{[ij]} \vec{x}^\tp &= (x_1, \dots, x_j, \dots, x_i, \dots, x_n)^\tp
\end{align}
which add row $j$ to row $i$, and interchange rows $i$ and $j$, respectively.

\begin{algorithm}[t]
\caption{Adjacency matrix to standard form}
\label{alg:standard-form}
\KwIn{$\mathbf{B}$, an adjacency matrix}
\KwOut{$(\mathbf{B}', \mathbf{A})$ where \\
\quad $\mathbf{B}'$, a reduced matrix congruent to $\mathbf{B}$, and\\
\quad $\mathbf{A} \in \mathrm{GL}_n(\mathbb{F}_2)$ satisfying $\mathbf{B}' = \mathbf{A} \mathbf{B} \mathbf{A}^\tp$}

$\mathbf{B}' \leftarrow \mathbf{B}; \quad \mathbf{A} \leftarrow \mathbf{I}; \quad j \leftarrow 0; \quad e \leftarrow n-1$\;
\While{$j < e$}{
    \tcp{ Move empty rows/columns to the end }
    \If{$\#\{ i \mid B'_{ij} = 1 \} = 0$}{
        $\mathbf{B}' \leftarrow S_{[ej]} \mathbf{B}' S_{[j e]}$; 
        \quad $\mathbf{A} \leftarrow \mathbf{A} \, S_{[je]}$\;
        $e \leftarrow e-1$\;
        \textbf{continue}\;
    }
    $k \leftarrow \min\{ i \mid B'_{ij} = 1 \}$; \quad $p \leftarrow j+1$\;
    $\mathbf{B}' \leftarrow S_{[kp]} \mathbf{B}' S_{[p k]}$;
    \quad $\mathbf{A} \leftarrow \mathbf{A} \, S_{[p k]}$\;

    \tcp{ Remove remaining 1s in row/column $j$ }
    \For{$r \leftarrow p+1$ \KwTo $e$}{
        \If{$B'_{rj} = 1$}{
            $\mathbf{B}' \leftarrow X_{[rp]} \mathbf{B}' X_{[pr]}$;
            \quad $\mathbf{A} \leftarrow \mathbf{A} \, X_{[pr]}$\;
        }
    }

    \tcp{ Remove remaining 1s in row/column $p$ }
    \For{$c \leftarrow j+1$ \KwTo $e$}{
        \If{$B'_{pc} = 1$}{
            $\mathbf{B}' \leftarrow X_{[cj]} \mathbf{B}' X_{[j c]}$;
            \quad $\mathbf{A} \leftarrow \mathbf{A} \, X_{[jc]}$\;
        }
    }
    $j \leftarrow j+2$\;
}
\Return{$(\mathbf{B}', \mathbf{A}$})\;
\end{algorithm}


\section{Toric code integral}
\label{app:toric-code-integral}

In this Appendix we analytically evaluate the integral that bounds the performance of classical strategies for the toric code CSS game. We proceed using integration methods developed in, e.g., Ref~\cite{blagouchine2014rediscovery}. The integral we wish to evaluate is
\begin{subequations}
\begin{align}
    \ln W &= \frac12 \int_{-\infty}^{\infty} \frac{dx}{x} \frac{\sinh^2 \frac{\pi }{4}x }{ \sinh \frac{\pi  x}{2} (2 \cosh \frac{\pi }{4}x -1)} \\
    &= \frac14 \int_{-\infty}^{\infty} \frac{dx}{x} \frac{\tanh \frac{\pi }{4}x }{ 2 \cosh \frac{\pi }{4}x -1} \\
    &= \frac14 \int_{-\infty}^{\infty} \frac{dx}{x} \frac{\tanh 2\pi x }{ 2 \cosh 2 \pi x -1}
\end{align}%
\end{subequations}
In the second line we used $\sinh2x = 2\sinh x \cosh x$, and in the third line we redefined $ x \to 8 x$. To facilitate the computation of $\ln W$, we consider a one-parameter family of integrals. Consider the two generalized integrals $I(a)$, $J(a)$ parametrized by $a \in \Reals$:
\begin{subequations}
\begin{align}
    I(a) &= \int_{-\infty}^{\infty} dx \,\frac{x}{a^2+x^2} \frac{\tanh 2\pi x }{ 2 \cosh 2 \pi x -1} \\
    J(a) &= \int_{-\infty}^{\infty} dx \, \ln(a - ix) \frac{\tanh 2\pi x }{ 2 \cosh 2 \pi x -1}
\end{align}%
\end{subequations}
Note that $I(a) = \Im \partial_a J(a)$, while $I(0) = 4 \ln W$ encodes the integral of interest.
The integral $J(a)$ can be evaluated via contour integration. Consider the closed contour $\mathcal{C}$ defined by the rectangular path $-R \to R \to R+i \to -R+i \to -R$. Using the periodicity $\cosh 2\pi(x+i) = \cosh 2\pi x$, $\tanh 2\pi(x+i) = \tanh 2\pi x$, and 
\begin{equation*}
    \ln(a - ix) = \ln\Gamma(a-i(x+i)) - \ln\Gamma(a-ix)
\end{equation*}
we deduce that, in the limit of large $R$, $J(a)$ can be written
\begin{equation}
    J(a) = \oint_\mathcal{C} dz \, \ln\Gamma(a - iz) \frac{\tanh 2\pi z }{ 1-2 \cosh 2 \pi z }.
\end{equation}
The contribution from the short sides of the rectangular contour vanish since the integrand vanishes there as $R\to \infty$.
The integrand possesses simple poles along the imaginary axis at $z \in P = \{ i/6, i/4, 3i/4, 5i/6 \}$ encircled by $\mathcal{C}$. The function $\ln\Gamma(a-iz) \eqcolon \Lambda(a-iz)$ is nonsingular.
By the residue theorem, $J(a)$ evaluates to
\begin{equation*}
    J(a) = i \left[ \Lambda(a+\tfrac14)+\Lambda(a+\tfrac34) - \Lambda(a+\tfrac16) - \Lambda(a+\tfrac56) \right].
\end{equation*}
Using the definition of the digamma function $\psi(z) = \partial_z \ln\Gamma(z)$, we arrive at the final result 
\begin{equation}
    I(a) = \psi(a+\tfrac14)+\psi(a+\tfrac34) - \psi(a+\tfrac16) - \psi(a+\tfrac56),
\end{equation}
which satisfies $I(0)=\ln(27/4)$. Thus, we have the exact result $\ln W = \ln(27/4)/4$, as used in the main text.


\section{Proof of Lemma~\ref{lem:outcome-distribution-general}}
\label{app:outcome-dist-proof}

\measDistribution*

\begin{proof}
    Since $O_i^2 = I$, the probability of obtaining a specific outcome bitstring $\vec{s}$ in the state $\ket{\psi}$ is
    \begin{equation}
        \Pr(\vec{s}) = 
        \braket{ \psi | \prod_{i \in \Lambda} \frac{1}{2}\left[ I + (-1)^{s_i} O_i \right] | \psi } \eqcolon \braket{ \psi | \prod_{i \in \Lambda} \pi_{s_i} | \psi },
    \end{equation}
    where we introduced the projector $\pi_{s_i}$ onto the subspace associated to eigenvalue $(-1)^{s_i}$. The required probability is then found by summing over the probabilities of obtaining measurement outcome bitstrings consistent with the constraints:
    \begin{equation}
        \Pr\left(\bigwedge_{c \in C} \bigoplus_{i \in c} s_i = f_c\right) = \sum_{\vec{s}} \Pr(\vec{s}) \prod_{c \in C}\delta\left( \bigoplus_{i \in c} s_i = f_c \right) 
        \label{eqn:probability-sum-masked}
    \end{equation}
    To evaluate this expression, we write the Kronecker delta associated to constraint $c \in C$ in terms of a sum over an auxiliary variable $r_c \in \F{2}$
    \begin{equation}
        \delta\left( \bigoplus_{i \in c} s_i = f_c \right) = \frac{1}{2} \sum_{r_c \in \F{2}} (-1)^{r_c(f_c \oplus \bigoplus_{i \in c} s_i)}.
    \end{equation}
    Substitute this expression into Eq.~\eqref{eqn:probability-sum-masked}, expand $\Pr(\vec{s})$ in terms of the projectors $\pi_{s_i}$, and redistribute the summation over the variables $s_i$ to give
    \begin{equation}
        \Pr = \frac{1}{2^{\abs{C}}} \sum_{ \{ r_c \} } (-1)^{\bigoplus_{c \in C} r_c f_c} \braket{\psi | \prod_{i \in \Lambda} \sum_{s_i} (-1)^{s_i \bigoplus_{c \ni i} r_c}  \pi_{s_i} | \psi}
        \label{eqn:probability-sum-redistribute}
    \end{equation}
    To perform the summation over the variables $s_i$, make use of the identity
    \begin{equation}
        \frac{1}{2} \sum_{s_i  \in \F{2}} (-1)^{\lambda s_i} [ I + (-1)^{s_i} O_i ] = O_i^\lambda
        \label{eqn:Z2-orthogonality}
    \end{equation}
    for $\lambda \in \F{2}$. If a site does not participate in any constraints, the sum evaluates to $O_i^0 = I$ and is therefore trivial. Substituting the result~\eqref{eqn:Z2-orthogonality} into Eq.~\eqref{eqn:probability-sum-redistribute}, we arrive at the general expression~\eqref{eqn:general-outcome-dist}. To obtain the simplified expression~\eqref{eqn:general-outcome-dist-nonoverlap}, we use the simplification that, when the sets $I_k$ do not overlap, each site $i$ belongs to at most one constraint. Thus, 
    \begin{equation}
        \braket{ \psi | \prod_{i \in \Lambda} O_i^{\bigoplus_{c \ni i} r_c} | \psi } = 
        \braket{ \psi | \prod_{c \in C} \prod_{i \in c} O_i^{r_c} | \psi } 
    \end{equation}
    Substituting into Eq.~\eqref{eqn:Z2-orthogonality}, we can then perform the summation over the variables $r_c$ giving
    \begin{align}
        \Pr &= \frac{1}{2^{\abs{C}}} 
        \braket{ \psi | \prod_{c \in C} \sum_{r_c} (-1)^{r_c f_c}  \prod_{i \in c} O_i^{r_c} | \psi } \\
        &= \braket{ \psi | \prod_{c \in C} \frac12 \left( I + (-1)^{f_c} \prod_{i \in c} O_i \right) | \psi }
    \end{align}
    which is equivalent to Eq.~\eqref{eqn:general-outcome-dist-nonoverlap} up to a relabeling of the elements of $C$.
\end{proof}

\bibliographystyle{quantumsans}
\bibliography{biblio}

\end{document}